\numberwithin{equation}{section}
\definecolor{labelkey}{gray}{.65}
\newcommand*\bigcdot{\mathpalette\bigcdot@{.5}}
\newcommand*\bigcdot@[2]{\mathbin{\vcenter{\hbox{\scalebox{#2}{$\m@th#1\bullet$}}}}}
\newcommand{\Thanks}{\vspace*{.5em} \noindent \thanks}
\newcommand{\R}{\mathbb R}
\newcommand{\N}{\mathbb N}
\newcommand{\F}{\mathscr F}
\renewcommand{\leq}{\leqslant}
\DeclareMathOperator{\tr}{tr}
\newcommand{\dif}{\mathrm{d}}
\DeclareMathOperator{\supp}{supp}
\renewcommand{\dif}{\operatorname{d}\!}
\newcommand{\locc}{\operatorname{loc}}
\newcommand{\hold}{\operatorname{C}}
\renewcommand{\L}{\mathbb{L}}
\newcommand{\C}{\mathbb{C}}
\newcommand{\Lin}{\operatorname{L}}
\newcommand{\meas}{\mathfrak{M}}
\newcommand{\wasser}{W_{p}}
\renewcommand{\H}{\mathscr{H}}
\newcommand{\la}{\langle}
\newcommand{\ra}{\rangle}
\renewcommand{\L}{{\mathcal{L}}}
\newcommand{\Sact}{{\mathcal{S}}}
\newcommand{\K}{{\mathscr{K}}}
\newcommand{\m}{\mathfrak{m}}
\newcommand{\beq}{\begin{equation}}
\newcommand{\eeq}{\end{equation}}
\newcommand{\bitem}{\begin{itemize}[leftmargin=2.5em]}
\newcommand{\eitem}{\end{itemize}}
\numberwithin{equation}{section}
\newtheorem{theorem}{Theorem}[section]
\newtheorem{lemma}[theorem]{Lemma}
\newtheorem{proposition}[theorem]{Proposition}
\newtheorem{definition}[theorem]{Definition}
\newcommand{\QEDrem}{\ \hfill $\Diamond$}
\newtheorem{remark}[theorem]{Remark}
\numberwithin{equation}{section}
\begin{document}


\title[Action-Driven Flows for Causal Variational Principles]{Action-Driven Flows for \\ Causal Variational Principles}
\author[F.\ Finster]{Felix Finster}
\address[F.\ Finster]{Fakult\"at f\"ur Mathematik \\ Universit\"at Regensburg \\ D-93040 Regensburg \\ Germany}
\email{finster@ur.de}
\author[F.\ Gmeineder]{Franz Gmeineder \\ \\ March 2025 / March 2026} %
\address[F.\ Gmeineder]{Universit\"{a}t Konstanz, Fachbereich Mathematik \& Statistik, Universit\"{a}tsstrasse 10, D-78464 Konstanz, Germany }
\email{franz.gmeineder@uni-konstanz.de}

\maketitle

\begin{abstract}
We introduce action-driven flows for causal variational principles, being a class of non-convex variational problems emanating from applications in fundamental physics. In the compact setting, H\"older continuous curves of measures are constructed by using the method of minimizing movements. As is illustrated in examples, these curves will in general not have a limit point, due to the non-convexity of the action. This leads us to introducing a novel penalization which ensures the existence of a limit point, giving rise to approximative solutions of the Euler-Lagrange equations. The methods and results are adapted and generalized to the causal action principle in the finite-dimensional case. As an application, we construct a flow of measures for causal fermion systems in the infinite-dimensional situation.
\end{abstract}

\setcounter{tocdepth}{2}
\tableofcontents

\section{Introduction} \label{secintro}
The theory of {\em{causal fermion systems}} is a recent approach to fundamental physics
(for an introduction to the physical background and applications as well as
the mathematical context, we refer the interested reader to the
review~\cite{review}, the textbooks~\cite{cfs, intro} or the 
website~\cite{cfsweblink}).
In this approach, spacetime and all structures therein are encoded in a
measure~$\rho$ on a set of operators on a Hilbert space.
The physical equations are formulated via a variational
principle for the measure~$\rho$, the so-called causal action principle.
{\em{Causal variational principles}} evolved as a mathematical generalization
of the causal action principle~\cite{continuum, jet, noncompact}
(an introduction to the causal action principle and causal variational principles
can be found for example in~\cite[Chapters~5 and~6]{intro}).
From the point of view of the calculus of variations, causal variational principles
are a class of nonlinear, non-convex variational principles where one minimizes
an action~$\Sact$ under variations of a measure~$\rho$.
One of the objectives of the present paper is to formulate and analyze
corresponding {\em{flows of measures}}. Moving from the study of minimizing measures
to flows of measures can be understood in
analogy to the transition from stationary problems (like for example minimizing the
Dirichlet energy) to corresponding evolution equations (like for example the heat flow).
In simple terms, our flows can be understood as gradient flows corresponding to
causal variational principles.
Due to the lack of convexity and smoothness, the formulation of the flow equations
as well as the proof of existence of solutions are mathematically challenging
and seem of general interest in the context of non-smooth and
non-convex variational problems. 

\subsection{Causal variational principles} \label{seccvpintro}
In order to describe this objective and underlying obstructions in more detail,
we begin by recalling the general setting of causal variational principles. For simplicity, we firstly restrict attention to the so-called \emph{compact setting}; the detailed set-up shall be deferred to Section~\ref{secrprelimcvp} below.
	
Our starting point is a compact metric space~$(\F, d)$
and a non-negative function~${\L} : \F \times \F \rightarrow \R^+_0 := [0, \infty)$ (the {\em{Lagrangian}}) which is assumed to be continuous.
The corresponding \emph{causal action principle} then is to 
\begin{align}\label{eq:CAP}
	\text{minimize}\quad \Sact (\rho) = \int_\F \dif \rho(x) \int_\F \dif \rho(y)\: \L(x,y) 
\end{align}
over the class~$\meas_1(\F)$ of normalized Borel measures on~$\F$.
Causal variational principles are a class of examples for {\em{non-smooth}} and {\em{non-convex}} variational principles. The existence of solutions of \eqref{eq:CAP} is a consequence of the direct method of the Calculus of Variations
(see Section~\ref{secrprelimcvp}). Most importantly, minimizers~$\rho$ satisfy the corresponding {\em{Euler-Lagrange equations}} (\emph{EL equations} for brevity), and their precise formulation is given in  Section~\ref{secrprelimcvp}.

Constructing solutions of the EL equations -- or physically meaningful approximations thereof -- is of central importance in the theory of causal fermion systems
in order to get a better understanding of
the nature of the physical interactions as described by the causal action principle.
Here, abstract existence results are not sufficient, but one needs constructive methods
which give insight into the structure of the minimizing measure. By the aforementioned lack of smoothness and non-convexity, this is a non-trivial task in itself. In this regard, a central  objective  of the general theory is to find a canonical way of how a generic probability measure $\rho_{0}$ can be modified continuously to yield an (approximative) solution of the EL equations. In other words, this corresponds to a meaningful evolution $t\mapsto \varrho(t)$ with $\varrho(0)=\rho_{0}$ such that, for $t\to\infty$, $\varrho(t)$ approaches an (approximative) solution of the EL equations. 

\subsection{Gradient flows} By the variational nature of the problems considered here, it is natural to consider evolutions driven by the energies or actions given by \eqref{eq:CAP}. By this we mean that the energies of the solutions are decreasing in time.  Heuristically, this can be interpreted as a measure-valued variant of the ordinary differential equation 
\begin{align}\label{eq:ODEintro}
\left\{ \begin{array}{rll}
\displaystyle \frac{\dif}{\dif t} \varrho(t)\!\!\! &= - \nabla \Sact(\varrho) &\; \text{if~$t>0$}\:, \\[0.5em] 
 \varrho(0) \!\!\!&= \rho_{0}\:. 
 \end{array} \right.
\ \end{align}
However, for future reference, we remark that \eqref{eq:ODEintro} has to be understood symbolically; in our case and as shall be discussed below, this is due to the lack of smoothness, in turn being a consequence of the non-convexity and non-smoothness of the action $\mathcal{S}$. 

By way of comparison, in the more familiar situation of classical Dirichlet energies e.g.\ on Sobolev spaces, \eqref{eq:ODEintro} reduces to the usual heat equation. The convexity of the underlying energies then allows for useful a priori estimates, finally leading to both existence and regularity assertions for the respective flows. These methods have been refined
and extended to many other flow equations, provided that the driving energies are convex.

\subsection{Flows for non-convex variational problems}
The situation changes drastically if the underlying energies are no longer convex. To the best of our knowledge, there is no unifying theory that yields both existence and decisive statements on the long-time behavior of solutions of the associated gradient flows (see however related results in~\cite{rossi-savare, bellettini-novaga-paolini, rossi-segatti-stefanelli, muratori-savare, streets}).
To overcome the first issue, we employ a version of \textsc{De Giorgi}'s {\em{minimizing movements approach}}~\cite{degiorgi-minmov, ambrosio-minmov, braides-minmov, fleissner} adapted to the present setting; in essence, they can be understood as a method for extending the gradient flow to
non-smooth actions on infinite-dimensional spaces. This construction leads to a flow
\[ \Phi : [0,\infty)\times\meas_1(\F) \to \meas_1(\F) \]
with the property that the action given by \eqref{eq:CAP} is strictly decreasing along the flow lines. In essence, this is achieved by solving variational problems in discrete time steps which are penalized by the Wasserstein metric, and then pass to a continuous time evolution by use of an Arzel\`{a}-Ascoli-type argument. While we describe an analogous penalization procedure by use of the total variation norm, the use of the Wasserstein metric is most suitable here. Indeed, it is the weak*-convergence of probability measures for which compactness can be achieved and the actions \eqref{eq:CAP} are lower semicontinuous; the Wasserstein metric, in turn, induces weak*-convergence.
We also study the analogous procedure for the total variation norm. In this case,
we also get existence of a flow. But the flow has the shortcoming that it potentially
gets stuck away from local minima (as will be explained in an example in Section~\ref{secfurtherex}).
With this in mind, it seems that the Wasserstein distance is the correct metric
for the flow of measures we have in mind. We prove that the resulting curves of
measures are H\"older continuous (see Section~\ref{secflow}).

It is an important task to control the {\em{long-time behavior}} of solutions.
It is here where the interplay of non-convexity and the weak compactness properties of weak*-convergence necessitate additional arguments.
First, it is clear from the arbitrariness of the initial value $\rho_{0}$ that, at best, the curve will converge to an extremal point but not necessarily to a minimizer.
In fact, by the very definition of the flow, it might get stuck at a critical point of the functional, and by the non-convexity, the latter might be far away from any global minimizer.
In the general situation considered here, the situation is even worse: it may happen
that the gradient flow does not converge at all.
This will be shown in Section~\ref{secex} in a simple example where the potential
is constructed as a downward spiral with increasingly small potential wells
(see Figure~\ref{figexample} on page~\pageref{figexample}).
In examples of this type, which may be known to the experts in different scenarios,
there is not even a subsequence of times~$(t_{k})$ 
for which the measures converge to a solution of the EL equations.

In order to overcome such difficulties,
we also introduce another flow which involves an additional penalization 
term involving a parameter~$\xi>0$. In the case~$\xi=0$, we get back the above
flow by minimizing movements. In the case~$\xi>0$, the additional penalization
term gives us a-priori control
of the length of the curve (as measured in the Wasserstein distance) in terms of the
change of the action (see Section~\ref{seclip}).
This makes it possible to reparametrize the curve, using the action itself as the new parameter. In this way, we can circumvent the difficulty that the flow might get
stuck in ``plateaus'' of the potential for a long time (as shown in Figure~\ref{fig:reparamterise}
on page~\pageref{fig:reparamterise}).
After the reparametrization, the curve becomes even Lipschitz continuous (see Section~\ref{seclip}).
Moreover, we get control of the long-time behavior of the solutions.
Indeed, in the case~$\xi>0$ we prove that the
resulting curve~$\varrho^\xi(t)$ does converge (see Section~\ref{seclimit}).
The prize to pay is that the limiting measure satisfies the EL equations only approximately.
For the error term, we derive a precise a-priori bound which tends to zero as~$\xi \searrow 0$.
With this in mind, our procedure seems well-suited for the applications in mind.
For example, in a numerical study one can choose~$\xi$ so small that the error of the
approximation is bounded by the numerical errors.
%


We also extend our methods and results to the causal action principle for causal variational
principles. Our methods and results can be understood more generally from the perspective
of {{non-convex variational problems}}.
Indeed, causal variational principles are model examples of
variational principles which, in general, are fully non-convex. The methods to be developed in the present paper provide H\"older continuous flows of measures with these desired properties.

\subsection{Structure of the paper}
The paper is organized as follows. After the necessary preliminaries on
causal variational principles and measure theory (Section~\ref{secprelim}), 
we discuss a simple example of a non-smooth and non-convex variational problem in two dimensions
(Section~\ref{secex}). In Section~\ref{seccvp} flows are developed starting from minimizing movements
for causal variational principles in the compact setting.
In Section~\ref{secfurtherex} our results are illustrated by further examples.
Section~\ref{seccfs} is devoted to the
adaptation and generalization of our methods and results to the causal action principle in finite dimensions;
this section also includes a brief but self-contained introduction to causal fermion systems and the causal action principle.
Finally, in Section~\ref{secoutlook} we give an outlook on how our flow could be used for the study of the
EL equations for causal fermion systems in infinite dimensions.

\section{Preliminaries} \label{secprelim}
\subsection{Causal variational principles in the compact setting} \label{secrprelimcvp}
We let~$(\F, d)$ be a compact metric space
and suppose that the Lagrangian~$\L\colon\F\times\F\to\R_{0}^{+}$ satisfies the following assumptions:
\label{enumAB}
\begin{enumerate}[label=(A\arabic*)]
\item\label{item:ass1} $\L$ is {\em{symmetric}}: $\L(x,y)=\L(y,x)$ for all~$x,y\in\F$. 
\item\label{item:ass2} $\L \in \hold^0(\F \times \F, \R^+_0)$ is {\em{continuous}} in both arguments.
\end{enumerate}
The {\em{causal variational principle}} is to minimize the {\em{action}}~$\Sact$ defined as
the double integral over the Lagrangian
\beq \label{Sactdef}
\Sact (\rho) = \int_\F \dif \rho(x) \int_\F \dif \rho(y)\: \L(x,y)
\eeq
under variations of the measure~$\rho$ within the class of regular Borel measures,
keeping the total volume~$\rho(\F)$ fixed ({\em{volume constraint}}).
By rescaling the measure, it is no loss of generality to consider normalized measures, i.e., 
\[ \rho(\F) = 1 \:. \]
The existence of minimizers follows from standard compactness arguments
(see~\cite{continuum} or, in a slightly more general scenario, \cite[Section~3.2]{noncompact}
or~\cite[Chapter~12]{intro}); the method will also be revisited in Lemma~\ref{lem:exist1} below.

Given a minimizing measure~$\rho\in\meas_1(\F)$, we
introduce the underlying {\em{spacetime}}~$M$ as its support,
\[ M := \supp \rho := \F\setminus\bigcup \big\{ U\subset \F\;\text{open}\colon\;\rho(U)=0 \big\} \:. \]
In~\cite[Lemma~2.3]{jet} it was shown that a minimizer
satisfies the {\em{Euler-Lagrange (EL) equations}}, 
which state that the continuous function~$\ell : \F \rightarrow \R_0^+$ defined by
\[ 
\ell(x) := \int_\F \L(x,y)\: \dif \rho(y) \]
is minimal on spacetime,
\beq \label{EL}
\ell|_M \equiv \inf_\F \ell \:.
\eeq
For further details we refer to~\cite[Section~2]{jet} or~\cite[Chapter~7]{intro}; we remark that we left out the parameter~$\mathfrak{s}$ appearing in these contributions, which will not be required here.

\subsection{Background facts from optimal transport and metric measure spaces} \label{sec:opttrans}
We now fix our notation and recall a few background facts from measure theory and metric measure spaces
to be used in the sequel. We specialize the setting by assuming that~$\F$ is a compact {\em{metric}}
space with metric~$d$. 
We denote the set of probability measures on~$\F$ by~$\meas_{1}(\F)$.
More generally, we use~$\meas(\F)$ to denote the signed Radon measures on~$\F$ and endow~$\meas(\F)$ with the {\em{total variation norm}}
\beq \label{tvn}
\|\mu\|_{\meas(\F)} :=  \sup_{\pi\in\Pi(\F)}\sum_{B\in\pi}|\mu(B)|, \qquad \mu\in\meas(\F) \:,
\eeq
where~$\Pi(\F)$ is the set of all countable Borel partitions of~$\F$. For future reference, we note that ~$(\meas(\F), \|\mu\|_{\meas(\F)})$ is a Banach space, and that the metric induced by~$\|\cdot\|_{\meas(\F)}$, denoted by~$d_{\meas(\F)}$, will also referred to as the {\em{Fr{\'e}chet metric}}.

In our arguments below, we will also make use of the~$p$-{\em{Wasserstein metric}}
on~$\F$ for~$1 \leq p <\infty$.
Given a measure~$\mathbb{P}\in\meas_{1}(\F \times \F)$, for~$i\in\{1,2\}$ we denote
the projection to the~$i^\text{th}$ component by~$\pi^{i}\colon \F \times  \F \ni (x_{1},x_{2})\mapsto x_{i}\in \F$. We let~$\pi_{\#}^{i}\mathbb{P}(A):=\mathbb{P}(\pi_{i}^{-1}(A))$ for~$A \subset \F$ be the corresponding push-forward of~$\mathbb{P}$. As is customary in this context, we then define for~$\mu_{1},\mu_{2}\in\meas_{1}(\F)$ the class of {\em{couplings}}~$\Gamma(\mu_{1},\mu_{2})$ (also referred to as {\em{transport plans}}) by 
\begin{align*}
\Gamma(\mu_{1},\mu_{2}) :=\{\mathbb{P}\in\meas_{1}(\F\times\F)\colon\; \pi_{\#}^{i}\mathbb{P}=\mu_{i}\;\text{for}\;i\in\{1,2\}\} \:. 
\end{align*}
Here the measures~$\pi_{\#}^{i}\mathbb{P}$ are referred to as {\em{marginals}}.
Let~$1\leq p<\infty$. We then define for~$\mu,\nu\in\meas_{1}(\F)$ the~$p$\emph{-th Wasserstein metric} by
\beq \label{eq:WassersteinDef}
\wasser(\mu,\nu) := \bigg( \inf\Big\{\int_{\F\times\F} d(x,y)^{p}\:\dif\mathbb{P}(x,y)\;\colon 
\;\mathbb{P}\in\Gamma(\mu,\nu)\Big\} \bigg)^\frac{1}{p} \:. 
\eeq
The integral appearing in~\eqref{eq:WassersteinDef} will also be abbreviated by~$\mathbf{W}_{p}(\mathbb{P})$. For future reference, let us emphasize that~$W_{p}$ metrizes the weak*-convergence on~$\meas_{1}(\F)$, meaning that (see~\cite[Corollary~6.13]{villani})
\beq \label{eq:soundofwater1}
\Big(\int_{\F}\varphi\dif\mu_{j}\to \int_{\F}\varphi\dif\mu\;\;\;\text{for all}\;\varphi\in \hold(\F) \Big)
\qquad \Longleftrightarrow \qquad W_{p}(\mu_{j},\mu)\to 0,
\eeq
where~$\hold(\F)$ denotes the continuous functions on~$\F$.
The following lemma is clearly well-known to the experts,  but since it is crucial for our arguments below, we include its short proof. 

\begin{lemma}\label{lem:wasserbound}
For any~$p \in [1, \infty)$ the following inequality holds,
\beq \label{eq:soundofwater2}
W_p(\mu, \nu) \leq \text{\rm{diam}}(\F)\:\|\mu-\nu\|^{\frac{1}{p}}_{\meas(\F)} \qquad\text{for all}\;\mu,\nu\in\meas_{1}(\F) \:.
\eeq
Moreover, for any~$\mu,\nu\in\meas_{1}(\F)$ and~$\lambda\in [0,1]$,
\begin{align}\label{eq:wasserbound}
\wasser \big( \lambda\mu+(1-\lambda)\nu,\nu \big)\leq \lambda \wasser(\mu,\nu). 
\end{align}
\end{lemma}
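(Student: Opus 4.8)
The plan is to prove both inequalities constructively, by writing down explicit admissible transport plans and reading off their marginals.

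For \eqref{eq:soundofwater2} I would start from the Jordan decomposition $\mu-\nu=\sigma^{+}-\sigma^{-}$ into mutually singular nonnegative measures. Since $\mu(\F)=\nu(\F)=1$, the positive and negative parts carry equal mass $m:=\sigma^{+}(\F)=\sigma^{-}(\F)=\tfrac12\|\mu-\nu\|_{\meas(\F)}$, and $\theta:=\mu-\sigma^{+}=\nu-\sigma^{-}\geq 0$ is a measure of mass $1-m$. If $m=0$ then $\mu=\nu$ and \eqref{eq:soundofwater2} is trivial; otherwise, writing $\Delta\colon\F\to\F\times\F$ for the diagonal map $\Delta(x):=(x,x)$, I would set
\[
\mathbb{P}:=\Delta_{\#}\theta+\tfrac1m\,\sigma^{+}\otimes\sigma^{-}\in\meas_{1}(\F\times\F).
\]
A direct computation of marginals gives $\pi_{\#}^{1}\mathbb{P}=\theta+\sigma^{+}=\mu$ and $\pi_{\#}^{2}\mathbb{P}=\theta+\sigma^{-}=\nu$, so $\mathbb{P}\in\Gamma(\mu,\nu)$. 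The diagonal part lives on $\{x=y\}$ and contributes nothing to the cost, while on the complement $d(x,y)^{p}\leq\operatorname{diam}(\F)^{p}$; hence
\[
\mathbf{W}_{p}(\mathbb{P})=\tfrac1m\int_{\F\times\F}d(x,y)^{p}\,\dif(\sigma^{+}\otimes\sigma^{-})\leq\tfrac1m\operatorname{diam}(\F)^{p}\,m^{2}=\operatorname{diam}(\F)^{p}\,m.
\]
Since $m\leq\|\mu-\nu\|_{\meas(\F)}$, taking $p$-th roots yields \eqref{eq:soundofwater2}.

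For \eqref{eq:wasserbound} I would pick a coupling $\mathbb{P}\in\Gamma(\mu,\nu)$ realizing the infimum in \eqref{eq:WassersteinDef} — such a minimizer exists since $\Gamma(\mu,\nu)$ is weak*-compact and $\mathbb{P}\mapsto\mathbf{W}_{p}(\mathbb{P})$ is weak*-continuous on the compact space $\F\times\F$, though near-optimal plans would equally do — and form the convex combination
\[
\mathbb{Q}:=\lambda\,\mathbb{P}+(1-\lambda)\,\Delta_{\#}\nu.
\]
Reading off marginals, $\pi_{\#}^{1}\mathbb{Q}=\lambda\mu+(1-\lambda)\nu$ and $\pi_{\#}^{2}\mathbb{Q}=\lambda\nu+(1-\lambda)\nu=\nu$, so $\mathbb{Q}\in\Gamma(\lambda\mu+(1-\lambda)\nu,\nu)$; since $\Delta_{\#}\nu$ is supported on the diagonal and thus cost-free, $\mathbf{W}_{p}(\mathbb{Q})=\lambda\,\mathbf{W}_{p}(\mathbb{P})=\lambda\,\wasser(\mu,\nu)^{p}$. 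Passing to the infimum over $\Gamma(\lambda\mu+(1-\lambda)\nu,\nu)$ gives $\wasser(\lambda\mu+(1-\lambda)\nu,\nu)^{p}\leq\lambda\,\wasser(\mu,\nu)^{p}$, which yields \eqref{eq:wasserbound}.

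The argument is elementary throughout. The only steps demanding attention are the bookkeeping of marginals for the two plans $\mathbb{P}$ and $\mathbb{Q}$ — this is where the volume normalization $\mu(\F)=\nu(\F)=1$ is essential, in particular in guaranteeing that $\sigma^{+}$ and $\sigma^{-}$ have equal mass so that $\tfrac1m\sigma^{+}\otimes\sigma^{-}$ has the right marginals — and, in \eqref{eq:soundofwater2}, keeping track of the factor $\tfrac12$ in $m=\tfrac12\|\mu-\nu\|_{\meas(\F)}$, which is harmlessly absorbed into the inequality. I do not anticipate any genuine difficulty.
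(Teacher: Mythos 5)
Your proof of \eqref{eq:soundofwater2} is correct and is essentially the paper's own argument: the measure $\rho:=\tfrac12(\mu+\nu-|\mu-\nu|)$ used there is exactly your common part $\theta$, with $\mu-\rho=\sigma^{+}$ and $\nu-\rho=\sigma^{-}$, and the transport plan (diagonal on the common part plus the normalized product $\tfrac1m\,\sigma^{+}\otimes\sigma^{-}$) is the same; your explicit treatment of the degenerate case $m=0$ and the bookkeeping of the factor $\tfrac12$ are fine (your bound is in fact sharper by that factor).

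For \eqref{eq:wasserbound}, however, there is a genuine gap, and it sits precisely in the final clause ``which yields \eqref{eq:wasserbound}''. Your coupling $\mathbb{Q}=\lambda\mathbb{P}+(1-\lambda)\Delta_{\#}\nu$ is admissible and correctly gives
\[
\wasser\big(\lambda\mu+(1-\lambda)\nu,\nu\big)^{p}\;\leq\;\lambda\,\wasser(\mu,\nu)^{p},
\]
but taking $p$-th roots turns the factor $\lambda$ into $\lambda^{1/p}$, and $\lambda^{1/p}\geq\lambda$ for $\lambda\in[0,1]$, $p\geq 1$, so the inequality with the factor $\lambda$ does not follow. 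In fact \eqref{eq:wasserbound} as stated fails for every $p>1$: take $\mu=\delta_{x}$, $\nu=\delta_{y}$ with $x\neq y$; the unique coupling of $\lambda\delta_{x}+(1-\lambda)\delta_{y}$ with $\delta_{y}$ has cost $\lambda\,d(x,y)^{p}$, whence $\wasser(\lambda\mu+(1-\lambda)\nu,\nu)=\lambda^{1/p}\,d(x,y)>\lambda\,d(x,y)=\lambda\,\wasser(\mu,\nu)$ for $0<\lambda<1$. So what your plan actually proves, namely the estimate with $\lambda^{1/p}$ in place of $\lambda$, is the correct and sharp statement; the version with $\lambda$ holds only for $p=1$ (or if $\wasser$ is read as the un-rooted transport cost). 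You are in good company: the paper's proof uses the same idea (mixing a near-optimal plan for $(\mu,\nu)$ with a near-zero-cost element of $\Gamma(\nu,\nu)$, which plays the role of your $\Delta_{\#}\nu$), but it identifies the cost integral $\mathbf{W}_{p}(\mathbb{P})$ with the metric $\wasser$, i.e.\ omits exactly the $p$-th root; read against the definitions, it too only delivers the exponent $1/p$. The honest fix is therefore to restate \eqref{eq:wasserbound} with $\lambda^{1/p}$, not to look for a better coupling.
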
 
\begin{proof} For the proof of~\eqref{eq:soundofwater2} we introduce the measure
\[ \rho := \frac{1}{2}\: \big( \mu + \nu - |\mu-\nu| \big) \:. \]
Then the measures $\mu-\rho$ and $\nu - \rho$ are both positive, with total volume given by
\[ (\mu-\rho)(\F) = (\nu-\rho)(\F) = \frac{1}{2}\:\|\mu - \nu\|_{\meas(\F)} \:. \]
We consider the transport plan
\[ \mathbb{P}(x,y) := \rho(x) \: \delta(x,y) + \frac{2}{\|\mu - \nu\|_{\meas(\F)}}\:(\mu-\rho) \times (\nu-\rho) \:. \]
It has the desired marginals~$\pi_{\#}^1 \mathbb{P}= \mu$
and~$\pi_{\#}^2\mathbb{P}= \nu$. We thus obtain the estimate
\begin{align*}
W_p(\mu, \nu)^p &\leq \iint_{\F \times \F} d(x,y)^p \: d \mathbb{P}(x,y) \\
&\leq \text{diam}(\F)^p\:\frac{2}{\|\mu - \nu\|_{\meas(\F)}}\:(\mu-\rho)(\F)\: (\nu-\rho)(\F) \\
&= \frac{1}{2}\:\text{diam}(\F)^p\:\|\mu-\nu\|_{\meas(\F)} \:.
\end{align*}
This gives~\eqref{eq:soundofwater2}.

In order to prove~\eqref{eq:wasserbound}, we let~$\varepsilon>0$ be arbitrary and choose~$\mathbb{P}\in\Gamma(\mu,\nu)$, $\widetilde{\mathbb{P}}\in\Gamma(\nu,\nu)$ such that 
\begin{align}\label{eq:destillierteswasser}
\mathbf{W}_{p}(\mathbb{P}) < \wasser(\mu,\nu) + \varepsilon\quad\text{and}\quad 
\mathbf{W}_{p}(\widetilde{\mathbb{P}}) < \varepsilon.
\end{align}
Now it suffices to realize that the coupling~$\mathbb{P}':=\lambda\mathbb{P}+(1-\lambda)\widetilde{\mathbb{P}}$ has the two marginals
\[ \pi_{\#}^{1}\mathbb{P}'=\lambda\mu + (1-\lambda)\nu \qquad \text{and} \qquad \pi_{\#}^{2}\mathbb{P}'=\nu \:. \]
Hence~$\mathbb{P}'\in\Gamma(\lambda\mu+(1-\lambda)\nu,\nu)$ and therefore
\begin{align*}
\wasser(\lambda\mu+(1-\lambda)\nu,\nu)  \leq \mathbf{W}_{p}(\mathbb{P}') & = \lambda \mathbf{W}_{p}(\mathbb{P}) + (1-\lambda)\mathbf{W}_{p}(\widetilde{\mathbb{P}}) 
\stackrel{\eqref{eq:destillierteswasser}}{\leq} \lambda \wasser(\mu,\nu)+ \varepsilon. 
\end{align*}
Sending~$\varepsilon\searrow 0$ establishes~\eqref{eq:wasserbound}, and this completes the proof. 
\end{proof}

\section{An example of a non-smooth, non-convex variational principle} \label{secex}
In order to illustrate the familiar difficulties which one encounters when analyzing non-smooth,
non-convex variational principles, we begin with an explicit example.
Despite its simplicity, it has similar features as will be proven for general causal variational principles
later on. In order to keep the setting as simple as possible, instead of varying on a space of measures,
we consider a minimization problem for a function on~$\R^2$.
We choose polar coordinates~$(r, \varphi)$ and introduce the action~$\Sact$ by
\[ \Sact(r, \varphi) = \left\{ \begin{array}{cl} 
\displaystyle 3 - 2 r^2 + r^2 \,(1-r^2)\: \sin \Big( \frac{1}{1-r} + \varphi \Big) &\qquad \text{if~$r < 1$} \\
\exp(1-r) &\qquad \text{if~$r \geq 1$}\:.
\end{array} \right. \]
This action is smooth except on the unit circle~$r=1$, where it is merely continuous
(see the radial plot in Figure~\ref{figexample}).
\begin{figure}[t]
\begin{tikzpicture}
	\node at (0,0) {\includegraphics[scale=0.3]{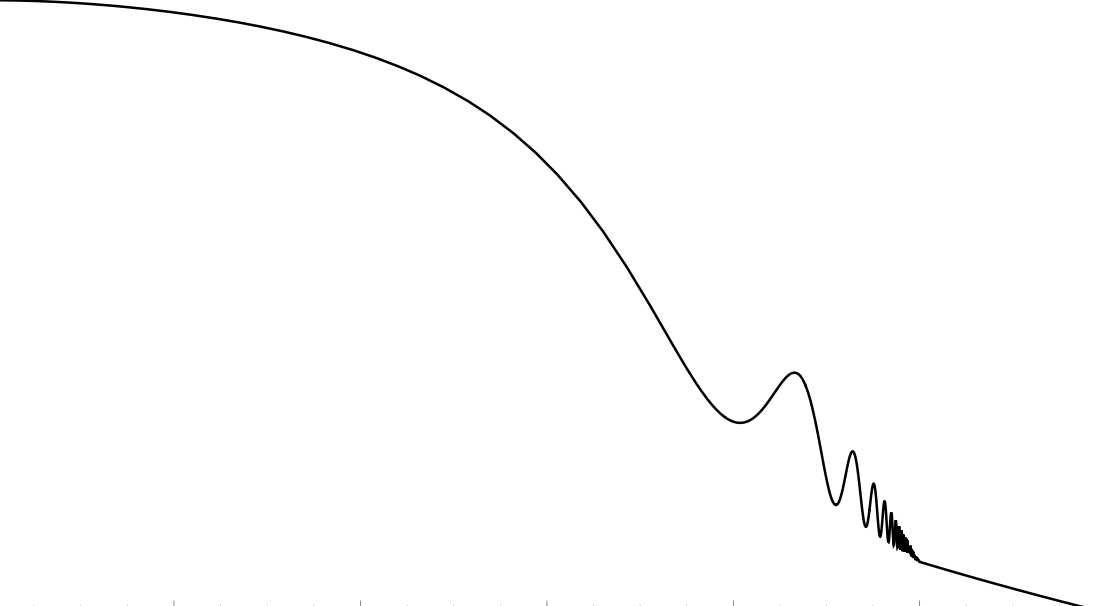}};
	\draw[->, thick] (-2.9,-1.7)--(-2.9,2);
	\draw[->, thick] (-3,-1.6) -- (3.5,-1.6);
	\node at (1,1) {$\mathcal{S}(r,0)$};
	\node at (3.7,-1.7) {$r$};
\end{tikzpicture}
\caption{Plot of the profile function~$\Sact(r,0)$.} \label{figexample}
\end{figure}
Suppose that we want to find a minimizer using a gradient flow, i.e.\
\beq \label{gradient}
\dot{\gamma}(t) = - \nabla \Sact|_{\gamma(t)} \qquad \text{and} \qquad
\gamma(0) = \Big(r=\frac{1}{5}, \varphi=0 \Big) \:.
\eeq
Then the curve~$\gamma(t)$ will ``spiral outward'' an infinite number of times.
Therefore, it will {\em{not}} converge,
\[ \lim_{t \rightarrow \infty} \gamma(t)  \qquad \text{does not exist}\:. \]
Instead, all the points of the unit circle are accumulation points of the curve.
However, the points on the unit circle itself are not critical, because the action becomes smaller
linearly if the radius is increased.
This gradient flow can be realized using minimizing movements if one considers the action
\beq \label{Spenal}
\Sact(r, \varphi) + \frac{1}{2h}\: d\big( (r,\phi), (r',\phi') \big)^2 \:,
\eeq
where~$d$ denotes the Euclidean distance in~$\R^2$.
Indeed, computing the first variation of this action in the Cartesian variables~$x=r \cos \phi$ and~$y=r \sin \phi$, we obtain the EL equations
\[ \begin{pmatrix} \partial_x \Sact \\ \partial_y \Sact \end{pmatrix} + \frac{1}{h} \: \begin{pmatrix} x-x' \\ y-y' \end{pmatrix} = 0 \:. \]
Assuming that the limit~$h \searrow 0$ exists, we obtain the differential equation~\eqref{gradient}. Therefore, the penalized action~\eqref{Spenal}
can be regarded as a discrete version of the gradient flow with step size~$h$.

We next consider minimizing movements with an additional penalization term
pa\-ra\-me\-trized by~$\xi>0$,
\[ \Sact(r, \varphi) + \frac{1}{2h}\: d\big( (r,\phi), (r',\phi') \big)^2 + \xi\, d\big( (r,\phi), (r',\phi') \big)\:. \]
Now the corresponding flow equation takes the form
\[ \dot{\gamma}_\xi(t) = 
\left\{ \begin{array}{cl} \displaystyle 
- \frac{\|\nabla \Sact|_{\gamma(t)}\|- \xi}{\|\nabla \Sact|_{\gamma(t)}\|} \; \nabla \Sact|_{\gamma(t)}
&\qquad \text{if~$\|\nabla \Sact|_{\gamma(t)}\| \geq \xi$} \\[1em]
0 &\qquad \text{otherwise}\:. \end{array} \right. \]
Therefore, the flow stops as soon as the norm of the gradient becomes smaller than~$\xi$.
Choosing~$\xi$ very small, the solution curve~$\gamma_\xi(t)$ will look similar to~$\gamma(\tau)$,
but instead of ``spiraling around'' an infinite number of times, it will stop at a point near the unit circle.
The resulting curve has finite length and a limit point,
\[ \gamma_{\xi}(\infty) := \lim_{t \rightarrow \infty} \gamma_\xi(t) \qquad \text{exists} \:. \]
The drawback is that the EL equations are satisfied only approximately in the sense that
\[ \big\| \nabla \Sact|_{\gamma_\xi(\infty)} \big\| \leq \xi \:. \]
In the limit~$\xi \searrow 0$, the limit points~$\gamma_\xi(\infty)$ again ``spiral around''
an infinite number of times. Therefore, the limit
\[ \lim_{\xi \searrow 0} \gamma_\xi(\infty) \qquad \text{does not exist}\:. \]
Instead, all the points of the unit circle are again accumulation points of the curve~$\gamma_\xi(\infty)$
with~$\xi \in \R^+$.

\section{Minimizing movements for causal variational principles} \label{seccvp}
\subsection{The causal action with penalization}
Throughout this section, we tacitly suppose that Assumptions~\ref{item:ass1} and~\ref{item:ass2} on the Lagrangian hold. In order to set up the minimizing movements scheme, we first consider variational problems with a given penalization. In particular, given parameters~$\xi \geq 0$, $h>0$ and a measure~$\rho$, we define
\begin{align}\label{eq:varprinB0}
\Sact^{h,\xi}(\mu):=\Sact(\mu)+\frac{1}{2h}\: d(\mu,\rho)^2+
\xi\: d(\mu,\rho) \:,
\end{align}
where~$d$ is the Fr{\'e}chet or the Wasserstein distance,
(cf.~\eqref{tvn} and~\eqref{eq:WassersteinDef})
\beq \label{dspecify}
\text{Case~1.} \;\; d=d_{\meas(\F)} \qquad \text{or} \qquad \text{Case~2.} \;\; d = W_p \:.
\eeq
The existence of solutions of the underlying minimization problem will be proven in Lemma~\ref{lem:exist1}. 
We begin with the following preparatory result (for a similar weaker statement see~\cite[Theorem~3.4]{noncompact}).
\begin{lemma}\label{lem:weakconv}
Let~$(\F, d)$ be a compact metric space and let~$\L\in\hold(\F\times\F)$. Then the functional 
\begin{align*}
\Sact \::\: \meas_{1}(\F)\ni\mu\mapsto \iint_{\F\times\F}\L(x,y)\dif\mu(x)\dif\mu(y)
\end{align*}
is continuous with respect to weak*-convergence on~$\meas_{1}(\F)$.

Moreover, the functional~$\Sact$ is Lipschitz continuous with respect to the Fr{\'e}chet metric, i.e.\
there is a constant~$C$ (which depends only on~$\F$ and~$\L$) such that
for all~$\rho, \tilde{\rho} \in \meas_{1}(\F)$,
\beq \label{lipes}
\big| \Sact(\tilde{\rho}) - \Sact(\rho) \big| \leq C\: d_{\meas(\F)}\big(\tilde{\rho}, \rho) \:.
\eeq
If we assume that the Lagrangian~$\L \in \hold^{0,\alpha}(\F \times \F, \R^+_0)$ is H\"older continuous with
H\"older exponent~$\alpha \in (0,1]$, then so is the functional~$\Sact$ with respect to the
Wasserstein distance, i.e.\ there is a constant~$C$ (which again depends only on~$\F$ and~$\L$) such that
for all~$\rho, \tilde{\rho} \in \meas_{1}(\F)$,
\beq \label{hoeles}
\big| \Sact(\tilde{\rho}) - \Sact(\rho) \big| \leq C\: W_p(\tilde{\rho}, \rho)^\alpha\:.
\eeq
\end{lemma}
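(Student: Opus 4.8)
The plan is to prove the three claims in turn, each time exploiting bilinearity of $\Sact$ in the measure argument.

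First, \emph{weak*-continuity}. Given $\mu_j \wstar \mu$ in $\meas_1(\F)$, I would write
\[
\Sact(\mu_j) - \Sact(\mu) = \iint_{\F\times\F}\L(x,y)\,\dif(\mu_j\times\mu_j - \mu\times\mu)(x,y),
\]
and split $\mu_j\times\mu_j - \mu\times\mu = (\mu_j - \mu)\times\mu_j + \mu\times(\mu_j-\mu)$. Since $\F$ is compact and $\L$ continuous on $\F\times\F$, the key fact is that $\mu_j\times\mu_j \wstar \mu\times\mu$ on $\F\times\F$: this follows by a Stone--Weierstrass argument, approximating $\L$ uniformly by finite sums $\sum_k \varphi_k(x)\psi_k(y)$ with $\varphi_k,\psi_k\in\hold(\F)$, for each of which $\iint \varphi_k\psi_k\,\dif(\mu_j\times\mu_j)\to\iint\varphi_k\psi_k\,\dif(\mu\times\mu)$ by definition of weak* convergence, together with the uniform bound $\mu_j\times\mu_j(\F\times\F)=1$. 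Testing against $\L$ then gives $\Sact(\mu_j)\to\Sact(\mu)$.

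Second, \emph{Fréchet-Lipschitz bound \eqref{lipes}}. Using the same bilinear splitting with $\tilde\rho$ in place of $\mu_j$ and $\rho$ in place of $\mu$,
\[
\Sact(\tilde\rho) - \Sact(\rho) = \iint \L\,\dif\big((\tilde\rho-\rho)\times\tilde\rho\big) + \iint \L\,\dif\big(\rho\times(\tilde\rho-\rho)\big),
\]
and I would estimate each term by $\|\L\|_{\hold^0(\F\times\F)}$ times the total variation of the product signed measure. Since $\|(\tilde\rho-\rho)\times\tilde\rho\|_{\meas(\F\times\F)} \leq \|\tilde\rho-\rho\|_{\meas(\F)}\,\|\tilde\rho\|_{\meas(\F)} = \|\tilde\rho-\rho\|_{\meas(\F)}$ (as $\tilde\rho$ is a probability measure), this yields \eqref{lipes} with $C = 2\|\L\|_{\hold^0(\F\times\F)}$.

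Third, \emph{Wasserstein-Hölder bound \eqref{hoeles}}. Fix an optimal (or $\varepsilon$-optimal) coupling $\mathbb{P}\in\Gamma(\tilde\rho,\rho)$ for $W_p(\tilde\rho,\rho)$. Then $\mathbb{P}\times\mathbb{P}\in\meas_1((\F\times\F)^2)$ is a coupling of $\tilde\rho\times\tilde\rho$ and $\rho\times\rho$, so
\[
\Sact(\tilde\rho)-\Sact(\rho) = \iint_{(\F\times\F)^2}\big(\L(x_1,y_1)-\L(x_2,y_2)\big)\,\dif\mathbb{P}(x_1,x_2)\,\dif\mathbb{P}(y_1,y_2).
\]
Bounding $|\L(x_1,y_1)-\L(x_2,y_2)| \leq [\L]_{\hold^{0,\alpha}}\big(d(x_1,x_2)+d(y_1,y_2)\big)^\alpha \leq [\L]_{\hold^{0,\alpha}}\big(d(x_1,x_2)^\alpha + d(y_1,y_2)^\alpha\big)$ (using subadditivity of $t\mapsto t^\alpha$), integrating, and applying Jensen's inequality in the form $\int f^\alpha\,\dif\mathbb{P} \leq \big(\int f\,\dif\mathbb{P}\big)^\alpha \leq \big(\int f^p\,\dif\mathbb{P}\big)^{\alpha/p}$ for $f = d(\cdot,\cdot)$, $\alpha\leq 1\leq p$, gives $\leq 2[\L]_{\hold^{0,\alpha}}\,W_p(\tilde\rho,\rho)^\alpha$ after letting $\varepsilon\searrow0$. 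The only mild subtlety — and the one place to be slightly careful — is the interchange of Hölder's/Jensen's inequalities with the fact that we may not have a genuine \emph{minimizer} of the coupling problem on a general compact metric space; this is handled by the standard $\varepsilon$-approximation already used in the proof of Lemma~\ref{lem:wasserbound}, so I anticipate no real obstacle, the bilinear splitting being the crux throughout.
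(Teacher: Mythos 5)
Your proposal is correct and follows essentially the same route as the paper: a Stone--Weierstrass approximation of $\L$ by sums of products combined with the uniform mass bound for the weak*-continuity, the bilinear splitting with $C=2\|\L\|_{\hold^0(\F\times\F)}$ for the Fr\'echet--Lipschitz estimate, and a coupling of $\tilde\rho$ and $\rho$ together with the H\"older/Jensen inequality for normalized measures (then optimizing over couplings) for the Wasserstein--H\"older bound. The only cosmetic difference is that you phrase the third step via an $\varepsilon$-optimal coupling and the product coupling $\mathbb{P}\times\mathbb{P}$, whereas the paper writes the same identity with an arbitrary coupling and takes the infimum at the end; both are equally valid.
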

\begin{proof} 
Let~$\rho,\rho_{1},\rho_{2},...\in\meas_1(\F)$ be such that~$\rho_{j}\stackrel{*}{\rightharpoonup}\rho$ as~$j\to\infty$.
Since~$\mathcal{F}$ is compact, the Weierstra\ss\, approximation theorem implies that the space
\[ X:=\mathrm{span}\{(x,y)\mapsto f(x)g(y)\colon\;f,g\in\hold(\F)\} \]
is dense in~$\hold(\F \times \F)$. Let~$\varepsilon>0$ be arbitrary but fixed. We then find 
$h\in \hold(X \times X)$ of the form $h(x,y)=\sum_{i=1}^{N}h_{i}f_{i}(x)g_{i}(y)$ with~$h_{1},...,h_{N}\in\R$ such that~$\|\L-h\|_{\infty}<\varepsilon$. Therefore, 
\begin{align*}
&\left\vert \iint_{\F\times\F}\L(x,y)\dif\rho(x)\dif\rho(y) - \iint_{\F\times\F}\L(x,y)\dif\rho_{j}(x)\dif\rho_{j}(y)\right\vert \\ 
& \leq \iint_{\F\times\F}|\L(x,y)-h(x,y)|\dif\rho(x)\dif\rho(y) \\
& \quad\:+  \left\vert
\iint_{\F\times\F}h(x,y) \dif\rho_{j}(x) \dif\rho_{j}(y) 
- \iint_{\F\times\F}h(x,y) \dif\rho(x) \dif\rho(y) \right\vert \\ 
& \quad\:+ \iint_{\F\times\F}|\L(x,y)-h(x,y)|\dif\rho_{j}(x)\dif\rho_{j}(y) =: \mathrm{I}+\mathrm{II}+\mathrm{III}. 
\end{align*}
We then have~$\mathrm{I} \leq \varepsilon\rho(\F)^{2}$ and~$\mathrm{III}\leq \varepsilon m^{2}$. On the other hand, by the very structure of~$h$, the weak*-convergence~$\rho_{j}\stackrel{*}{\rightharpoonup}\rho$ implies
\begin{align*}
\iint_{\F\times\F}h(x,y)\dif\rho_{j}(x)\dif\rho_{j}(y) & = \sum_{i=1}^{N}h_{i}\Big(\int_{\F}f(x)\dif\rho_{j}(x) \Big)\Big(\int_{\F}g(y)\dif\rho_{j}(y) \Big) \\ 
& \to \sum_{i=1}^{N}h_{i}\Big(\int_{\F}f(x)\dif\rho(x) \Big)\Big(\int_{\F}g(y)\dif\rho(y) \Big)\\ 
& = \iint_{\F\times\F}h(x,y)\dif\rho(x)\dif\rho(y) 
\end{align*}
as~$j\to\infty$, so that~$\mathrm{II}\to 0$ as~$j\to\infty$. By arbitrariness of~$\varepsilon>0$, the proof
of continuity is complete.

In order to prove the Lipschitz bound~\eqref{lipes}, we rewrite the difference of the actions as
\begin{align*}
&\Sact \big( \tilde{\rho} \big) - \Sact \big( \rho \big)
= \int_\F \dif \tilde{\rho}(x) \int_\F \dif \tilde{\rho}(y)\: \L(x,y) - 
\int_\F \dif \rho(x) \int_\F \dif \rho(y)\: \L(x,y) \\
&= \int_\F \dif \tilde{\rho}(x) \int_\F \dif \big( \tilde{\rho}- \rho \big)(y) \: \L(x,y)
+ \int_\F \dif \big(\tilde{\rho}- \rho\big)(x) \int_\F \dif \rho(y)\: \L(x,y) \:.
\end{align*}
Using that the Lagrangian is uniformly bounded and that the measures are normalized, we obtain
the estimate,
\[ \Sact \big( \tilde{\rho} \big) - \Sact \big( \rho \big)
\leq 2\,\|\L\|_{\hold^0(\F \times \F)}\: d_{\meas(\F)}\big(\tilde{\rho}, \rho) \:, \]
proving~\eqref{lipes}.

In order to derive the H\"older estimate~\eqref{hoeles}, we let~$\nu \in \meas_1(\F \times \F)$ be a coupling of~$\rho$ and~$\tilde{\rho}$. Then, using that the two marginals of~$\nu$ coincide with~$\rho$
and~$\tilde{\rho}$, the difference of actions can be written as
\[ \Sact(\tilde{\rho}) - \Sact(\rho) = \int_{\F \times \F} \dif \nu(x,x') \int_{\F \times \F} \dif \nu(y,y')
\big( \L(x',y') - \L(x,y) \big) \:. \]
Using that the Lagrangian is H\"older continuous with H\"older constant denoted by~$c$, we know that
\begin{align*} \big| \L(x',y') - \L(x,y) \big| & \leq \big| \L(x',y') -\L(x,y') \big| + \big| \L(x,y') - \L(x,y) \big| \\ &
\leq c\: \big( d(x,x')^\alpha + d(y,y')^\alpha \big) \:. 
\end{align*}
We thus obtain
\begin{align*}
\big| \Sact(\tilde{\rho}) - \Sact(\rho) \big| &\leq 2 c\: 
\int_{\F \times \F} d(x,x')^\alpha \: \dif \nu(x,x') \leq
2c \:\bigg( \int_{\F \times \F} d(x,x')^p \: \dif \nu(x,x') \bigg)^\frac{\alpha}{p},
\end{align*}
where in the last step we applied the H\"older inequality for normalized measures.
Taking the infimum over all couplings gives the result.
\end{proof}
 
\begin{lemma}\label{lem:exist1}
For any~$\xi\geq 0$, $h>0$ and~$\rho\in\meas_{1}(\F)$, there exists a minimizer~$\mu\in\meas_{1}(\F)$ of the causal action with penalization~\eqref{eq:varprinB0}.
\end{lemma}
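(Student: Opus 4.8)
\emph{Proof proposal.} The plan is to run the direct method of the calculus of variations, exactly as in the classical existence proof for minimizers of $\Sact$ (revisited in Lemma~\ref{lem:exist1} itself), but now tracking the two additional penalization terms. Write $m := \inf_{\mu\in\meas_{1}(\F)}\Sact^{h,\xi}(\mu)$. Since $\L\geq 0$ and $d\geq 0$, all three terms of $\Sact^{h,\xi}$ are nonnegative, so $m\geq 0$; on the other hand, testing with $\mu=\rho$ gives $\Sact^{h,\xi}(\rho)=\Sact(\rho)\leq \|\L\|_{\hold^{0}(\F\times\F)}<\infty$, where finiteness uses that $\L$ is continuous on the compact set $\F\times\F$. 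Hence $0\leq m<\infty$ and we may fix a minimizing sequence $(\mu_{j})\subset\meas_{1}(\F)$ with $\Sact^{h,\xi}(\mu_{j})\to m$.

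Next I would extract a convergent subsequence. As $\F$ is a compact metric space, $\hold(\F)$ is separable, so by Riesz representation $\meas(\F)=\hold(\F)^{*}$ and the closed unit ball of $\meas(\F)$ is weak*-sequentially compact (Banach--Alaoglu together with metrizability of the weak* topology on bounded sets). Since $\|\mu_{j}\|_{\meas(\F)}=1$ for all $j$, after passing to a (non-relabelled) subsequence we obtain $\mu_{j}\wstar\mu$ for some $\mu\in\meas(\F)$. Testing against the nonnegative functions in $\hold(\F)$ shows $\mu\geq 0$, and testing against the constant function $\ind$ gives $\mu(\F)=\lim_{j}\mu_{j}(\F)=1$; hence $\mu\in\meas_{1}(\F)$ is an admissible competitor.

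It then remains to show that $\Sact^{h,\xi}$ is weak*-lower semicontinuous along $(\mu_{j})$, since this yields $\Sact^{h,\xi}(\mu)\leq\liminf_{j}\Sact^{h,\xi}(\mu_{j})=m$ and therefore identifies $\mu$ as a minimizer. The action term $\Sact$ is even weak*-\emph{continuous} by Lemma~\ref{lem:weakconv}, so $\Sact(\mu_{j})\to\Sact(\mu)$. For the penalization I would distinguish the two cases of \eqref{dspecify}. In Case~2 the Wasserstein distance $W_{p}$ metrizes weak*-convergence on $\meas_{1}(\F)$ by \eqref{eq:soundofwater1}, so $\mu\mapsto W_{p}(\mu,\rho)$ is weak*-continuous, and post-composing with the continuous maps $t\mapsto\tfrac{1}{2h}t^{2}$ and $t\mapsto\xi t$ preserves continuity; thus in this case $\Sact^{h,\xi}$ is outright weak*-continuous. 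In Case~1 one only has, for $\varphi\in\hold(\F)$ with $\|\varphi\|_{\infty}\leq 1$, that $\mu\mapsto\int_{\F}\varphi\,\dif(\mu-\rho)$ is weak*-continuous, and $\|\mu-\rho\|_{\meas(\F)}=\sup\{\int_{\F}\varphi\,\dif(\mu-\rho):\varphi\in\hold(\F),\ \|\varphi\|_{\infty}\leq 1\}$ is a supremum of such functionals, hence weak*-\emph{lower} semicontinuous; since $t\mapsto\tfrac{1}{2h}t^{2}+\xi t$ is continuous and nondecreasing on $[0,\infty)$, the composition $\mu\mapsto\tfrac{1}{2h}d(\mu,\rho)^{2}+\xi\,d(\mu,\rho)$ remains weak*-lsc. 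In both cases $\Sact^{h,\xi}$ is the sum of a weak*-continuous and a weak*-lsc term, hence weak*-lsc, and the argument closes.

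The only genuinely delicate point is Case~1: the total variation (Fr\'echet) distance is \emph{not} weak*-continuous, so one cannot mimic the continuity argument used for $\Sact$ and must instead rely on its lower semicontinuity — which is precisely what a minimization problem tolerates. Everything else is routine compactness bookkeeping; in particular no uniform-integrability or tightness subtleties arise because $\F$ is compact.
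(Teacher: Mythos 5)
Your proposal is correct and follows essentially the same route as the paper: direct method with a minimizing sequence, weak* compactness of $\meas_{1}(\F)$ via Banach--Alaoglu, weak* continuity of $\Sact$ from Lemma~\ref{lem:weakconv}, and then the same case split — lower semicontinuity of the total variation distance in Case~1 versus continuity of $W_{p}$ (since it metrizes weak* convergence) in Case~2. The additional details you supply (positivity and normalization of the weak* limit, the monotone-composition argument for the penalization) are fine and merely make explicit what the paper leaves implicit.
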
 
\begin{proof} 
Since~$\L\colon\F\times\F\to\R_{0}^{+}$, $\mathcal{S}^{h,\xi}$ is bounded below on~$\meas_{1}(\F)$ and thus~$m:=\inf_{\meas_{1}(\F)}\mathcal{S}^{h,\xi}$ exists in~$[0,\infty)$, we can choose a minimizing sequence~$(\mu_{j})\subset\meas_{1}(\F)$ for~$\mathcal{S}^{h,\xi}$, so that in particular~$m=\lim_{j\to\infty}\mathcal{S}^{h,\xi}(\mu_{j})$. By the duality relation~$\hold_{0}(\F)'\cong\meas(\F)$ and using that~$\meas_{1}(\F)$ is convex and closed, the Banach-Alaoglu theorem provides us with a non-relabeled subsequence and a probability measure~$\mu\in\meas_{1}(\F)$ such that we have~$\mu_{j}\stackrel{*}{\rightharpoonup}\mu$ in~$\meas_{1}(\F)$. 
By Lemma~\ref{lem:weakconv}, $\mathcal{S}$ is continuous with respect to weak*-convergence. Now, if (i) $d$ is the Fr\'{e}chet metric, then~$d(\cdot,\rho)=\|\cdot-\rho\|_{\mathfrak{M}(\F)}$ is lower semicontinuous with respect to weak*-convergence. On the other hand, if (ii) $d$ is the $p$-Wasserstein metric, then~$d$ metrizes weak*-convergence and so, in particular, $d(\cdot,\rho)$ is continuous with respect to weak*-convergence. In both cases, $\Sact^{h,\xi}$ is lower semicontinuous with respect to weak*-convergence. Hence, 
\begin{align*}
m \leq \Sact^{h,\xi}(\mu)\leq \liminf_{j\to\infty}\Sact^{h,\xi}(\mu_{j}) = m \:, 
\end{align*}
and therefore~$\mu$ is a minimizer.
\end{proof}
For clarity, we point out that minimizers will in general not be unique. Moreover, whereas the
Fr{\'e}chet metric~$d_{\meas(\F)}$ might seem as an easier or more natural choice, it comes with unfavorable properties of the flow (see Section~\ref{secfurtherex}) which can be avoided by working with the Wasserstein distance~$W_{p}$. 

\subsection{Minimizing movements}
Let~$\rho_{0}\in\mathfrak{M}_{1}(\mathcal{F})$ be a given initial measure.  Throughout, we fix a penalization parameter~$\xi \geq 0$ and, given~$h>0$, consider the sequence of
measures~$(\rho^{h, \xi}_j)_{j \in \N_0}$ obtained by choosing~$\rho^{h, \xi}_{j=0}=\rho_0$
and by iteratively minimizing the associated functional
\begin{align}\label{eq:varprinB}
\Sact_{j}^{h,\xi}(\mu):=\Sact(\mu)+\frac{1}{2h}\: d \big(\mu,\rho_{j-1}^{h,\xi} \big)^2+
\xi\: d(\mu,\rho_{j-1}^{h,\xi})
\end{align}
for~$j=1,2,\ldots$. The first penalization term follows the general procedure
in the minimizing movements approach (see for example~\cite{ambrosio-minmov});
also the resulting H\"older estimates (as in Lemma~\ref{lem:Holdreg}
and Proposition~\ref{thm:main1}) are adaptations of standard arguments to our setting
(see for example~\cite[Proposition~7.1]{braides-minmov}).
The second penalization term in~\eqref{eq:varprinB}, however, is novel.
The necessity of introducing this additional penalization term depending on~$\xi$ will be explained in detail in Section~\ref{seclip}. 

We begin by collecting several elementary estimates,
where~$d$ is again the distance function induced by either the Fr{\'e}chet metric or the Wasserstein distance~\eqref{dspecify}: 
\begin{lemma} The sequence~$(\rho^{h, \xi}_j)_{j \in \N_0}$ satisfies for all~$j \in \N$ the inequalities
\begin{align}
 \Sact(\rho_{j}^{h,\xi}) &\leq \Sact \big( \rho_{j-1}^{h,\xi} \big) \label{i0} \\
 d \big( \rho_{j}^{h,\xi},\rho_{j-1}^{h,\xi} \big) &\leq 
\frac{1}{\xi} \: \Big( \Sact \big( \rho_{j-1}^{h,\xi} \big) - \Sact \big( \rho_{j}^{h,\xi} \big) \Big) \label{i1} \\
d \big( \rho_{j}^{h,\xi},\rho_{j-1}^{h,\xi} \big) &\leq \sqrt{ 2 h\,
\big( \Sact(\rho_{j-1}^{h,\xi})-\Sact(\rho_{j}^{h,\xi}) \big) } \:. \label{i2}
\end{align}
Moreover, the inequality~\eqref{i0} is strict unless~$\rho_{j}^{h,\xi} = \rho_{j-1}^{h,\xi}$.
\end{lemma}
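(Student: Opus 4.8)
The plan is to deduce all four assertions from a single comparison inequality obtained by testing the minimality of~$\rho_{j}^{h,\xi}$ against the competitor~$\rho_{j-1}^{h,\xi}$. Since~$\rho_{j-1}^{h,\xi}\in\meas_{1}(\F)$ is admissible and, in both cases of~\eqref{dspecify}, $d$ is a genuine metric, one has~$d(\rho_{j-1}^{h,\xi},\rho_{j-1}^{h,\xi})=0$. Using that~$\rho_{j}^{h,\xi}$ minimizes~$\Sact_{j}^{h,\xi}$ (existence being guaranteed by Lemma~\ref{lem:exist1}), this yields the master inequality
\[
\Sact(\rho_{j}^{h,\xi})+\frac{1}{2h}\, d\big(\rho_{j}^{h,\xi},\rho_{j-1}^{h,\xi}\big)^{2}+\xi\, d\big(\rho_{j}^{h,\xi},\rho_{j-1}^{h,\xi}\big)=\Sact_{j}^{h,\xi}\big(\rho_{j}^{h,\xi}\big)\leq \Sact_{j}^{h,\xi}\big(\rho_{j-1}^{h,\xi}\big)=\Sact\big(\rho_{j-1}^{h,\xi}\big),
\]
valid for every~$j\in\N$.

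From this I would read off the three displayed estimates by discarding the non-negative terms one at a time. Dropping both penalization contributions gives~\eqref{i0}. Retaining only~$\xi\, d(\rho_{j}^{h,\xi},\rho_{j-1}^{h,\xi})$ and dividing by~$\xi$ gives~\eqref{i1} when~$\xi>0$; in the degenerate case~$\xi=0$ the right-hand side is read as~$+\infty$ and the bound is vacuous. Retaining only~$\frac{1}{2h}\, d(\rho_{j}^{h,\xi},\rho_{j-1}^{h,\xi})^{2}$, multiplying by~$2h$ and taking square roots gives~\eqref{i2}; note that the right-hand sides of~\eqref{i1} and~\eqref{i2} are automatically non-negative by virtue of~\eqref{i0}.

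For the strictness statement I would argue by contrapositive: assume~$\rho_{j}^{h,\xi}\neq\rho_{j-1}^{h,\xi}$. Since~$d$ separates points of~$\meas_{1}(\F)$ in both cases of~\eqref{dspecify}, it follows that~$d(\rho_{j}^{h,\xi},\rho_{j-1}^{h,\xi})>0$, hence~$\frac{1}{2h}\, d(\rho_{j}^{h,\xi},\rho_{j-1}^{h,\xi})^{2}>0$. Feeding this back into the master inequality forces~$\Sact(\rho_{j}^{h,\xi})<\Sact(\rho_{j-1}^{h,\xi})$, which is precisely the strict form of~\eqref{i0}.

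I do not anticipate any genuine obstacle here; the whole argument is a one-line comparison. The only points that warrant a moment's care are confirming that~$\rho_{j-1}^{h,\xi}$ is an admissible competitor incurring zero penalization cost (immediate), the conventional interpretation of~\eqref{i1} in the case~$\xi=0$, and recording that~$d$ actually separates points (true for both the total-variation norm and the Wasserstein metric on~$\meas_{1}(\F)$), which is what makes the strictness statement work.
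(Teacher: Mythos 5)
Your proposal is correct and follows essentially the same route as the paper: testing the minimality of~$\rho_{j}^{h,\xi}$ against the competitor~$\rho_{j-1}^{h,\xi}$ (which incurs zero penalization) and then discarding the non-negative terms one at a time, with the strictness of~\eqref{i0} following because~$d$ separates points. Your explicit remarks on the vacuous case~$\xi=0$ and the contrapositive argument for strictness merely spell out what the paper leaves implicit.
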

\begin{proof} The minimality implies that
\begin{align*} 
\Sact(\rho_{j}^{h,\xi})+\frac{1}{2h}d(\rho_{j}^{h,\xi},\rho_{j-1}^{h,\xi})^2 + \xi \:d(\rho_{j}^{h,\xi},\rho_{j-1}^{h,\xi}) & = \Sact_{j}^{h,\xi}(\rho_{j}^{h,\xi}) \\ &\leq \Sact_{j}^{h,\xi}(\rho_{j-1}^{h,\xi}) = \Sact(\rho_{j-1}^{h,\xi}) \:.
\end{align*}
Using that the terms on the left are all non-negative, the result follows immediately.
\end{proof}

\subsection{A H\"older continuous flow} \label{secflow}
Our goal is to show that, taking a suitable limit~$h \rightarrow 0$, we
to obtain a H\"older continuous curve~$\varrho^\xi(t)$ with~$t \in \R^+_0$.
In preparation, we form the continuous curve~$\rho^{h, \xi}$ by interpolation,
\beq \label{eq:curvedefineA}
\rho^{h, \xi}(t) := 
\bigg( \Big\lfloor \frac{t}{h}+1\Big\rfloor -\frac{t}{h} \bigg)\:\rho_{\lfloor\frac{t}{h}\rfloor}^{h,\xi} + 
\bigg( \frac{t}{h}
- \Big\lfloor \frac{t}{h} \Big\rfloor \bigg) \: \rho_{\lfloor\frac{t}{h}+1\rfloor}^{h,\xi} \:.
\eeq
For the next construction steps, we need the following generalization of the usual Arzel\`{a}-Ascoli theorem: 
\begin{lemma}[{\cite[Prop. 3.3.1]{ambrosio+gigli}}]\label{lem:ArzAsc}
Let~$(X,d)$ be a complete metric space and~$T>0$. Given a subset~$K\subset X$ which is sequentially compact with respect to a topology~$\tau$, suppose that~$(u_{j})_{j\in\mathbb{N}}$ is a sequence of maps~$u_{j}\colon [0,T]\to X$ such that 
\begin{gather}
u_{j}(t)\in K\qquad\text{for all}\;j\in\mathbb{N}\;\text{and all}\;t\in [0,T], \label{eq:ArzAscolA1} \\
\limsup_{j\to\infty} d \big( u_{j}(s),u_{j}(t) \big)\leq \omega(s,t)\qquad\text{for all}\;s,t\in [0,T] \:, \label{eq:ArzAscolA2}
\end{gather}
where~$\omega\colon [0,T]\times [0,T]\to [0,\infty)$ is a symmetric function (i.e.~$\omega(s,t)=\omega(t,s)$ for all~$s,t\in [0,T]$) with the property that~$\lim_{(s,t)\to(0,0)}\omega(s,t)=0$. Then there exists a subsequence~$(u_{j(k)})_{k\in\mathbb{N}}\subset (u_{j})_{j\in\mathbb{N}}$ and a $d$-continuous map~$u\colon [0,T]\to X$ such 
that the sequence~$(u_{j(k)})$ converges pointwise to~$u$ with respect to the topology~$\tau$.
\end{lemma}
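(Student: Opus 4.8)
The plan is to run the classical Arzel\`a--Ascoli scheme --- diagonal extraction on a countable dense set, followed by the extension of a uniformly continuous limit from that set --- together with one extra ingredient to cope with the fact that the topology $\tau$ is, in the situations we apply this to, genuinely coarser than the $d$-topology. Concretely, in Case~1 the total variation norm $\|\,\cdot-\rho\|_{\meas(\F)}$ is weak*-lower semicontinuous (as already recorded in the proof of Lemma~\ref{lem:exist1}), while in Case~2 the metric $W_p$ even metrizes weak*-convergence by~\eqref{eq:soundofwater1}; in either case $d$ is \emph{sequentially lower semicontinuous with respect to $\tau$}, and this is exactly the property that propagates the modulus estimate~\eqref{eq:ArzAscolA2} to the limit curve.

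\emph{Diagonal subsequence and limit curve.} Fix a countable dense set $D=\{s_1,s_2,\dots\}\subset[0,T]$. Since $u_j(s_1)\in K$ for all $j$ and $K$ is $\tau$-sequentially compact, there is a subsequence along which $u_j(s_1)$ is $\tau$-convergent; iterating over $s_2,s_3,\dots$ and passing to the diagonal subsequence $(u_{j(k)})_k$ produces a map $u\colon D\to K$ with $u_{j(k)}(s)\xrightarrow{\tau}u(s)$ for every $s\in D$. Passing to a subsequence only decreases the left-hand side of~\eqref{eq:ArzAscolA2}, so $\limsup_k d(u_{j(k)}(s),u_{j(k)}(t))\le\omega(s,t)$ for all $s,t\in[0,T]$. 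By lower semicontinuity of $d$ with respect to $\tau$, for all $s,t\in D$,
\[
d\big(u(s),u(t)\big)\ \le\ \liminf_{k\to\infty}d\big(u_{j(k)}(s),u_{j(k)}(t)\big)\ \le\ \omega(s,t).
\]
The hypothesis on $\omega$ --- that it vanishes as its two arguments come together, the general case being reduced by a short chaining argument to the behaviour near the diagonal --- then shows that $u|_D$ is uniformly continuous, and since $(X,d)$ is complete it extends uniquely to a $d$-continuous map $u\colon[0,T]\to X$.

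\emph{Upgrading to pointwise $\tau$-convergence on all of $[0,T]$.} Fix $t\in[0,T]$. The sequence $(u_{j(k)}(t))_k$ lies in the $\tau$-sequentially compact set $K$, so it suffices to show that every $\tau$-convergent subsequence of it has limit $u(t)$; the whole sequence then $\tau$-converges to $u(t)$. So suppose $u_{j(k(l))}(t)\xrightarrow{\tau}\xi$. For each $s\in D$, lower semicontinuity of $d$ gives $d(\xi,u(s))\le\limsup_l d(u_{j(k(l))}(t),u_{j(k(l))}(s))\le\omega(s,t)$; letting $s\to t$ within $D$ and using the $d$-continuity of $u$ together with $\omega(s,t)\to0$ forces $\xi=u(t)$. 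Hence $u_{j(k)}(t)\xrightarrow{\tau}u(t)$ for every $t\in[0,T]$, with $u$ $d$-continuous, which is the assertion.

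\emph{Main obstacle.} Because $\tau$ is (in general) strictly weaker than the metric topology, one cannot simply invoke ordinary Arzel\`a--Ascoli in $(X,d)$: the limits produced by $\tau$-compactness need not be $d$-limits at all. The argument hinges entirely on the sequential lower semicontinuity of $d$ with respect to $\tau$, which simultaneously transfers the modulus bound to $u$ and pins down the $\tau$-limits of the diagonal subsequence to be the values of the already-constructed curve. A secondary point requiring some care is verifying that the stated condition on $\omega$ genuinely yields uniform continuity of $u|_D$, and hence a \emph{bona fide} $d$-continuous extension to all of $[0,T]$.
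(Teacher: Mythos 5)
Since the paper gives no proof of this lemma at all (it is quoted verbatim, up to notation, from \cite[Proposition~3.3.1]{ambrosio+gigli}), your argument has to stand on its own; and in substance it is the standard proof from that reference: diagonal extraction of $\tau$-limits on a countable dense set $D\subset[0,T]$, transfer of the modulus bound \eqref{eq:ArzAscolA2} to the limit map via sequential $\tau$-lower semicontinuity of $d$, extension of $u|_{D}$ to a $d$-continuous curve by completeness, and the sub-subsequence principle to identify the $\tau$-limit of $u_{j(k)}(t)$ at every $t\in[0,T]$. One point you handle well deserves emphasis: the sequential lower semicontinuity of $d$ with respect to $\tau$ is \emph{not} among the hypotheses as stated in the lemma, and without some such compatibility the statement is actually false (e.g.\ $X=\R$, $\tau$ the pullback of the usual topology under the bijection swapping $0$ and $1$, $u_{j}(t)=t-1/j$: the only pointwise $\tau$-limit fails to be $d$-continuous at the endpoints). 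It is part of the standing assumptions on the ``weak'' topology in \cite{ambrosio+gigli}, and you correctly single it out and verify it in exactly the two cases in which the lemma is used here: weak*-lower semicontinuity of the total variation norm, and continuity of $\wasser$ since it metrizes weak* convergence by \eqref{eq:soundofwater1}.

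The one misstep is your parenthetical claim that the literal hypothesis $\lim_{(s,t)\to(0,0)}\omega(s,t)=0$ can be upgraded ``by a short chaining argument'' to control near the whole diagonal. It cannot: chaining through the origin only gives $d(u_{j}(s),u_{j}(t))\leq\omega(s,0)+\omega(0,t)$, which is useless away from $(0,0)$, and an $\omega$ that is small only near the origin yields no equicontinuity elsewhere (constant sequences $u_{j}=v$ with $v$ $d$-discontinuous away from $0$ satisfy the hypotheses as literally written, yet no $d$-continuous limit exists when $\tau$ is Hausdorff). What your proof genuinely uses, in both the uniform-continuity/extension step and the identification $\xi=u(t)$, is that $\omega(s,t)\to 0$ as $|s-t|\to 0$ (the hypothesis in \cite[Proposition~3.3.1]{ambrosio+gigli}, there even allowing a countable exceptional set on the diagonal); the ``$(0,0)$'' in the statement should be read in that sense, and this is what the application supplies via $\omega(s,t)=\sqrt{2|s-t|}\,\sqrt{\Sact(\rho_{0})}$ from Lemma~\ref{lem:Holdreg}. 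With that reading your argument is sound; as written, the chaining remark papers over a condition that no argument can recover.
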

Its applicability in the present framework follows from the following lemma:
\begin{lemma} \label{lem:Holdreg}
The curve~$\rho^{h, \xi}(t)$ defined by~\eqref{eq:curvedefineA} satisfies for all~$0<t_{1},t_{2}<\infty$
the inequality
\beq \label{eq:hoeldbound}
d \big( \rho^{h, \xi}(t_{1}),\rho^{h, \xi}(t_{2}) \big)\leq \sqrt{2}\: \sqrt{|t_{2}-t_{1}|+h}\;
\sqrt{\Sact(\rho_{0})} \:.
\eeq
\end{lemma}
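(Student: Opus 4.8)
The plan is to reduce the estimate between arbitrary times $t_1<t_2$ to the sum of the single-step estimates~\eqref{i2} along the discrete sequence $(\rho^{h,\xi}_j)$, and then to bound that sum by a telescoping-plus-Cauchy-Schwarz argument. First I would set $k_1 := \lfloor t_1/h\rfloor$ and $k_2 := \lfloor t_2/h\rfloor$, so that $t_1$ lies in the interpolation interval $[k_1 h,(k_1+1)h]$ and similarly for $t_2$. Since each $\rho^{h,\xi}(t)$ on such an interval is a convex combination of the two adjacent discrete measures $\rho^{h,\xi}_{k}$ and $\rho^{h,\xi}_{k+1}$, the triangle inequality for $d$ (together with convexity of the metric ball, or in the Wasserstein case the bound~\eqref{eq:wasserbound} of Lemma~\ref{lem:wasserbound}) gives
\[
d\big(\rho^{h,\xi}(t_1),\rho^{h,\xi}(t_2)\big)\leq \sum_{j=k_1+1}^{k_2+1} d\big(\rho^{h,\xi}_{j},\rho^{h,\xi}_{j-1}\big).
\]
The number of terms here is at most $k_2-k_1+1 \leq (t_2-t_1)/h + 1$.

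Next I would insert the single-step bound~\eqref{i2}, namely $d(\rho^{h,\xi}_j,\rho^{h,\xi}_{j-1})\leq \sqrt{2h\,(\Sact(\rho^{h,\xi}_{j-1})-\Sact(\rho^{h,\xi}_j))}$, and apply the Cauchy--Schwarz inequality to the resulting sum:
\[
\sum_{j=k_1+1}^{k_2+1}\sqrt{2h\,\big(\Sact(\rho^{h,\xi}_{j-1})-\Sact(\rho^{h,\xi}_j)\big)}
\leq \sqrt{(k_2-k_1+1)}\;\sqrt{2h\sum_{j=k_1+1}^{k_2+1}\big(\Sact(\rho^{h,\xi}_{j-1})-\Sact(\rho^{h,\xi}_j)\big)}.
\]
The inner sum telescopes to $\Sact(\rho^{h,\xi}_{k_1})-\Sact(\rho^{h,\xi}_{k_2+1})$, which by the monotonicity~\eqref{i0} is at most $\Sact(\rho^{h,\xi}_0) = \Sact(\rho_0)$ and is in any case non-negative. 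Using $(k_2-k_1+1)h \leq t_2-t_1+h$ then yields exactly
\[
d\big(\rho^{h,\xi}(t_1),\rho^{h,\xi}(t_2)\big)\leq \sqrt{2}\,\sqrt{t_2-t_1+h}\,\sqrt{\Sact(\rho_0)},
\]
which is~\eqref{eq:hoeldbound}.

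I expect the only genuinely delicate point to be the first step, i.e.\ controlling the interpolation: one must be careful that the piecewise-linear interpolant~\eqref{eq:curvedefineA} of measures, when measured in $d$, is still subadditive along the partition points. For the Fréchet metric this is immediate since $d_{\meas(\F)}$ comes from a norm and the interpolant is literally a convex combination, so $d(\rho^{h,\xi}(t),\rho^{h,\xi}_{k})\leq d(\rho^{h,\xi}_{k+1},\rho^{h,\xi}_k)$ on the interval $[kh,(k+1)h]$. For the Wasserstein metric $W_p$ one invokes~\eqref{eq:wasserbound} with $\lambda$ the interpolation weight to get the analogous bound $W_p(\rho^{h,\xi}(t),\rho^{h,\xi}_k)\leq W_p(\rho^{h,\xi}_{k+1},\rho^{h,\xi}_k)$, and then proceeds with the triangle inequality as above. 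Everything else is bookkeeping with floor functions and the two already-established estimates~\eqref{i0} and~\eqref{i2}.
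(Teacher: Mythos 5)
Your proposal is correct and follows essentially the same route as the paper's own proof: triangle inequality along the partition points, control of the interpolated endpoints via the convex-combination bound (norm structure for the Fr\'{e}chet metric, inequality~\eqref{eq:wasserbound} for $W_p$), the single-step estimate~\eqref{i2}, Cauchy--Schwarz, telescoping, and the monotonicity~\eqref{i0}. Even the final floor-function bookkeeping, bounding the number of steps by $(t_2-t_1)/h+1$ so that the total length of the covered intervals is at most $t_2-t_1+h$, matches the paper's argument.
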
 
\begin{proof} It clearly suffices to consider the case~$t_{1}<t_{2}$. Then, by definition of~$\rho_{h,\xi}$,
\begin{align*}
&d \big( \rho^{h, \xi}(t_{1}),\rho^{h, \xi}(t_{2}) \big)  \leq d \Big(
\big( \lfloor\tfrac{t_{1}}{h}+1\rfloor -\tfrac{t_{1}}{h} \big)\: \rho_{\lfloor\frac{t_{1}}{h} \rfloor}^{h,\xi}
+ \big( \tfrac{t_{1}}{h}-\lfloor\tfrac{t_{1}}{h}\rfloor \big) \: \rho_{\lfloor\frac{t_{1}}{h}+1\rfloor}^{h,\xi},\;\rho_{\lfloor\frac{t_{1}}{h}+1\rfloor}^{h,\xi} \Big) \\ 
&\;\; + d \Big( \rho_{\lfloor\frac{t_{2}}{h}\rfloor}^{h,\xi},\;
\big( \lfloor\tfrac{t_{2}}{h}+1\rfloor -\tfrac{t_{2}}{h} \big)\: \rho_{\lfloor\frac{t_{2}}{h}\rfloor}^{h,\xi}
+ \big( \tfrac{t_{2}}{h}-\lfloor\tfrac{t_{2}}{h}\rfloor \big)\: \rho_{\lfloor\frac{t_{2}}{h}+1\rfloor}^{h,\xi} \Big) 
+ \!\!\!\!\sum_{j=\lfloor\frac{t_{1}}{h}+1\rfloor}^{\lfloor\frac{t_{2}}{h}\rfloor -1}\!\!\!
d\big( \rho_{j}^{h,\xi},\rho_{j+1}^{h,\xi} \big) \\
& \leq  \Big( \lfloor\tfrac{t_{1}}{h}+1\rfloor -\tfrac{t_{1}}{h} \Big)\:
d\big(\rho_{\lfloor\frac{t_{1}}{h}\rfloor}^{h,\xi},\: \rho_{\lfloor\frac{t_{1}}{h}+1\rfloor}^{h,\xi} \big)
+ \Big( \tfrac{t_{2}}{h}-\lfloor\tfrac{t_{2}}{h}\rfloor \Big)\: d \big( \rho_{\lfloor\frac{t_{2}}{h}\rfloor}^{h,\xi},\rho_{\lfloor\frac{t_{2}}{h}+1\rfloor}^{h,\xi} \big) \\
&\qquad \qquad
+ \sum_{j=\lfloor\frac{t_{1}}{h}+1\rfloor}^{\lfloor\frac{t_{2}}{h}\rfloor -1}
d \big( \rho_{j}^{h,\xi},\rho_{j+1}^{h,\xi} \big) \:,
\end{align*}
where the last step is trivial for~$d$ being the Fr\'{e}chet metric and follows from 
Lemma~\ref{lem:wasserbound} in the case of the Wasserstein metric.
It follows that
\begin{align*}
&d \big( \rho^{h, \xi}(t_{1}),\rho^{h, \xi}(t_{2}) \big) \leq  \sum_{j=\lfloor\frac{t_{1}}{h}\rfloor}^{\lfloor\frac{t_{2}}{h}+1\rfloor -1}d(\rho_{j}^{h,\xi},\rho_{j+1}^{h,\xi}) \\
&\!\!\overset{\eqref{i2}}{\leq} \sum_{j=\lfloor\frac{t_{1}}{h}\rfloor}^{\lfloor\frac{t_{2}}{h}+1\rfloor -1}
\sqrt{ 2h\: \big( \Sact(\rho_{j}^{h,\xi})-\Sact(\rho_{j+1}^{h,\xi}) \big) }\\ 
& \leq \bigg( \sum_{j=\lfloor\frac{t_{1}}{h}\rfloor}^{\lfloor\frac{t_{2}}{h}+1\rfloor -1}1\bigg)^{\frac{1}{2}}\bigg(\sum_{j=\lfloor\frac{t_{1}}{h}\rfloor}^{\lfloor\frac{t_{2}}{h}+1\rfloor -1} 2h \:\big(\Sact(\rho_{j}^{h,\xi})-\Sact(\rho_{j+1}^{h,\xi}) \big) \bigg)^{\frac{1}{2}} \:.
\end{align*}
The last sum is telescopic. Moreover, using that the sequence of actions
is monotone decreasing~\eqref{i0} and non-negative, we conclude that
\[ d \big( \rho^{h, \xi}(t_{1}),\rho^{h, \xi}(t_{2}) \big)
\leq \sqrt{2} \: \sqrt{ (t_{2}-t_{1})+h} \: \sqrt{\Sact(\rho_{0})} \:. \]
 This completes the proof.
\end{proof}

We are now ready for proving our first existence result.
\begin{proposition}\label{thm:main1}
For any~$\xi \geq 0$, there is a H\"older continuous flow
\[ \varrho^\xi \in\hold^{0,\frac{1}{2}} \big( [0,\infty), (\meas_1(\F),d) \big) \]
with~$\varrho(0)=\rho_{0}$. Setting
\[ 
t_{\max} := \inf \big\{ t \in \R^+ \:\big|\: \Sact\big( \varrho^\xi(t) \big) = \inf_{\tau \in \R^+}
\Sact \big( \varrho^\xi(\tau) \big) \big\} \;\in\; \R^+ \cup \{\infty\} \:, \]
the action is strictly monotone decreasing up to~$t_{\max}$, i.e.\
\beq \label{monotone}
\Sact\big( \varrho^\xi(t_1) \big) > \Sact\big( \varrho^\xi(t_2) \big) \qquad
\qquad \text{for all~$0 \leq t_1 < t_2 \leq t_{\max}$}\:.
\eeq
Moreover, the flow curve satisfies for all~$0 \leq t_1 < t_2$ the H\"older bound
\[ 
d \big( \varrho^{\xi}(t_{1}),\varrho^{\xi}(t_{2}) \big)\leq \sqrt{2}\: \sqrt{t_{2}-t_{1}}\;
\sqrt{\Sact(\rho_{0})} \:. \]
\end{proposition}
\begin{proof}
\emph{Case 1. $d=\wasser$.} Let~$[T_{1},T_{2}]\subset [0,\infty)$ be a compact interval. We note that~$(\meas_{1}(\F),\wasser)$ is a compact, hence complete, metric space by the Banach-Alaoglu theorem. We aim to apply Lemma~\ref{lem:ArzAsc} to the sequence~$(\rho^{\xi, 1/j})_{j\in\mathbb{N}}$
together with~$d=\wasser$ and~$\tau$ being the weak*-topology on~$\meas_{1}(\F)$. Then~$\rho^{\xi, 1/j}(t)\in K:=\meas_{1}(\F)$ for all~$j\in\mathbb{N}$, whereby~\eqref{eq:ArzAscolA1} is satisfied. 
Moreover, the estimate~\eqref{eq:hoeldbound} yields that~\eqref{eq:ArzAscolA2} is fulfilled with~$\omega(s,t):=\sqrt{2|s-t|}$. Consequently,
Lemma~\ref{lem:ArzAsc} together with~\eqref{eq:soundofwater1}
gives the existence of a $\wasser$-continuous limit map~$\rho^{\xi}\colon [T_{1},T_{2}]\to\meas_{1}(\F)$ such that~$\rho^{\xi, 1/j(k)}(t) \to \varrho^{\xi}(t)$ with respect to~$d=\wasser$ for every~$t \in [T_{1},T_{2}]$. For all~$T_{1}\leq t_{1}\leq t_{2}\leq T_{2}$ we thus obtain 
\begin{align*}
\wasser(\varrho^{\xi}(t_{1}),\varrho^{\xi}(t_{2})) & \leq \limsup_{k\to\infty}\Big(\wasser \big(
\varrho^{\xi}(t_{1}),\rho^{\xi, 1/j(k)}(t_{1}) \big)+\wasser \big( \rho^{\xi, 1/j(k)}(t_{1}),\rho^{\xi, 1/j(k)}(t_{2}) \big
)\big. \\
& \qquad\qquad\qquad +\wasser \big( \rho^{\xi, 1/j(k)}(t_{2}),\varrho^{\xi}(t_{2}) \big) \Big)
\leq \sqrt{2}\: \sqrt{t_{2}-t_{1}}\;
\sqrt{\Sact(\rho_{0})}
\end{align*}
by everywhere convergence and the estimate~\eqref{eq:hoeldbound}.

In order to construct the requisite curve as claimed in Proposition~\ref{thm:main1}, we cover~$[0,\infty)$ by intervals~$I_{\ell}:=[\ell-1,\ell+1]$, $\ell\in\mathbb{N}$. By what has been said above, we may choose a sequence~$(j_{k}^{(1)})$ such that, for a certain limit curve~$\varrho^{\xi}\in\hold^{0,1/2}([0,2];\meas_{1}(\F))$ we have 
\begin{align*}
\rho^{\xi, 1/j_k^{(1)}}\to\varrho^{\xi}
\end{align*}
with respect to~$\wasser$ on~$[0,2]$ as~$k\to\infty$. Next choose a subsequence~$(j_{k}^{(2)})\subset(j_{k}^{(1)})$ such that 
\begin{align*}
\rho^{\xi, 1/j_{k}^{(2)}} \to\overline{\varrho}^{\xi}
\end{align*}
for a certain limit curve~$\overline{\varrho}^{\xi}\in\hold^{0,1/2}([1,3];\meas_{1}(\F))$. Clearly, since~$(h_{k}^{2})\subset(h_{k}^{1})$, we must have~$\varrho=\overline{\varrho}$ on~$[1,2]$, and then define~$\varrho^{\xi}:=\overline{\varrho}^{\xi}$ on~$[2,3]$. Proceeding iteratively in this way and passing to the diagonal sequence, we obtain a sequence~$(j_{l})$ with~$j_{l}\to\infty$ and a curve~$\varrho^{\xi}\in\hold([0,\infty);(\meas_{1}(\F),\wasser))\cap\hold^{0,1/2}([0,\infty);(\meas_{1}(\F),\wasser))$ such that for any compact subset~$I\subset [0,\infty)$ there holds
\[ 
\rho^{\xi, 1/j_{l}}(t)\to \varrho(t)\;\;\text{for all~$t\in I$ in~$(\meas_{1}(\F),\wasser)$} \:. \]

\vspace*{0.5em}

\noindent
\emph{Case 2. $d=d_{\meas(\F)}$.} In this situation, we let~$d=d_{\meas(\F)}$ and again let~$\tau$ be the weak*-topology on~$\meas_{1}(\F)$. Then~$K:=\meas_{1}(\F)$ is compact for~$\tau$.  Arguing as above, specifically applying~\eqref{eq:hoeldbound} to~$d=d_{\meas(\F)}$, we obtain the existence of a limit map~$\varrho^{\xi}
\in\hold([0,\infty);(\meas_{1}(\F);d_{\meas(\F)}))$ such that, for some sequence~$(j_{l})$ with~$j_{l}\to\infty$ as~$l\to\infty$, $\rho^{\xi, 1/j_{l}}\to \varrho$
 in~$(\meas_{1}(\F),\wasser)$ (not in~$(\meas_{1}(\F),d_{\meas(\F)})$),
 locally uniformly in time (i.e.\ uniformly in~$t$ in a compact subset of~$[0, \infty)$).

Let us note that we have~$\varrho^{\xi}\in\hold_{\locc}^{0,1/2}([0,\infty);(\meas_{1}(\F),d_{\meas(\F)}))$ indeed: Let~$0\leq T_{1}\leq T_{2}<\infty$, so that~$\rho^{\xi, 1/j_{l}}(t)\stackrel{*}{\rightharpoonup}\varrho^{\xi}(t)$ for all~$t\in [T_{1},T_{2}]$ since~$\wasser$ metrizes weak*-convergence on~$\meas(\F)$. Since in the present setting~\eqref{eq:hoeldbound} is available for~$d=d_{\meas(\F)}$, 
we conclude for~$t, t' \in [T_{1},T_{2}]$ by weak*-lower semicontinuity of the total variation norm
\begin{align*} 
\|\varrho^{\xi}(t)-\varrho^{\xi}(t')\|_{\meas(\F)} & \leq \liminf_{l\to\infty} \|\varrho_{1/j_{l}}^{\xi}(t)-\varrho_{1/j_{l}}^{\xi}(t') \|_{\meas(\F)} 
\stackrel{\eqref{eq:hoeldbound}}{\leq} \sqrt{2 \Sact(\rho_0)}\: |t-t'|^{\frac{1}{2}} \:. 
\end{align*}
In this sense, the passage to the weak*-metric is only required to obtain the existence of such a curve, whereas the H\"{o}lder regularity for~$d_{\meas(\F)}$ survives from Lemma~\ref{lem:Holdreg} by lower semicontinuity. This concludes the proof of Proposition~\ref{thm:main1}.
\end{proof}

Note that the previous theorem holds both in the case of the Wasserstein metric and the Fr{\'e}chet
metric on~$\meas_{1}(\F)$. However, the flow in these two cases has quite different properties,
as will be illustrated in Section~\ref{secfurtherex} by a few examples.

\subsection{A Lipschitz curve in the case~$\xi>0$} \label{seclip}
The introduction of a positive penalization parameter~$\xi>0$ in \eqref{eq:varprinB} is motivated by the fact that it gives us curves of finite length in~$\meas(\F)$. In order to see this, we iterate ~\eqref{i1} and use
again that~${\Sact}$ is monotone decreasing. We thus obtain
\beq \label{totvarest}
\sum_{j={n+1}}^N d \big( \rho_j^{h,\xi},\rho_{j-1}^{h,\xi} \big) \\
\leq \frac{1}{\xi} \:\big( \Sact(\rho_n^{h,\xi})-\Sact(\rho_N^{h,\xi}) \big)\:,
\eeq
showing that the length of the discrete curve is bounded by the total change of the action.
This estimate suggests that it is useful to use the action itself for the parametrization
of the curve. As we shall see, it is of advantage to do so already for the discrete curve, before
taking the limit~$h \searrow 0$ (as will be explained in Remark~\ref{remrepar} below).
To this end, given~$h,\xi>0$, we set
\beq \label{sjdef}
s_j = \Sact \big( \rho^{h, \xi}_j \big) \qquad \text{with~$j \in \N$} \:.
\eeq
Then the sequence~$(s_j)_{j \in \N}$ is monotone decreasing, $s_j \geq s_{j+1} \geq \cdots$.
Moreover, the estimate~\eqref{totvarest} shows that the measures~$\rho_j^{h,\xi}$ converge
in the limit~$j \rightarrow \infty$,
\[ \rho^{h, \xi}_j \rightarrow \rho^{h, \xi}_\infty \:, \]
and that the action is continuous, i.e.\
\[ s_j \searrow \Sact \big( \rho^{h, \xi}_\infty \big) \:. \]

We now define a continuous curve by interpolation,
\beq \label{interpolate}
\tilde{\rho}^{h, \xi}(s) := \frac{s_j-s}{s_j-s_{j+1}}\: \rho^{h, \xi}_j + \frac{s-s_{j+1}}{s_j-s_{j+1}}\: \rho^{h, \xi}_{j+1}  \qquad \text{if~$s \in [s_{j+1}, s_j]$}\:.
\eeq
This formula can be used even if~$s_j=s_{j+1}$, in which case
\[ \tilde{\rho}^{h, \xi}(s) = \rho^{h, \xi}_j = \rho^{h, \xi}_{j+1} \:. \]
In this way, we obtain a continuous curve of measures
\[ \tilde{\rho}^{h, \xi} \::\: \big[ \Sact \big( \rho^{h, \xi}_\infty \big), \Sact \big( \rho_0 \big) \big]
\rightarrow \meas_{1}(\F) \:. \]

\begin{lemma} \label{lemmarep} Assume that the Lagrangian is H\"older continuous,
$\L \in \hold^{0,\alpha}(\F \times \F, \R^+_0)$. Then there is a constant~$C>0$ (which depends only on~$\F$
and~$\L$) such that for all~$s,s' \in \big[ \Sact \big( \rho^{h, \xi}_\infty \big), \Sact \big( \rho_0 \big) \big]$
and~$h>0$,
\[ W_p \big( \tilde{\rho}^{h, \xi}(s) \big), \tilde{\rho}^{h, \xi}(s') \big)
\leq \frac{1}{\xi} \Big( |s-s'| + C \,h^\frac{\alpha}{2} \Big) \:. \]
\end{lemma}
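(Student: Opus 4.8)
The plan is to estimate the Wasserstein distance between two points on the reparametrized curve $\tilde\rho^{h,\xi}$ by comparing it with the original interpolated curve $\rho^{h,\xi}$, using the a priori bound \eqref{i1} and the Hölder continuity of $\Sact$ with respect to $W_p$ from Lemma~\ref{lem:weakconv}. Without loss of generality, assume $s' \le s$, so both lie in $[\Sact(\rho^{h,\xi}_\infty),\Sact(\rho_0)]$, and pick the indices $m \ge n$ with $s \in [s_n, s_{n-1}]$ (say) and $s' \in [s_{m}, s_{m-1}]$, i.e.\ with $n \le m$; the cases where $s$ and $s'$ fall in the same subinterval, or in adjacent ones, will be handled by the same triangle-inequality bookkeeping. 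First I would establish the ``one-step'' estimate: for $s \in [s_{j+1},s_j]$, the definition \eqref{interpolate} is a convex combination with weight $(s_j - s)/(s_j - s_{j+1})$ on $\rho_{j+1}^{h,\xi}$, so by \eqref{eq:wasserbound} of Lemma~\ref{lem:wasserbound},
\[
W_p\big(\tilde\rho^{h,\xi}(s),\rho_j^{h,\xi}\big) \le \frac{s_j - s}{s_j - s_{j+1}}\, W_p\big(\rho_{j+1}^{h,\xi},\rho_j^{h,\xi}\big),
\]
and by \eqref{i1} this is $\le \frac{1}{\xi}(s_j - s)$ — the telescoping factor $(s_j - s_{j+1})$ in the denominator of \eqref{interpolate} cancels exactly against the bound $d(\rho_{j+1}^{h,\xi},\rho_j^{h,\xi}) \le \frac{1}{\xi}(s_j - s_{j+1})$. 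This is the key point: the action-based parametrization is $1/\xi$-Lipschitz along the curve, segment by segment, independently of $h$.

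Next I would chain these one-step estimates. Write, using the triangle inequality,
\[
W_p\big(\tilde\rho^{h,\xi}(s),\tilde\rho^{h,\xi}(s')\big) \le W_p\big(\tilde\rho^{h,\xi}(s),\rho_n^{h,\xi}\big) + \sum_{j=n}^{m-1} W_p\big(\rho_j^{h,\xi},\rho_{j+1}^{h,\xi}\big) + W_p\big(\rho_m^{h,\xi},\tilde\rho^{h,\xi}(s')\big),
\]
and bound the first and last terms by $\frac{1}{\xi}(s_n - s)$ and $\frac{1}{\xi}(s' - s_m)$ respectively (one should double-check signs so everything is nonnegative), and the middle sum by $\frac{1}{\xi}\sum_{j=n}^{m-1}(s_j - s_{j+1}) = \frac{1}{\xi}(s_n - s_m)$ using \eqref{i1} again. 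Adding these telescopes to $\frac{1}{\xi}|s - s'|$. So far this gives the bound $\frac{1}{\xi}|s-s'|$ with no $h$-term at all — which suggests the $Ch^{\alpha/2}$ term in the statement is a safety margin coming from a slightly lossier but more robust route, or is needed for the endpoint subintervals where $\tilde\rho^{h,\xi}(s)$ interpolates between grid measures that are themselves $h$-dependent. The likely intended argument for the extra term: combine the purely geometric one-step estimate above on the "interior" with an application of Lemma~\ref{lemmarep}'s companion Hölder-in-$t$ bound \eqref{eq:hoeldbound} (which carries the $\sqrt{t_2 - t_1 + h}$, hence an $h^{1/2}$ defect) at the level of the original time-parametrized curve, translating $|s - s'|$ increments into $t$-increments via $|s_j - s_{j+1}| \le C\, W_p(\rho_j^{h,\xi},\rho_{j+1}^{h,\xi})^\alpha$ from \eqref{hoeles} and then \eqref{i2}; the step count $\lceil |t-t'|/h\rceil$ then produces a leftover $h^{\alpha/2}$.

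The main obstacle is the correct handling of the two boundary subintervals in which $s$ and $s'$ sit: there $\tilde\rho^{h,\xi}$ is an interpolant, and one must ensure the weights in \eqref{interpolate} combine with \eqref{i1}/\eqref{i2} to give a clean bound rather than something that degenerates when $s_j = s_{j+1}$ (in which case, as noted in the text, the interpolant is constant and the estimate is trivial). I expect that after carefully splitting into the cases (i) $s, s'$ in the same subinterval, (ii) adjacent subintervals, (iii) general $n < m$, the convexity estimate \eqref{eq:wasserbound} plus the telescoping of \eqref{i1} yields the term $\frac{1}{\xi}|s - s'|$ cleanly, and the $\frac{C}{\xi}h^{\alpha/2}$ absorbs the discrepancy between the two natural parametrizations (by $t$ versus by $s$) that only closes up exactly in the limit $h \searrow 0$. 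Once all cases are in hand, taking the infimum over couplings is automatic since every step used properties of $W_p$ directly.
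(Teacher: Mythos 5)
Your fallback route is, in substance, the paper's proof: bracket $s$ and $s'$ by the grid values $s_j$ and $s_k$ of their subintervals, telescope \eqref{i1} over the interior to get the $\tfrac{1}{\xi}|s_j-s_k|$ contribution, and control the boundary pieces by $W_p\big(\tilde{\rho}^{h,\xi}(s),\tilde{\rho}^{h,\xi}(s_j)\big)\leq W_p\big(\rho^{h,\xi}_j,\rho^{h,\xi}_{j+1}\big)\leq\sqrt{2h\,\Sact(\rho_0)}$ via \eqref{i2}, together with $|s_j-s|=\big|\Sact\big(\tilde{\rho}^{h,\xi}(s_j)\big)-\Sact\big(\tilde{\rho}^{h,\xi}(s)\big)\big|\leq C\,(2h\,\Sact(\rho_0))^{\alpha/2}$ via \eqref{hoeles}. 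That is exactly where the H\"older hypothesis on $\L$ and the $h^{\alpha/2}$ defect enter, as you correctly surmised at the end of your plan.

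Your primary route --- the sharper, $h$-free bound $\tfrac{1}{\xi}|s-s'|$ --- does not quite close as written. First, your chain terminates at $\rho^{h,\xi}_m$, one grid point \emph{past} $s'$, so the three estimates add up to $\tfrac{1}{\xi}\big[(s-s_n)+(s_n-s_m)+(s'-s_m)\big]$, which exceeds $\tfrac{1}{\xi}(s-s')$ by $\tfrac{2}{\xi}(s'-s_m)$; to telescope exactly you must stop at $\rho^{h,\xi}_{m-1}$ and bound the last leg by $\tfrac{1}{\xi}(s_{m-1}-s')$. Second, when $s$ and $s'$ lie in the same subinterval of \eqref{interpolate}, the exact Lipschitz bound requires the two-parameter estimate $W_p\big(\lambda_1\mu+(1-\lambda_1)\nu,\;\lambda_2\mu+(1-\lambda_2)\nu\big)\leq|\lambda_1-\lambda_2|\,W_p(\mu,\nu)$, which is not what \eqref{eq:wasserbound} states (routing through either endpoint gives an overshoot of the same kind as above). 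Third, your argument leans on the full linear-in-$\lambda$ strength of \eqref{eq:wasserbound}; for $p>1$ the mixing-of-couplings construction naturally yields the exponent $\lambda^{1/p}$ rather than $\lambda$, so the segmentwise $1/\xi$-Lipschitz property is more delicate than it looks, whereas the paper only uses the weaker consequence $W_p(\tilde{\rho}^{h,\xi}(s),\rho^{h,\xi}_j)\leq W_p(\rho^{h,\xi}_j,\rho^{h,\xi}_{j+1})$. None of this endangers the lemma as stated: the overshoot and same-interval defects are of size at most one action step, hence bounded by $C(2h\,\Sact(\rho_0))^{\alpha/2}$ by \eqref{hoeles} combined with \eqref{i2}, and absorbing them into the $C h^{\alpha/2}/\xi$ term lands you precisely on the paper's argument.
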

\begin{proof} Given~$s$ and~$s'$ we choose~$j$ and~$k$ with
\[ s \in [s_{j+1}, s_j] \qquad \text{and} \qquad s' \in [s_{k+1}, s_k] \:. \]
Applying the triangle inequality as well as~\eqref{i1} yields
\begin{align*}
&W_p \big( \tilde{\rho}^{h, \xi}(s) \big), \tilde{\rho}^{h, \xi}(s') \big)
\leq W_p \big( \tilde{\rho}^{h, \xi}(s) \big), \tilde{\rho}^{h, \xi}(s_j) \big)
+ \frac{1}{\xi}\: \big| s_j - s_k \big| + W_p \big( \tilde{\rho}^{h, \xi}(s_k), \tilde{\rho}^{h, \xi}(s') \big) \\
&\leq W_p \big( \tilde{\rho}^{h, \xi}(s) \big), \tilde{\rho}^{h, \xi}(s_j) \big)
+ \frac{1}{\xi}\: \big| s_j - s \big| \\
&\quad\: + \frac{1}{\xi}\: \big| s - s' \big| + \frac{1}{\xi}\: \big| s' - s_k \big|
+ W_p \big( \tilde{\rho}^{h, \xi}(s_k), \tilde{\rho}^{h, \xi}(s') \big) \:.
\end{align*}

It remains to estimate the first two summands (the last summands can be treated similarly).
In order to estimate the first summand, we first apply Lemma~\ref{lem:wasserbound},
\[ W_p \big( \tilde{\rho}^{h, \xi}(s) \big), \tilde{\rho}^{h, \xi}(s_j) \big)
 \leq W_p(\rho^{h, \xi}_j, \rho^{h, \xi}_{j+1})
\leq d \big( \rho_j^{h,\xi},\rho_{j+1}^{h,\xi} \big) \overset{\eqref{i2}}{\leq} \sqrt{ 2 h\, \Sact(\rho_0) }\:. \]
The second summand can be estimated using~\eqref{lipes} (in which case we choose~$\alpha=1$)
or~\eqref{hoeles} by
\[ \frac{1}{\xi}\: \big| s_j - s \big| = 
\frac{1}{\xi}\: \Big| \Sact \big( \tilde{\rho}^{h, \xi}(s_j) \big) - \Sact \big( \tilde{\rho}^{h, \xi}(s) \big) \Big| 
\leq \frac{C}{\xi}\: d\big( \tilde{\rho}^{h, \xi}(s_j), \tilde{\rho}^{h, \xi}(s) \big)^\alpha \:. \]
Again Applying Lemma~\ref{lem:wasserbound} and~\eqref{hoeles} gives
\[ \frac{1}{\xi}\: \big| s_j - s \big| \leq \frac{C}{\xi}\: d \big( \rho_j^{h,\xi},\rho_{j+1}^{h,\xi} \big)^\alpha
\leq \frac{C}{\xi}\: \big( 2 h\, \Sact(\rho_0) \big)^\frac{\alpha}{2}\:. \]
This concludes the proof.
\end{proof}

After these preparations, we can take the limit~$h \searrow 0$ to obtain the following result.
\begin{proposition}\label{prop:LipRepara}
By iteratively choosing subsequences and taking the limit of the diagonal sequence,
one obtains a curve of measures denoted by
\beq \label{tilrhodef}
\tilde{\varrho}^{\xi} \::\: \big[ \Sact^\xi_{\min}, \Sact \big( \rho_0 \big) \big] \rightarrow \meas_{1}(\F) \:,
\eeq
where
\[ \Sact^\xi_{\min} := \liminf_{h \searrow 0} \Sact \big(
\tilde{\rho}^{h, \xi}_\infty \big) \:. \]
The curve~$\tilde{\varrho}^\xi(s)$ is Lipschitz continuous in the sense that
\begin{align}\label{eq:LipRePara}
 d\big( \tilde{\varrho}^\xi(s_2),\tilde{\varrho}^\xi(s_1) \big) \leq \frac{1}{\xi} \: \big( s_2 - s_1 \big) \qquad
\text{for all~$\Sact^\xi_{\min} \leq s_1 < s_2 \leq \Sact(\varrho_0)$}.
\end{align}

Moreover, there is a sequence~$h_\ell$ with~$h_\ell \searrow 0$ such that the
end points of the corresponding piecewise linear curves converge, i.e.\
\beq \label{basicconv}
\tilde{\rho}^{h_\ell, \xi} \Big( \Sact \big( \rho^{h_\ell, \xi}_\infty \big) \Big) 
\overset{\ell \rightarrow \infty}{\longrightarrow} \tilde{\varrho}^\xi \big( \Sact^\xi_{\min} \big) \:.
\eeq
\end{proposition}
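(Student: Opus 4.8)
The plan is to run the Arzel\`a--Ascoli argument of Lemma~\ref{lem:ArzAsc} once more, this time with the action value~$s$ playing the role of ``time'' on the \emph{bounded} interval~$[0,\Sact(\rho_0)]$ and with the almost-Lipschitz estimate of Lemma~\ref{lemmarep} serving as modulus of continuity; because the interval is bounded, a single application suffices and no covering/diagonal argument over the time axis is needed (in contrast with Theorem~\ref{thm:main1}). The only genuinely new point is that the left endpoint~$\Sact(\rho^{h,\xi}_\infty)$ of the domain of~$\tilde\rho^{h,\xi}$ depends on~$h$, and this has to be reconciled with the~$\liminf$ defining~$\Sact^\xi_{\min}$.

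First I would normalise the domains. Recalling from \eqref{totvarest} that~$\rho^{h,\xi}_j\to\rho^{h,\xi}_\infty$ in~$(\meas_1(\F),d)$ and~$s_j\searrow\Sact(\rho^{h,\xi}_\infty)$, so that (as already noted in the text) $\tilde\rho^{h,\xi}$ is continuous on the closed interval~$[\Sact(\rho^{h,\xi}_\infty),\Sact(\rho_0)]$ with left endpoint value~$\rho^{h,\xi}_\infty$, I extend each~$\tilde\rho^{h,\xi}$ to the common interval~$[0,\Sact(\rho_0)]$ by setting it equal to the constant~$\rho^{h,\xi}_\infty$ on~$[0,\Sact(\rho^{h,\xi}_\infty)]$. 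The estimate of Lemma~\ref{lemmarep} survives this extension: if~$s\leq\Sact(\rho^{h,\xi}_\infty)\leq s'$ then~$W_p(\tilde\rho^{h,\xi}(s),\tilde\rho^{h,\xi}(s'))=W_p(\tilde\rho^{h,\xi}(\Sact(\rho^{h,\xi}_\infty)),\tilde\rho^{h,\xi}(s'))\leq\tfrac1\xi(|\Sact(\rho^{h,\xi}_\infty)-s'|+C h^{\alpha/2})\leq\tfrac1\xi(|s-s'|+Ch^{\alpha/2})$, and the remaining cases are trivial. (In the Fr\'echet case one has the sharper fact that the piecewise-linear curve~$\tilde\rho^{h,\xi}$ has slope~$\|\rho^{h,\xi}_j-\rho^{h,\xi}_{j+1}\|_{\meas(\F)}/(s_j-s_{j+1})\leq1/\xi$ on each segment by~\eqref{i1}, hence is exactly~$\tfrac1\xi$-Lipschitz in~$s$, with no error term.)

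Next, I fix a sequence~$g_m\searrow0$ realising the~$\liminf$, i.e.\ $\Sact(\rho^{g_m,\xi}_\infty)\to\Sact^\xi_{\min}$, and apply Lemma~\ref{lem:ArzAsc} to the maps~$u_m:=\tilde\rho^{g_m,\xi}\colon[0,\Sact(\rho_0)]\to\meas_1(\F)$ with~$X=\meas_1(\F)$, $d=W_p$ (so~$X$ is compact, hence complete, by Banach--Alaoglu), $K=X$, $\tau$ the weak*-topology, and~$\omega(s,s'):=\tfrac1\xi|s-s'|$; condition~\eqref{eq:ArzAscolA2} holds since the error term~$Cg_m^{\alpha/2}\to0$. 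This yields a subsequence~$h_\ell:=g_{m(\ell)}\searrow0$ and a~$W_p$-continuous map~$\tilde\varrho^\xi\colon[0,\Sact(\rho_0)]\to\meas_1(\F)$, whose restriction to~$[\Sact^\xi_{\min},\Sact(\rho_0)]$ is the curve~\eqref{tilrhodef}, with~$\tilde\rho^{h_\ell,\xi}(s)\wstar\tilde\varrho^\xi(s)$ for every~$s$ and still~$\Sact(\rho^{h_\ell,\xi}_\infty)\to\Sact^\xi_{\min}$. The Lipschitz bound~\eqref{eq:LipRePara} then follows by passing to the limit in the triangle inequality~$W_p(\tilde\varrho^\xi(s_1),\tilde\varrho^\xi(s_2))\leq W_p(\tilde\varrho^\xi(s_1),\tilde\rho^{h_\ell,\xi}(s_1))+W_p(\tilde\rho^{h_\ell,\xi}(s_1),\tilde\rho^{h_\ell,\xi}(s_2))+W_p(\tilde\rho^{h_\ell,\xi}(s_2),\tilde\varrho^\xi(s_2))$, using that~$W_p$ metrizes weak*-convergence, the extended estimate above, and~$h_\ell\to0$; in the Fr\'echet case one instead passes to the limit in the exact~$\tfrac1\xi$-Lipschitz bound using weak*-lower semicontinuity of~$\|\cdot\|_{\meas(\F)}$. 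Finally, for~\eqref{basicconv} observe that~$\tilde\rho^{h_\ell,\xi}(\Sact(\rho^{h_\ell,\xi}_\infty))=\rho^{h_\ell,\xi}_\infty$ and estimate~$W_p(\rho^{h_\ell,\xi}_\infty,\tilde\varrho^\xi(\Sact^\xi_{\min}))\leq W_p(\tilde\rho^{h_\ell,\xi}(\Sact(\rho^{h_\ell,\xi}_\infty)),\tilde\rho^{h_\ell,\xi}(\Sact^\xi_{\min}))+W_p(\tilde\rho^{h_\ell,\xi}(\Sact^\xi_{\min}),\tilde\varrho^\xi(\Sact^\xi_{\min}))$; the first term tends to~$0$ by the extended estimate since~$|\Sact(\rho^{h_\ell,\xi}_\infty)-\Sact^\xi_{\min}|+Ch_\ell^{\alpha/2}\to0$, and the second by pointwise weak*-convergence at~$s=\Sact^\xi_{\min}$.

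The main obstacle I expect is the bookkeeping around the moving left endpoint and the~$\liminf$: one must choose the subsequences in the right order (first realise the~$\liminf$ of~$\Sact(\rho^{h,\xi}_\infty)$, then run Arzel\`a--Ascoli) so that the limiting curve is genuinely defined down to~$\Sact^\xi_{\min}$ and~\eqref{basicconv} holds with that value; and, exactly as in Theorem~\ref{thm:main1}, one must keep track of the fact that the compactness used to extract the limit is the Wasserstein (equivalently weak*) one, the Fr\'echet regularity surviving only a posteriori by lower semicontinuity.
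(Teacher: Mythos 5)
Your proposal is correct, and it reaches the same conclusion by a slightly different organization of the limit passage. The paper works directly on the $h$-dependent domains: it fixes a countable dense set $(s_\ell)\subset(\Sact^\xi_{\min},\Sact(\rho_0)]$, extracts weak*-convergent subsequences at each $s_\ell$ by hand, diagonalizes, passes~\eqref{i1} to the limit on this dense set and extends by continuity; the endpoint convergence~\eqref{basicconv} is then obtained a posteriori from the estimate of Lemma~\ref{lemmarep} via a triangle inequality through the points $s_\ell$, at the cost of extracting yet another subsequence as $s_\ell\searrow\Sact^\xi_{\min}$. You instead (i) extend each $\tilde\rho^{h,\xi}$ constantly to the common domain $[0,\Sact(\rho_0)]$ (checking, correctly, that the almost-Lipschitz bound of Lemma~\ref{lemmarep} and the exact $\tfrac1\xi$-bound in the Fr\'echet case survive the extension, and that the left endpoint value is $\rho^{h,\xi}_\infty$ by~\eqref{totvarest}), (ii) pre-select a sequence realizing the $\liminf$ defining $\Sact^\xi_{\min}$, and (iii) invoke Lemma~\ref{lem:ArzAsc} once with the action as the parameter and $\omega(s,s')=\tfrac1\xi|s-s'|$, the $Ch^{\alpha/2}$ error being absorbed by the $\limsup$ in~\eqref{eq:ArzAscolA2}. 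This buys you a cleaner proof of~\eqref{basicconv}: because the endpoint actions already converge to $\Sact^\xi_{\min}$ along your chosen sequence, the convergence of the endpoints holds along the whole extracted subsequence with no further extraction, whereas the paper needs the extra estimate~\eqref{Ch} and a final subsequence. Both arguments rest on the same two ingredients (the $\tfrac1\xi$-control from~\eqref{i1}/Lemma~\ref{lemmarep} and weak* compactness of $\meas_1(\F)$), and both implicitly use the H\"older continuity of $\L$ in the Wasserstein case, so your route is a legitimate and in fact tidier packaging of the same proof.
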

\begin{proof}
We let~$(h_n)_{n \in \N}$ be a real sequence which is monotone decreasing and tends to zero,
\[ h_n \searrow 0 \:. \]
Moreover, we let~$(s_\ell)_{\ell \in \N}$ with
\[ s_\ell \in \big( \Sact^\xi_{\min}, \Sact(\rho_0) \big] \]
be a sequence which is dense in the last interval. Then for every~$\ell \in \N$, there is an infinite
number of~$h_n$ with the property that the piecewise linear curve is defined at~$s_\ell$, i.e.\
\[ s_\ell > \Sact \big( \tilde{\rho}^{h_n, \xi}_\infty \big) \:. \]
Using compactness of measures, there is a weak*-convergent subsequence with
\[ \tilde{\rho}^{h_{n_k}, \xi}(s_\ell) \overset{k \rightarrow \infty}{\longrightarrow} \tilde{\varrho}^{\xi}(s_\ell) \:. \]
We now proceed inductively in the parameter~$\ell = 1,2, \ldots$ and choose inductive subsequences.
For the resulting diagonal sequence, which for simplicity we denote again by~$h_{n_k}$,
the measures converge to a limit curve of measures, i.e.\
\[ \tilde{\rho}^{h_{n_k}, \xi}(s_\ell) \overset{k \rightarrow \infty}{\longrightarrow} \tilde{\varrho}^{\xi}(s_\ell) \qquad \text{for all~$\ell \in \N$}\:. \]
Considering the interpolation~\eqref{eq:curvedefineA}, applying the estimate~\eqref{i1} and passing to the limit, we find that 
the family of limit measures is again Lipschitz continuous in the sense that
\[ d \big( \tilde{\varrho}^{\xi}(s_\ell), \tilde{\varrho}^{\xi}(s_{\ell'})\big) \leq \frac{1}{\xi} \: \big| s_\ell - s_{\ell'} \big| \:. \]
Therefore, it extends by continuity to the curve~$\tilde{\varrho}^\xi$ in~\eqref{tilrhodef}
being Lipschitz continuous~\eqref{eq:LipRePara}.

In order to prove~\eqref{basicconv}, we estimate the Wasserstein distance
(which, as specified in~\eqref{eq:soundofwater1}, metrizes the weak*-topology).
We first note that, for any~$\ell \in \N$ and~$h>0$,
\begin{align}
&W_p\Big( \tilde{\rho}^{h, \xi} \big( \Sact \big( \rho^{h, \xi}_\infty \big) \big), \tilde{\varrho}^\xi \big( \Sact^\xi_{\min} \big) \Big) \notag \\
&\leq
W_p\Big( \tilde{\rho}^{h, \xi} \big( \Sact \big( \rho^{h, \xi}_\infty \big) \big), 
\tilde{\rho}^{h, \xi} (s_\ell) \Big) + 
W_p\Big( \tilde{\rho}^{h, \xi} (s_\ell), \tilde{\varrho}^\xi(s_\ell) \Big) + 
W_p\Big( \tilde{\varrho}^\xi(s_\ell), \tilde{\varrho}^\xi \big( \Sact^\xi_{\min} \big) \Big) \notag \\
&\leq \frac{1}{\xi}\: \Big( \Sact \big( \rho^{h, \xi}_\infty \big) - s_\ell + C\, h^\frac{\alpha}{2} \Big) 
+ W_p\Big( \tilde{\rho}^{h, \xi} (s_\ell), \tilde{\varrho}^\xi(s_\ell) \Big) + 
\frac{1}{\xi}\: \Big( \Sact^\xi_{\min} - s_\ell \Big) \:, \label{Ch}
\end{align}
where in the last step we applied Lemma~\ref{lemmarep}.
Choosing~$h=h_{n_k}$ as our diagonal sequence and passing to the limit, we obtain
\[ \liminf_{k \rightarrow \infty} 
W_p\Big( \tilde{\rho}^{h_{n_k}, \xi} \big( \Sact \big( \rho^{h_{n_k}, \xi}_\infty \big) \big), \tilde{\varrho}^\xi \big( \Sact^\xi_{\min} \big) \Big) \leq \frac{2}{\xi}\: \Big( \Sact^\xi_{\min} - s_\ell \Big) \:. \]
Taking the limit~$s_\ell \searrow \Sact^\xi_{\min}$ shows that~\eqref{basicconv} holds
(again for a suitable subsequence).
\end{proof}

\subsection{Limiting measures and Euler-Lagrange equations} \label{seclimit}
Based on the construction of curves of measures in the previous subsection, we now turn to their convergence properties. In particular, we are interested in whether the underlying curves converge and, if so, whether the limit measure satisfies the corresponding Euler-Lagrange equations at least approximately. 

In the case without $\xi$-penalization, we have the following result.
\begin{theorem} \label{thmlimit1} 
Consider the minimizing movement flow corresponding to the
action with penalization~\eqref{eq:varprinB0} and~\eqref{Sactdef},
where the Lagrangian~$\L$ has the properties~{\rm{(A1)}} and~{\rm{(A2)}}
stated in the preliminaries on page~\pageref{enumAB}.
In the case~$\xi=0$, assume that the curve~$\varrho^0(t)$ with initial measure~$\varrho^{0}(0)=\rho_{0}\in\mathfrak{M}_{1}(\F)$ converges in the weak*-sense. We set
\[ \varrho_\infty := \mathrm{w}^{*}\text{-}\lim_{t \rightarrow \infty} \varrho^0(t) \:. \]
Moreover, assume that for a sequence~$h_k$ with~$h_k \searrow 0$ the discrete sequences
converge,
\[ \rho_n^{h_k} \overset{n \rightarrow \infty}{\longrightarrow} \rho_\infty^{h_k} \:, \]
and that the limit measures converge to the limit point of the curve,
\[ \rho_\infty^{h_k} \overset{k \rightarrow \infty}{\longrightarrow} \varrho_\infty \:. \]
Then the measure~$\varrho_\infty$ satisfies the EL equations~\eqref{EL}.
\end{theorem}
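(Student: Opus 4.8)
The plan is to obtain the Euler--Lagrange equations for $\varrho_\infty$ by passing to the limit, first as the discrete index $n\to\infty$ for each fixed $h_k$ and then as $k\to\infty$, in the first-variation inequality satisfied by the minimizers $\rho_n^{h_k}$ of the penalized functionals $\Sact_n^{h,0}$ from~\eqref{eq:varprinB}. For $\sigma\in\meas_{1}(\F)$ I write $\ell_\sigma(x):=\int_\F\L(x,y)\dif\sigma(y)$, so that $\ell_\sigma\in\hold(\F)$, $\int_\F\ell_\sigma\dif\sigma=\Sact(\sigma)$, and for a convex combination $\sigma_\tau:=(1-\tau)\sigma+\tau\tilde\sigma$ the symmetry~\ref{item:ass1} of $\L$ gives the exact identity $\Sact(\sigma_\tau)=(1-\tau)^{2}\Sact(\sigma)+2\tau(1-\tau)\int_\F\ell_\sigma\dif\tilde\sigma+\tau^{2}\Sact(\tilde\sigma)$.

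First I would fix $h=h_k$, abbreviate $\rho_n:=\rho_n^{h_k}$, and test the minimality of $\rho_n$ for $\Sact_n^{h,0}$ against the competitor $\mu_\tau:=(1-\tau)\rho_n+\tau\tilde\rho\in\meas_{1}(\F)$, where $\tilde\rho\in\meas_{1}(\F)$ is arbitrary and $\tau\in(0,1]$. Plugging the quadratic identity for $\Sact(\mu_\tau)$ together with the elementary bound $d(\mu_\tau,\rho_n)\leq\tau\operatorname{diam}(\F)$ --- valid for $d=\wasser$ by~\eqref{eq:wasserbound} and $\wasser\leq\operatorname{diam}(\F)$, and with $\operatorname{diam}(\F)$ replaced by $2$ for $d=d_{\meas(\F)}$ --- into the minimality inequality $\Sact(\mu_\tau)+\tfrac{1}{2h}d(\mu_\tau,\rho_{n-1})^{2}\geq\Sact(\rho_n)+\tfrac{1}{2h}d(\rho_n,\rho_{n-1})^{2}$, dividing by $\tau$ and letting $\tau\searrow0$, produces the discrete first-variation inequality
\[ \int_\F\ell_{\rho_n}\dif\tilde\rho\;\geq\;\Sact(\rho_n)-\frac{\operatorname{diam}(\F)}{2h}\,d\big(\rho_n,\rho_{n-1}\big)\qquad\text{for all }\tilde\rho\in\meas_{1}(\F). \]

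Then I would let $n\to\infty$ with $h=h_k$ fixed. Telescoping~\eqref{i2} gives $\sum_{n\geq1}d(\rho_n^{h_k},\rho_{n-1}^{h_k})^{2}\leq 2h_k\,\Sact(\rho_0)<\infty$, hence $d(\rho_n^{h_k},\rho_{n-1}^{h_k})^{2}\to0$ and therefore --- since $h_k$ is fixed --- also $\tfrac{1}{h_k}d(\rho_n^{h_k},\rho_{n-1}^{h_k})\to0$. By hypothesis $\rho_n^{h_k}\wstar\rho_\infty^{h_k}$, so Lemma~\ref{lem:weakconv} yields $\Sact(\rho_n^{h_k})\to\Sact(\rho_\infty^{h_k})$, and since $\ell_{\rho_n^{h_k}}(x)\to\ell_{\rho_\infty^{h_k}}(x)$ pointwise with the uniform bound $\|\ell_{\rho_n^{h_k}}\|_\infty\leq\|\L\|_{\hold^0(\F\times\F)}$, dominated convergence gives $\int_\F\ell_{\rho_n^{h_k}}\dif\tilde\rho\to\int_\F\ell_{\rho_\infty^{h_k}}\dif\tilde\rho$. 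Passing to the limit in the displayed inequality, and then letting $k\to\infty$ with $\rho_\infty^{h_k}\wstar\varrho_\infty$ and the same continuity arguments, I obtain, writing $\ell:=\ell_{\varrho_\infty}$,
\[ \int_\F\ell\,\dif\tilde\rho\;\geq\;\Sact(\varrho_\infty)\qquad\text{for all }\tilde\rho\in\meas_{1}(\F). \]
Choosing $\tilde\rho=\delta_x$ gives $\ell(x)\geq\Sact(\varrho_\infty)$ for every $x\in\F$, i.e.\ $\inf_\F\ell\geq\Sact(\varrho_\infty)$; on the other hand $\int_\F(\ell-\Sact(\varrho_\infty))\dif\varrho_\infty=\Sact(\varrho_\infty)-\Sact(\varrho_\infty)=0$ with non-negative integrand, so $\ell=\Sact(\varrho_\infty)$ $\varrho_\infty$-a.e., and by continuity of $\ell$ this extends to $M=\supp\varrho_\infty$. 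Combining the two facts gives $\ell|_M\equiv\Sact(\varrho_\infty)\equiv\inf_\F\ell$, which is precisely~\eqref{EL}.

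The one delicate point is the control of the discrete velocity $\tfrac1h\,d(\rho_n,\rho_{n-1})$: it is not uniformly small along the scheme, but since the trajectory converges as $n\to\infty$ the squared increments are summable, which forces this term to vanish in that limit for each fixed $h_k$. This is why the order of limits ($n\to\infty$ before $k\to\infty$) is essential, and it is exactly here that both hypotheses $\rho_n^{h_k}\to\rho_\infty^{h_k}$ and $\rho_\infty^{h_k}\to\varrho_\infty$ enter; the remainder is the standard quadratic first-variation computation for $\Sact$ combined with the weak*-continuity of $\Sact$ from Lemma~\ref{lem:weakconv}.
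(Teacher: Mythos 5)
Your proof is correct, and it follows the same basic route as the paper's own argument (Lemmas~\ref{lemmaELlim} and~\ref{lem:verticaldown}): a first variation along convex combinations, weak*-continuity of $\Sact$ from Lemma~\ref{lem:weakconv} to pass to the limits, and the vanishing of the increments $d(\rho_n^{h},\rho_{n-1}^{h})$ (via the telescoped form of~\eqref{i2}) to kill the contribution of the quadratic penalization. The differences are in execution rather than in substance: the paper first lets the discrete index tend to infinity in the full minimality inequality with the special competitor $(1-\tau)\rho_j^{h,\xi}+\tau\delta_z$ and only then sends $\tau\searrow 0$ (handling the penalization via Lemma~\ref{lem:wasserbound}, i.e.~\eqref{eq:wasserbound}), whereas you send $\tau\searrow 0$ first at fixed $n$, which yields a quantitative discrete Euler--Lagrange inequality with the explicit error $\tfrac{\operatorname{diam}(\F)}{2h}\,d(\rho_n,\rho_{n-1})$, test with arbitrary $\tilde\rho\in\meas_1(\F)$ instead of Dirac measures, and then pass $n\to\infty$ and $k\to\infty$. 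You also spell out the final deduction of~\eqref{EL} from $\int_\F \ell\,\dif\tilde\rho\geq\Sact(\varrho_\infty)$ together with $\int_\F\ell\,\dif\varrho_\infty=\Sact(\varrho_\infty)$, a step the paper leaves implicit in Lemma~\ref{lem:verticaldown}. Your version has the small advantage of an explicit discrete-level estimate and of treating both metrics of~\eqref{dspecify} uniformly; the paper's version is organized so that the same two lemmas also serve the $\xi>0$ case of Theorem~\ref{thmlimit2}.
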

Clearly, the assumptions on the existence of limits of measures in this theorem are quite strong
and restrictive. However, it seems impossible to relax these assumptions because, as explained in detail in
the example in Section~\ref{secex}, such a limit point will in general not exist.

In the case~$\xi>0$, the situation is much better, because the results of the preceding subsection imply that the underlying curves of measures have finite length. This, in turn, can be used to establish the following stronger result on the Euler-Lagrange equations being \emph{approximately} satisfied in the limit: 
\begin{theorem} {\em{(Convergence and approximative EL-equations)}} \label{thmlimit2} \\
Consider the minimizing movement flow corresponding to the
action with penalization~\eqref{eq:varprinB0} and~\eqref{Sactdef},
where the Lagrangian~$\L$ has the properties~{\rm{(A1)}} and~{\rm{(A2)}}
stated in the preliminaries on page~\pageref{enumAB}.
In the case~$\xi>0$,
the curve~$\varrho^\xi(s)$ converges as~$s \searrow \Sact^\xi_{\min}$.
In the case of penalization by the Wasserstein distance~$W_p$
(i.e., Case~2\ in~\eqref{dspecify}), the limiting measure
\[ \varrho^\xi_\infty := \lim_{s \searrow \Sact^\xi_{\min}} \varrho^\xi(s) \]
satisfies the EL equations approximately, in the sense that the function~$\ell_\xi$ defined by
\[ \ell_\xi(x) := \int_\F \L(x,y)\: \dif \varrho^\xi_\infty(y) + \frac{\xi}{2} \: \wasser(\delta_{z},\mu) \]
is minimal on~$N := \supp \varrho^\xi_\infty$,
\[ 
\ell_\xi|_N \equiv \inf_\F \ell_\xi \:. \]
\end{theorem}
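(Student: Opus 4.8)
For the convergence of $\varrho^\xi(s)$ as $s\searrow\Sact^\xi_{\min}$ there is nothing new to prove: by Proposition~\ref{prop:LipRepara} the reparametrised curve is $\tfrac1\xi$-Lipschitz for the metric $d$ on the interval $(\Sact^\xi_{\min},\Sact(\rho_0)]$, see~\eqref{eq:LipRePara}; since $(\meas_1(\F),d)$ is complete, a Lipschitz curve extends continuously to the left endpoint, so $\varrho^\xi_\infty$ exists (it is the endpoint value already recorded in~\eqref{tilrhodef}, and~\eqref{basicconv} shows that the end measures of the discrete curves converge to it). This part holds in both Case~1 and Case~2.

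\textbf{Reduction to the discrete scheme.} From now on take Case~2, $d=\wasser$. For fixed $h>0$ the discrete sequence converges, $\rho^{h,\xi}_j\to\rho^{h,\xi}_\infty$, by~\eqref{totvarest}. The key observation is that $\rho^{h,\xi}_\infty$ is a minimiser of the \emph{self-centred} penalised functional $\mu\mapsto\Sact(\mu)+\tfrac1{2h}\wasser(\mu,\rho^{h,\xi}_\infty)^2+\xi\,\wasser(\mu,\rho^{h,\xi}_\infty)$: this follows by letting $j\to\infty$ in $\Sact^{h,\xi}_j(\rho^{h,\xi}_j)\le\Sact^{h,\xi}_j(\mu)$, using that $\Sact$ is weak*-continuous (Lemma~\ref{lem:weakconv}) while $\wasser$ metrises weak*-convergence; the minimal value is $\Sact(\rho^{h,\xi}_\infty)$.

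\textbf{Penalised Euler--Lagrange inequality and the limit $h\to0$.} Set $\ell^h(x):=\int_\F\L(x,y)\,\dif\rho^{h,\xi}_\infty(y)$. Testing self-centred minimality with competitors $(1-\varepsilon)\rho^{h,\xi}_\infty+\varepsilon\sigma$, $\sigma\in\meas_1(\F)$, $\varepsilon\searrow0$, the quadratic expansion of $\Sact$ contributes $2\varepsilon\big(\int_\F\ell^h\dif\sigma-\Sact(\rho^{h,\xi}_\infty)\big)$, while~\eqref{eq:wasserbound} makes the quadratic penalty $O(\varepsilon^2)$ and the linear penalty at most $\xi\varepsilon\,\wasser(\sigma,\rho^{h,\xi}_\infty)$; dividing by $2\varepsilon$ and sending $\varepsilon\to0$ yields $\int_\F\ell^h\dif\sigma+\tfrac\xi2\,\wasser(\sigma,\rho^{h,\xi}_\infty)\ge\Sact(\rho^{h,\xi}_\infty)$ for every $\sigma$, with equality at $\sigma=\rho^{h,\xi}_\infty$. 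Testing in addition with competitors that remove a little mass distributed like the normalised restriction of $\rho^{h,\xi}_\infty$ to a small ball $B_r(x_1)$ about a support point $x_1$, and letting $\varepsilon\to0$ and then $r\to0$, gives the complementary estimate $\ell^h(x_1)\le\int_\F\ell^h\dif\sigma+\tfrac\xi2\,\wasser(\delta_{x_1},\sigma)$ valid for $x_1\in\supp\rho^{h,\xi}_\infty$. Finally, take $h=h_\ell\to0$ as in~\eqref{basicconv}: since $\rho^{h_\ell,\xi}_\infty\to\varrho^\xi_\infty$ weak* and $\L$ is uniformly continuous on the compact $\F\times\F$, both $\ell^{h_\ell}\to\ell:=\int_\F\L(\cdot,y)\,\dif\varrho^\xi_\infty(y)$ and $\wasser(\delta_\cdot,\rho^{h_\ell,\xi}_\infty)\to\wasser(\delta_\cdot,\varrho^\xi_\infty)$ uniformly, $\Sact(\rho^{h_\ell,\xi}_\infty)\to\Sact(\varrho^\xi_\infty)$, and $N:=\supp\varrho^\xi_\infty$ is contained in the upper limit of the supports $\supp\rho^{h_\ell,\xi}_\infty$; passing to the limit in the two families of inequalities and specialising the test measures (notably $\sigma=\delta_{x_0}$ and $\sigma=\varrho^\xi_\infty$) translates them into the asserted relation $\ell_\xi|_N\equiv\inf_\F\ell_\xi$ for the function $\ell_\xi$ of the statement.

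\textbf{Main obstacle.} The delicate point is the first variation of the Wasserstein penalty. Unlike $\Sact$, the map $\wasser(\cdot,\rho)$ is not differentiable, and the competitor $(1-\varepsilon)\rho+\varepsilon\sigma$ does not travel along a $\wasser$-geodesic (it creates new atoms), so the linear penalty has to be controlled by the sharp convexity estimate~\eqref{eq:wasserbound}, and one must check that it is tight enough to reproduce precisely the coefficient $\xi/2$ (the factor $\tfrac12$ coming from the quadratic character of $\Sact$), which for $\wasser$ with $p>1$ needs extra care about the behaviour of $\wasser$ in a neighbourhood of $\rho$. A secondary subtlety is the order of the two limits: as $1/(2h)\to\infty$ one cannot pass to the limit $h\to0$ inside the penalised functional, so the Euler--Lagrange inequality has to be extracted at fixed $h$ — where the quadratic penalty is harmless, being of order $\varepsilon^2$ — before letting $h\to0$; as in the proof of Theorem~\ref{thm:main1}, the detour through the weak*-topology serves only to secure compactness, the estimates surviving by lower semicontinuity.
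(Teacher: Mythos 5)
Your core argument is the paper's argument: the convergence of $\varrho^\xi(s)$ is read off from the Lipschitz bound~\eqref{eq:LipRePara} of Proposition~\ref{prop:LipRepara} (with~\eqref{basicconv} identifying the endpoint); the approximate EL inequality is obtained at fixed~$h$ by a first variation along convex combinations $(1-\varepsilon)\rho^{h,\xi}_\infty+\varepsilon\sigma$, with the quadratic penalty of order $\varepsilon^2$ and the linear penalty controlled via~\eqref{eq:wasserbound}; and the limit $h\searrow 0$ is taken along the sequence of~\eqref{basicconv}, using weak*-continuity of $\Sact$ (Lemma~\ref{lem:weakconv}) and of $W_p(\delta_z,\cdot)$. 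This is precisely the content of Lemmas~\ref{lemmaELlim} and~\ref{lem:verticaldown}. Your only structural deviation is to pass $j\to\infty$ first, concluding that $\rho^{h,\xi}_\infty$ minimizes the self-centred penalized functional, and to vary afterwards, whereas the paper varies at fixed $j$ (penalty centred at $\rho^{h,\xi}_{j-1}$) and then sends $j\to\infty$ using $W_p(\rho^{h,\xi}_j,\rho^{h,\xi}_{j-1})\to 0$; the two are equivalent, and with $\sigma=\delta_z$ your inequality is exactly~\eqref{eq:ApproxElMain}. Your concern about the exponent in~\eqref{eq:wasserbound} for $p>1$ is legitimate, but it concerns Lemma~\ref{lem:wasserbound} itself (the coupling-mixing argument only gives $\lambda^{1/p}$), and the paper's Lemma~\ref{lemmaELlim} relies on it in the same way you do, so on this point you and the paper stand or fall together.

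Where you go beyond the paper is the ``complementary estimate'' obtained by removing mass near a support point, and this step does not go through as stated. Writing your competitor as a convex combination $(1-\lambda)\rho^{h,\xi}_\infty+\lambda\omega$ forces $\lambda=\varepsilon/\rho^{h,\xi}_\infty(B_r(x_1))$, and relocating a mass of size $\varepsilon$ costs of order $\varepsilon^{1/p}$ in $W_p$; hence for $p>1$ the linear penalty divided by $\varepsilon$ blows up as $\varepsilon\searrow 0$ (and, when $x_1$ is not an atom, also as $r\searrow 0$), so no pointwise inequality on the support is obtained this way. Even for $p=1$, where the estimate does close, ``specialising the test measures'' only yields $\ell^h(x_1)\leq \inf_\F \ell^{h}_\xi + \tfrac{\xi}{2}\,W_1(\delta_{x_1},\rho^{h,\xi}_\infty)$ on the support, not the asserted identity $\ell_\xi|_N\equiv\inf_\F\ell_\xi$ verbatim, and your final limit passage is left vague (also $\delta_{x_0}$ is undefined there). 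Note that the paper does not attempt any such pointwise step: its proof consists of the integrated inequality~\eqref{eq:ApproxElMain1} together with the convergence~\eqref{basicconv}, and it is this integrated inequality that is recorded as the approximate EL equations. So your proposal contains the paper's proof as the special case $\sigma=\delta_z$, and the added step should either be dropped or be restricted to $p=1$ and argued more carefully if you want a genuinely pointwise statement.
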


The remainder of this section is devoted to the proofs of these theorems.
We alleviate notation by setting
\begin{align*}
&\alpha_{1}:= \Big( \Big\lfloor \frac{t}{h}+1 \Big\rfloor - \frac{t}{h} \Big),\;\;\;\alpha_{2}=\frac{t}{h}-\Big\lfloor
\frac{t}{h} \Big\rfloor
\qquad \text{and} \qquad
\rho_{(1)}:=\rho_{\lfloor \frac{t}{h}\rfloor},\;\;\;\rho_{(2)}:=\rho_{\lfloor\frac{t}{h}+1\rfloor} \:,
\end{align*}
so that the interpolated measure defined in~\eqref{eq:curvedefineA} can be written as
\begin{align*}
\rho^{h, \xi}(t):=\alpha_{1}(t)\,\rho_{(1)}(t)+\alpha_{2}(t)\,\rho_{(2)}(t) \:. 
\end{align*}
\begin{lemma} \label{lemmaELlim}
Let~$h>0$, $\xi\geq0$ and denote 
by~$\rho_\infty^{h,\xi}\in\mathfrak{M}_{1}(\F)$ a weak*-accumulation point of~$(\rho_j^{h,\xi})$ as~$j \rightarrow \infty$. Then, for all~$z\in\F$, we have 
\begin{align}\label{eq:ApproxElMain}
\iint_{\F\times\F}\L(x,y)\dif\rho_\infty^{h,\xi}(x)\dif\rho_\infty^{h,\xi}(y) & \leq \int_{\F}\L(x,z)\dif\rho_\infty^{h,\xi}(x)  + \frac{\xi}{2}\:\wasser(\delta_{z},\rho_\infty^{h,\xi}) \:.
\end{align} 
\end{lemma}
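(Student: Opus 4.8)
The plan is to exploit the minimality of each discrete step $\rho_j^{h,\xi}$ for the functional $\Sact_j^{h,\xi}$ and then pass to the limit $j\to\infty$ along the accumulation point. First I would fix $z\in\F$ and, for each $j$, compare $\rho_j^{h,\xi}$ with a competitor obtained by moving a small amount of mass from $\rho_{j-1}^{h,\xi}$ towards $\delta_z$: concretely, for $\lambda\in[0,1]$ set $\mu_\lambda:=(1-\lambda)\rho_{j-1}^{h,\xi}+\lambda\delta_z\in\meas_1(\F)$. Minimality gives $\Sact_j^{h,\xi}(\rho_j^{h,\xi})\leq\Sact_j^{h,\xi}(\mu_\lambda)$ for every $\lambda$. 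The left-hand side is at least $\Sact(\rho_j^{h,\xi})$ (the penalization terms being non-negative), while on the right we expand $\Sact(\mu_\lambda)$ using bilinearity, estimate $d(\mu_\lambda,\rho_{j-1}^{h,\xi})\leq\lambda\, d(\delta_z,\rho_{j-1}^{h,\xi})$ via \eqref{eq:wasserbound} (or the analogous trivial bound for the Fréchet metric), and keep only the terms linear in $\lambda$ to leading order, dividing by $\lambda$ and letting $\lambda\searrow0$.

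Carrying out the expansion: writing $\ell_{j-1}(x):=\int_\F\L(x,y)\,\dif\rho_{j-1}^{h,\xi}(y)$, one finds
\[
\Sact(\mu_\lambda)=\Sact(\rho_{j-1}^{h,\xi})+2\lambda\Big(\int_\F\L(x,z)\,\dif\rho_{j-1}^{h,\xi}(x)-\Sact(\rho_{j-1}^{h,\xi})\Big)+\lambda^2\big(\cdots\big),
\]
and $\tfrac{1}{2h}d(\mu_\lambda,\rho_{j-1}^{h,\xi})^2\leq\tfrac{\lambda^2}{2h}d(\delta_z,\rho_{j-1}^{h,\xi})^2$, $\xi\,d(\mu_\lambda,\rho_{j-1}^{h,\xi})\leq\lambda\xi\,d(\delta_z,\rho_{j-1}^{h,\xi})$. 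Combining with $\Sact(\rho_j^{h,\xi})\leq\Sact_j^{h,\xi}(\mu_\lambda)$, subtracting $\Sact(\rho_{j-1}^{h,\xi})$, dividing by $\lambda>0$ and sending $\lambda\searrow0$ yields
\[
\frac{\Sact(\rho_j^{h,\xi})-\Sact(\rho_{j-1}^{h,\xi})}{\lambda}\ \text{(a bound of the form)}\ \leq 2\Big(\int_\F\L(x,z)\,\dif\rho_{j-1}^{h,\xi}(x)-\Sact(\rho_{j-1}^{h,\xi})\Big)+\xi\, d(\delta_z,\rho_{j-1}^{h,\xi}).
\]
Since by \eqref{i0} the left-hand quantity $\Sact(\rho_j^{h,\xi})-\Sact(\rho_{j-1}^{h,\xi})\le 0$, one obtains the clean inequality
\[
\Sact(\rho_{j-1}^{h,\xi})\leq\int_\F\L(x,z)\,\dif\rho_{j-1}^{h,\xi}(x)+\frac{\xi}{2}\, d(\delta_z,\rho_{j-1}^{h,\xi})
\]
for every $j\in\N$ and every $z\in\F$, which is \eqref{eq:ApproxElMain} at the discrete level.

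Finally I would pass to the limit along a subsequence $\rho_{j_m}^{h,\xi}\wstar\rho_\infty^{h,\xi}$: the functional $\Sact$ is weak*-continuous by Lemma~\ref{lem:weakconv}, the linear functional $\mu\mapsto\int_\F\L(x,z)\,\dif\mu(x)$ is weak*-continuous since $\L(\cdot,z)\in\hold(\F)$, and $d(\delta_z,\cdot)$ is either weak*-lower semicontinuous (Fréchet case) or weak*-continuous (Wasserstein case), which in either case lets the right-hand side pass to the limit correctly with $\wasser(\delta_z,\rho_\infty^{h,\xi})$ — here in the Fréchet case one additionally invokes \eqref{eq:soundofwater2} to bound $\wasser$ by the total variation term, so that the stated $W_p$-form of \eqref{eq:ApproxElMain} is obtained uniformly in both cases. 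The main obstacle I anticipate is bookkeeping the error terms in the $\lambda$-expansion carefully enough to see that they are genuinely $O(\lambda^2)$ (the quadratic term in $\Sact(\mu_\lambda)$ involves $\L(z,z)$, $\ell_{j-1}(z)$ and $\Sact(\rho_{j-1}^{h,\xi})$, all uniformly bounded by $\|\L\|_{\hold^0}$), so that dividing by $\lambda$ and letting $\lambda\searrow0$ legitimately discards them; once that is in place, the limit passage is routine given the results already established.
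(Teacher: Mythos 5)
Your scheme breaks down at the discrete level, because the competitor is centered at the wrong measure. Comparing $\Sact_j^{h,\xi}(\rho_j^{h,\xi})\leq\Sact_j^{h,\xi}(\mu_\lambda)$ with $\mu_\lambda=(1-\lambda)\rho_{j-1}^{h,\xi}+\lambda\delta_z$ and dropping the non-negative penalization terms on the left, you are left, after subtracting $\Sact(\rho_{j-1}^{h,\xi})$, with the \emph{fixed} quantity $a_j:=\Sact(\rho_j^{h,\xi})-\Sact(\rho_{j-1}^{h,\xi})\leq 0$ on the left and $\lambda\,b_j+O(\lambda^2)$ on the right, where $b_j$ is your linear coefficient. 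From $a_j\leq \lambda b_j+O(\lambda^2)$ for all small $\lambda$ together with $a_j\leq 0$ one cannot conclude $b_j\geq 0$: dividing by $\lambda$ sends the left side to $-\infty$ whenever $a_j<0$, so the limit $\lambda\searrow 0$ is vacuous, and the step ``since the left-hand quantity is $\leq 0$ by \eqref{i0}, one obtains the clean inequality'' is a non sequitur (that inference would require $a_j\geq 0$). In fact the claimed discrete inequality $\Sact(\rho_{j-1}^{h,\xi})\leq\int_\F\L(x,z)\,\dif\rho_{j-1}^{h,\xi}(x)+\tfrac{\xi}{2}\,d(\delta_z,\rho_{j-1}^{h,\xi})$ is false in general: for $j=1$ it would assert that the \emph{arbitrary} initial measure $\rho_0$ already satisfies the approximate EL inequality (take $\F$ two points, $\rho_0=\delta_a$ with $\L(a,a)$ large, $\L(a,b)=0$ and $\xi$ small), and the absence of any $h$- or $j$-dependence on the right-hand side is the telltale sign that the minimality of $\rho_j^{h,\xi}$ has not really been used.

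The repair is to perturb around the minimizer itself, $\mu_\tau=(1-\tau)\rho_j^{h,\xi}+\tau\delta_z$, which is what the paper does: then the zeroth-order terms on both sides of the minimality inequality cancel exactly, so one genuinely has $0\leq\tfrac{1}{\tau}\big(\Sact_j^{h,\xi}(\mu_\tau)-\Sact_j^{h,\xi}(\rho_j^{h,\xi})\big)$. The price is that the quadratic penalization now contributes at first order a cross term of size $\tfrac{1}{h}\,\wasser(\rho_j^{h,\xi},\rho_{j-1}^{h,\xi})\,\wasser(\delta_z,\rho_{j-1}^{h,\xi})$ (estimated via the convexity bound of Lemma~\ref{lem:wasserbound}), which is only removed by first sending $j\to\infty$ along the accumulating subsequence, where $\wasser(\rho_j^{h,\xi},\rho_{j-1}^{h,\xi})\to 0$, and only afterwards letting $\tau\searrow 0$; this order of limits is essential and is exactly what your centering at $\rho_{j-1}^{h,\xi}$ was implicitly trying to avoid. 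A secondary point: your handling of the Fr\'{e}chet case does not work either, since \eqref{eq:soundofwater2} bounds $\wasser$ \emph{from above} by the total variation distance, so it cannot convert an upper bound featuring $d_{\meas(\F)}$ into the smaller $\wasser$-term in \eqref{eq:ApproxElMain}, and weak*-lower semicontinuity of the total variation norm goes the wrong way for a right-hand-side term; the statement is really tied to the Wasserstein penalization, Case~2 in \eqref{dspecify}, which is how it enters Theorem~\ref{thmlimit2}.
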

\begin{proof}
Given~$0<\tau<1$ and~$z\in\F$ we define
\begin{align*}
\mu_{\tau}^{j,h,\xi}:=(1-\tau)\rho_j^{h,\xi}+\tau\delta_{z}\in\meas_{1}(\F) \:.
\end{align*}
Using that~$\rho_j^{h,\xi}$ is a minimizer of the penalized action, it follows that
\begin{align*} 
0 &\leq \frac{1}{\tau}\Big(\mathcal{S}(\mu_{\tau}^{j,h,\xi})-\mathcal{S}(\rho_j^{h,\xi}) \Big) + \frac{1}{2\tau h}\Big(\wasser(\mu_{\tau}^{j,h,\xi},\rho_{j-1}^{h,\xi})^{2}-\wasser(\rho_j^{h,\xi},\rho_{j-1}^{h,\xi})^{2} \Big) \\
& \quad\:+ \frac{\xi}{\tau} \Big(\wasser(\mu_{\tau}^{j,h,\xi},\rho_{j-1}^{h,\xi})-\wasser(\rho_j^{h,\xi},\rho_{j-1}^{h,\xi}) \Big) \\ 
& \leq \frac{1}{\tau}\Big(\mathcal{S}(\mu_{\tau}^{j,h,\xi})-\mathcal{S}(\rho_j^{h,\xi}) \Big) + \frac{1}{2\tau h}\Big(\wasser(\mu_{\tau}^{j,h,\xi},\rho_{j-1}^{h,\xi})^{2}-\wasser(\rho_j^{h,\xi},\rho_{j-1}^{h,\xi})^{2} \Big) \\
& \quad\:+ \frac{\xi}{\tau} \wasser(\mu_{\tau}^{j,h,\xi},\rho_j^{h,\xi}) =: \mathrm{IV} + \mathrm{V} + \mathrm{VI}
\end{align*}
by use of the triangle inequality. By assumption, we have 
\begin{align*}
\mu_{\tau}^{j,h,\xi}\stackrel{*}{\rightharpoonup} \mu_{\tau}^{\infty,h,\xi}:=(1-\tau)\rho_\infty^{h,\xi}+\tau\delta_{z} \:, 
\end{align*}
whereby Lemma~\ref{lem:weakconv} yields that 
\begin{align}\label{eq:Vbound0}
\mathrm{IV} \to \frac{1}{\tau}\Big(\mathcal{S}( \mu_{\tau}^{\infty,h,\xi})-\mathcal{S}(\rho_\infty^{h,\xi}) \Big) \qquad \text{as~$j\to\infty$}\:. 
\end{align}
For term~$\mathrm{V}$, we use Lemma~\ref{lem:wasserbound} to estimate and expand terms as follows,
\begin{align*} 
\mathrm{V} & \leq \frac{1}{2\tau h}\Big( \big( (1-\tau)\wasser(\rho_j^{h,\xi},\rho_{j-1}^{h,\xi})+\tau\wasser(\delta_{z},\rho_{j-1}^{h,\xi}) \big)^{2}-\wasser(\rho_j^{h,\xi},\rho_{j-1}^{h,\xi})^{2} \Big) \\ 
& = \frac{1}{2h}\Big(-2\wasser(\rho_j^{h,\xi},\rho_{j-1}^{h,\xi})^{2} + \tau\wasser(\rho_j^{h,\xi},\rho_{j-1}^{h,\xi})^{2} \Big.\\ 
& \Big. \;\;\;\;\;\;\;\; + 2(1-\tau)\:\wasser(\rho_j^{h,\xi},\rho_{j-1}^{h,\xi})\:\wasser(\delta_{z},\rho_{j-1}^{h,\xi}) + \tau\wasser(\delta_{z},\rho_{j-1}^{h,\xi})^{2}\Big) \:. 
\end{align*}
Since~$\xi \geq 0$ and~$h>0$ are fixed, we have that~$\wasser(\rho_j^{h,\xi},\rho_{j-1}^{h,\xi})\to 0$ as~$j\to\infty$. Moreover, $\sup_{j\in\mathbb{N}}\wasser(\delta_{z},\rho_{j-1}^{h,\xi})<\infty$, and therefore 
\[ 
\limsup_{j\to\infty}\mathrm{V} \leq \frac{\tau}{2h}\wasser(\delta_{z},\rho_\infty^{h,\xi}) \:. \]
Lastly, employing Lemma~\ref{lem:wasserbound}, we arrive at the following estimate for~$\mathrm{VI}$: 
\begin{align}\label{eq:Vbound2}
\mathrm{VI} & \leq \xi\wasser(\delta_{z},\rho_j^{h,\xi}) \to  \xi\wasser(\delta_{z},\rho_\infty^{h,\xi})
\end{align} 
as~$j\to\infty$. Combining~\eqref{eq:Vbound0}--\eqref{eq:Vbound2}, we obtain 
\begin{align}\label{eq:Vbound3}
0 \leq  \frac{1}{\tau}\Big(\mathcal{S}( \mu_{\tau}^{\infty,h,\xi})-\mathcal{S}(\rho_\infty^{h,\xi}) \Big)  + \frac{\tau}{2h}\wasser(\delta_{z},\rho_\infty^{h,\xi}) + \xi\wasser(\delta_{z},\rho_\infty^{h,\xi}).
\end{align}
At this stage, we aim to send~$\tau\searrow 0$. Working from~\eqref{eq:Vbound3}, we expand using the symmetry of~$\L$,
\begin{align*}
0 & \leq \frac{(1-\tau)^{2}}{\tau}\iint_{\F\times\F}\L(x,y)\dif\rho_\infty^{h,\xi}(x)\dif\rho_\infty^{h,\xi}(y)  - \frac{1}{\tau}\iint_{\F\times\F}\L(x,y)\dif\rho_\infty^{h,\xi}(x)\dif\rho_\infty^{h,\xi}(y) \\ 
& \quad\: + 2(1-\tau)\int_{\F}\L(x,z)\dif\rho_\infty^{h,\xi}(x) + \tau \L(z,z) + \frac{\tau}{2h}\wasser(\delta_{z},\rho_\infty^{h,\xi}) + \xi\wasser(\delta_{z},\rho_\infty^{h,\xi})\\ 
& \xrightarrow{\tau\searrow 0}
 -2\iint_{\F\times\F}\L(x,y)\dif\rho_\infty^{h,\xi}(x)\dif\rho_\infty^{h,\xi}(y) + 2\int_{\F}\L(x,z)\dif\rho_\infty^{h,\xi}(x) + \xi\wasser(\delta_{z},\rho_\infty^{h,\xi}) \:.
\end{align*}
Hence, we arrive at 
\begin{align*}
\iint_{\F\times\F}\L(x,y)\dif\rho_\infty^{h,\xi}(x)\dif\rho_\infty^{h,\xi}(y) & \leq \int_{\F}\L(x,z)\dif\rho_\infty^{h,\xi}(x)  + \frac{\xi}{2}\wasser(\delta_{z},\rho_\infty^{h,\xi}) \:.
\end{align*} 
This is~\eqref{eq:ApproxElMain}, and the proof is complete. 
\end{proof}

\begin{lemma}\label{lem:verticaldown}
Let~$\xi \geq 0$, and denote by~$\meas^\infty$ 
the set of all weak*-accumulation points of~$(\rho_\infty^{h,\xi})$ as~$h\searrow 0$. Whenever~$\mu\in \meas^\infty$ and~$z\in\F$ are such that 
\beq \label{eq:minimiserinz}
\inf_{y\in\F}\int_{\F}\L(x,y)\dif\mu(x)=\int_{\F}\L(x,z)\dif\mu(x) \:, 
\eeq
we have 
\beq \label{eq:ApproxElMain1}
\iint_{\F\times\F}\L(x,y)\dif\mu(x)\dif\mu(y) \leq \inf_{y\in\F}\int_{\F}\L(x,y)\dif\mu(x) + \frac{\xi}{2}\:\wasser(\delta_{z},\mu) \:. 
\eeq
\end{lemma}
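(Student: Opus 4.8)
The plan is to pass to the limit $h\searrow 0$ in the inequality~\eqref{eq:ApproxElMain} of Lemma~\ref{lemmaELlim}, being careful that along a sequence $h_k\searrow 0$ the measures $\rho_\infty^{h_k,\xi}$ converge weak* to $\mu$, and that both sides of~\eqref{eq:ApproxElMain} behave well under this convergence. First I would fix $\mu\in\meas^\infty$ and choose a sequence $h_k\searrow 0$ with $\rho_\infty^{h_k,\xi}\stackrel{*}{\rightharpoonup}\mu$. For the left-hand side, Lemma~\ref{lem:weakconv} gives continuity of $\Sact$ with respect to weak*-convergence, so the double integral over $\rho_\infty^{h_k,\xi}$ converges to the double integral over $\mu$. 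For the last term on the right of~\eqref{eq:ApproxElMain}, the Wasserstein distance $\wasser$ metrizes weak*-convergence by~\eqref{eq:soundofwater1}, and since $\delta_z$ is fixed we get $\wasser(\delta_z,\rho_\infty^{h_k,\xi})\to\wasser(\delta_z,\mu)$.

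The more delicate term is the single integral $\int_\F \L(x,z)\,\dif\rho_\infty^{h_k,\xi}(x)$ on the right of~\eqref{eq:ApproxElMain}: here $z$ is the \emph{same} point for every $k$, but in the statement of the lemma $z$ is chosen to be a minimizer of $y\mapsto\int_\F\L(x,y)\,\dif\mu(x)$, which need not minimize the corresponding function for $\rho_\infty^{h_k,\xi}$. However, this does not matter, because~\eqref{eq:ApproxElMain} holds for \emph{all} $z\in\F$; so I would simply apply it with our fixed $z$ for each $k$. Then $x\mapsto\L(x,z)$ is a fixed continuous function on $\F$ (Assumption~\ref{item:ass2}), and weak*-convergence gives $\int_\F\L(x,z)\,\dif\rho_\infty^{h_k,\xi}(x)\to\int_\F\L(x,z)\,\dif\mu(x)$. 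Passing to the limit $k\to\infty$ in~\eqref{eq:ApproxElMain} with this $z$ therefore yields
\[
\iint_{\F\times\F}\L(x,y)\,\dif\mu(x)\,\dif\mu(y)\;\leq\;\int_\F\L(x,z)\,\dif\mu(x)+\frac{\xi}{2}\,\wasser(\delta_z,\mu)\:.
\]
Finally, invoking the hypothesis~\eqref{eq:minimiserinz}, the term $\int_\F\L(x,z)\,\dif\mu(x)$ equals $\inf_{y\in\F}\int_\F\L(x,y)\,\dif\mu(x)$, which gives exactly~\eqref{eq:ApproxElMain1}.

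I expect the main (and only real) obstacle to be a bookkeeping one: ensuring that the accumulation point produced by Lemma~\ref{lemmaELlim} along the $j\to\infty$ limit for each fixed $h_k$ is compatible with the $h_k\searrow 0$ accumulation point $\mu$, and that the existence of such a jointly convergent sequence is guaranteed by the weak*-compactness of $\meas_1(\F)$ (Banach--Alaoglu, as used repeatedly above). A minor subtlety worth a sentence is that $\wasser(\delta_z,\rho)$ is automatically finite and uniformly bounded by $\mathrm{diam}(\F)$ on $\meas_1(\F)$, so there are no integrability issues in passing to the limit; and that one should note~\eqref{eq:minimiserinz} only uses $z$ through the value $\int_\F\L(x,z)\,\dif\mu(x)$, so the choice of $z$ plays no role beyond this identity. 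Everything else is a direct combination of Lemma~\ref{lem:weakconv}, the metrization property~\eqref{eq:soundofwater1}, and Lemma~\ref{lemmaELlim}.
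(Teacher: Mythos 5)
Your proposal is correct and follows essentially the same route as the paper's proof: choose a sequence $h_k\searrow 0$ with $\rho_\infty^{h_k,\xi}\stackrel{*}{\rightharpoonup}\mu$ and pass to the limit in~\eqref{eq:ApproxElMain} with the fixed~$z$, using Lemma~\ref{lem:weakconv} for the action term, weak*-convergence against the continuous function $\L(\cdot,z)$ for the single integral, the metrization property~\eqref{eq:soundofwater1} for the Wasserstein term, and finally the hypothesis~\eqref{eq:minimiserinz}. Your observation that~\eqref{eq:ApproxElMain} holds for every $z\in\F$, so the minimizer $z$ for $\mu$ may be inserted uniformly in $k$, is precisely the point the paper uses implicitly.
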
 
\begin{proof} 
Since~$\F$ is compact and the right-hand side of~\eqref{eq:minimiserinz} is a continuous function in the second variable, we find~$z\in\F$ such that~\eqref{eq:minimiserinz} is satisfied.  We let~$(h_{k})\subset\R_{>0}$ be a sequence with~$h_{k}\searrow 0$ and~$\rho_\infty^{h_{k},\xi}\stackrel{*}{\rightharpoonup}\mu$ as~$k\to\infty$. By the continuity result from Lemma~\ref{lem:weakconv}, it is then clear that the left-hand side of~\eqref{eq:ApproxElMain} converges to the left-hand side of~\eqref{eq:ApproxElMain1}. On the other hand, since~$\wasser$ metrizes weak*-convergence, we also have~$\wasser(\delta_{z},\rho_\infty^{h_{k},\xi})\stackrel{*}{\rightharpoonup}\wasser(\delta_{z},\mu)$ as~$k\to\infty$. Again using continuity under
weak* convergence, we obtain
\begin{align*}
\int_{\F}\L(x,z)\dif\mu(x)=\lim_{k\to\infty}\int_{\F}\L(x,z)\dif\rho_\infty^{h_{k},\xi}(x),\qquad k\to\infty,  
\end{align*}
and then~\eqref{eq:ApproxElMain1} follows at once. 
\end{proof} 
Based on these preparations, we can now prove the main results of this section.
\begin{proof}[Proof of Theorems~\ref{thmlimit1} and~\ref{thmlimit2}]
According to Lemma~\ref{lem:verticaldown}, it suffices to show that there is a sequence~$(h_k)_{k \in \N}$
with~$h_k \searrow 0$ such that the corresponding discrete limit measures~$(\rho_\infty^{h_k,\xi})$
converge to the limit measure~$\varrho_\infty$ respectively~$\varrho^\xi_\infty$.
In the case~$\xi=0$, this is a consequence of the assumptions in Theorem~\ref{thmlimit1}.
In the case~$\xi>0$, on the other hand, this was proved in~\eqref{basicconv}.
\end{proof}

\begin{remark} (Why the reparametrization) \label{remrepar} {\em{ At the beginning of Section~\ref{seclip}, we reparametrized
the discrete curve by the action (see~\eqref{sjdef}). After interpolating~\eqref{interpolate}
and taking the limit~$h \searrow 0$, we obtained a continuous curve~$\varrho^\xi(s)$,
where the parameter~$s$ coincides with the action along the curve.

The purpose of the reparametrization by the action is to avoid energy plateaus, as we now
explain. Suppose we had taken the limit~$h \searrow 0$ without reparametrizing.
Then it is a possible scenario that the corresponding interpolated curve~$\rho^{h, \xi}(t)$
defined by~\eqref{eq:curvedefineA} stays almost constant for a certain range of the parameter~$t$
before leaving the energy plateau and approaching the minimizer at~$\Sact^{h, \xi}(\infty)$
(see Figure~\ref{fig:reparamterise}).
\begin{figure} 
	\begin{center}
		\begin{tikzpicture} 
			\node [left] at (-0.15,2) {$\mathcal{S}(\rho_{0})$}; 
			\node [left] at (-0.15,0.22) {$\mathcal{S}^{h,\xi}(\infty)$}; 
			\node [left] at (5.5,0) {$t$}; 
			\draw[->, thick] (0,-0.25) -- (0,2.5); 
			\draw[->, thick] (-0.25,0) -- (5,0); 
			\draw[-, thick] (0,2.025) to (1,2) [out = 0, in =180] to (2,1) -- (3,1) [out = 0, in = 180] to (5,0.25);
			\draw[green!40!black, ultra thick] (2,1) -- (3,1);
			\draw[-,green!40!black] (2,0.9) -- (2,1.1); 
			\draw[-,green!40!black] (3,0.9) -- (3,1.1); 
			\draw[dashed] (0,0.22) -- (5,0.22);
		\end{tikzpicture} 
	\end{center}
	\caption{Possible energy profile in the un-reparametrized situation. The reparametrization lets the flow clear such plateaus where the energy is not strictly decreased.} \label{fig:reparamterise}
\end{figure}
Since we have no a-priori control on the size of this parameter range in~$t$, we cannot exclude
the situation that the time~$t=t(h)$ when the curve leaves the plateau tends to infinity as~$h \searrow 0$.
In this case, the limiting curve as~$h \searrow 0$ would remain on the plateau for all~$t$,
implying that the end points~$\rho^{h, \xi}(\infty)$ would not converge as~$h \searrow 0$.
As a consequence, we could not be clear how to prove that the limit measure~$\rho^\xi(\infty)$
satisfies the approximative EL equations.

After the reparametrization by the action, however, the corresponding interpolated curves~\eqref{interpolate}
leave the energy plateau at a parameter~$s$ uniformly in~$h$, giving the desired convergence of the end points~\eqref{basicconv}.
This is crucial for proving that the limit measure satisfies the approximative EL equations
(Theorem~\ref{thmlimit2}).
}} \QEDrem
\end{remark}

\section{Further examples} \label{secfurtherex}
We now illustrate the previous abstract results in a few examples.
We choose~$\F= \overline{B_1(0)} \subset \R^2$ as a closed unit ball
in two dimensions. Moreover, we choose~$x_{0} \in \F$ and let~$\rho_0:=\delta_{x_{0}}$ be the Dirac measure at~$x_{0}$. 
Given a bounded continuous function~$V \in \hold^0(\F, \R) \cap L^\infty(\F, \R)$, we define the Lagrangian by
\begin{align*}
\L(x,y):=\frac{1}{2}\:\big( V(x)+V(y) \big)+c\:|x-y|^{2},\qquad x,y\in\F \:.
\end{align*}
The corresponding penalized action reads
\beq \label{penalty}
\begin{split}
\mathcal{S}^{h,\xi}(\mu) & =\int_{\F}V(x)\dif\mu(x) + c
\int_{\F} \dif\mu(x) \int_{\F} \dif\mu(y) \: |x-y|^{2} \\
&\quad\; + \frac{1}{2h}\:d(\mu,\rho_0)^{2}+\xi \:d(\mu,\rho_0) \:,
\end{split}
\eeq
where~$d$ is again either the Fr{\'e}chet or the Wasserstein metric~\eqref{dspecify}.

We begin with the case of the Wasserstein distance.
\begin{lemma} Assume that~$d=W_{p}$ for some~$2 \leq p<\infty$. Then, for any~$c>0$, every minimizer
of the penalized action~\eqref{penalty} has the form~$\rho=\delta_{x_{1}}$ for some~$x_{1}\in\F$.
\end{lemma}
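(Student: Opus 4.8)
The plan is to reduce the statement to showing that any minimizer has vanishing variance. First I would rewrite the penalized action~\eqref{penalty} more transparently. Expanding the square, the interaction term satisfies
\[
c\int_{\F}\dif\mu(x)\int_{\F}\dif\mu(y)\,|x-y|^2 \;=\; 2c\,\Var(\mu), \qquad
\Var(\mu):=\int_{\F}|x-\bar x|^2\dif\mu(x),\quad \bar x:=\int_{\F}x\,\dif\mu(x),
\]
while, since the only transport plan between~$\mu$ and a Dirac mass is the product measure, $W_p(\mu,\delta_{x_0})=\big(\int_{\F}|x-x_0|^p\dif\mu\big)^{1/p}$. Setting $f(x):=V(x)+\tfrac{1}{2h}|x-x_0|^2+\xi|x-x_0|$, one has $\Sact^{h,\xi}(\delta_{x_1})=f(x_1)$ for every~$x_1\in\F$. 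For a minimizer~$\mu$ all these quantities are finite: $\Sact^{h,\xi}(\mu)\leq\Sact^{h,\xi}(\delta_{x_0})=V(x_0)<\infty$ and $V$ is bounded, whence $W_p(\mu,\delta_{x_0})<\infty$, so $\int_{\F}|x-x_0|^2\dif\mu<\infty$, $\bar x$ is well defined and $\Var(\mu)<\infty$.

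The key step uses $p\geq2$: since~$\mu$ is a probability measure, the power-mean (Jensen) inequality gives $\big(\int_{\F}|x-x_0|^2\dif\mu\big)^{1/2}\leq W_p(\mu,\delta_{x_0})$ and $\int_{\F}|x-x_0|\dif\mu\leq W_p(\mu,\delta_{x_0})$. Hence
\[
\Sact^{h,\xi}(\mu)\;\geq\;\int_{\F}V\,\dif\mu+2c\,\Var(\mu)+\tfrac{1}{2h}\int_{\F}|x-x_0|^2\dif\mu+\xi\int_{\F}|x-x_0|\dif\mu
\;=\;\int_{\F}f\,\dif\mu+2c\,\Var(\mu).
\]
Since $\int_{\F}f\,\dif\mu$ is an average of~$f$ against a probability measure, the set $\{x\in\F:\ f(x)\leq\int_{\F}f\,\dif\mu\}$ has positive $\mu$-measure; choosing any point~$x_1$ in it gives $f(x_1)\leq\int_{\F}f\,\dif\mu$, and therefore
\[
\Sact^{h,\xi}(\mu)\;\geq\;f(x_1)+2c\,\Var(\mu)\;=\;\Sact^{h,\xi}(\delta_{x_1})+2c\,\Var(\mu).
\]
As~$\mu$ is a minimizer, $\Sact^{h,\xi}(\mu)\leq\Sact^{h,\xi}(\delta_{x_1})$, so $2c\,\Var(\mu)\leq0$; since $c>0$ this forces $\Var(\mu)=0$, i.e.\ $\mu=\delta_{\bar x}$, which is the assertion with $x_1=\bar x$.

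I expect the only genuinely delicate point to be the third display, namely dominating the penalization from below by the pointwise potential $\int_{\F}f\,\dif\mu$; this is exactly where the hypothesis $p\geq2$ enters (for $1\leq p<2$ the $\tfrac{1}{2h}$-term cannot be controlled this way). The rest is routine, but one should verify the integrability of~$f$ against~$\mu$ and the finiteness of $\Var(\mu)$ so that all displayed quantities and inequalities are legitimate.
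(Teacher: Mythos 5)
Your proof is correct and follows essentially the same route as the paper: since $\rho_0=\delta_{x_0}$ the distance $W_p(\mu,\delta_{x_0})$ is just the $p$-th moment, H\"older/Jensen with $p\geq 2$ bounds the two penalization terms from below by the corresponding pointwise moments, and comparison with Dirac competitors then forces the interaction term of a minimizer to vanish. Your two small variations --- writing the interaction term as $2c\,\Var(\mu)$ and picking $x_1$ from the sublevel set $\{f\leq\int_\F f\,\dif\mu\}$ instead of as a minimizer of $f$ --- are harmless, the latter even sidestepping the attainment question that the paper settles via boundedness of $V$ and coercivity.
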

\begin{proof}
We observe that, for a Dirac measure centered at some~$x\in\F$, the
penalized action simplifies to
\begin{align}\label{eq:valueexplicit}
\mathcal{S}^{h,\xi}(\delta_x)=V(x) +\frac{1}{2h}|x-x_0|^{2} +\xi \:|x-x_0| \:. 
\end{align}
Since~$V$ is bounded and continuous, this function is minimal at some~$x_{1}\in\F$.
Next, let~$\rho\in\meas_{1}(\F)$ be an arbitrary measure. Using that
\[ d(\rho,\rho_0) = \bigg( \int_\F |x-x_0|^p \: \dif \rho(x) \bigg)^\frac{1}{p}\:, \]
we obtain
\begin{align}
\mathcal{S}^{h,\xi}(\rho) &= \int_\F \Big( V(x) +\frac{1}{2h}|x-x_0|^{2} +\xi\:|x-x_0| \Big)\: \dif \rho(x) \label{t1} \\
&\quad\:+ c \int_{\F} \dif\rho(x) \int_{\F} \dif\rho(y) \: |x-y|^{2} \label{t2} \\
&\quad\:+ \frac{1}{2h} \bigg(\int_\F |x-x_0|^p\: \dif \rho(x) \bigg)^\frac{2}{p} - \frac{1}{2h} \int_\F |x-x_0|^{2}\:
\dif \rho(x) \label{t3} \\
&\quad\:+ \xi \bigg(\int_\F |x-x_0|^p\: \dif \rho(x) \bigg)^\frac{1}{p} - \xi \int_\F |x-x_0|\: \dif \rho(x) \label{t4} \:.
\end{align}
Now~\eqref{t1} is bounded from below by~$\mathcal{S}^{h,\xi}(\delta_{x_1})$
(recall that~$x_1$ was defined as the minimizer of the integrand of~\eqref{t1}).
Moreover, \eqref{t2} is obviously non-negative, and it is zero if and only if~$\rho$ is a Dirac measure.
Finally, the summands in~\eqref{t3} and~\eqref{t4} are non-negative in view of H\"older's inequality for normalized measures
(here we make essential use of the fact that~$p \geq 2$).
We conclude that every minimizing measure is a Dirac measure.
\end{proof}
In view of this lemma, the flow constructed in Section~\ref{secflow} reduces to the flow obtained
by minimizing movements from the action in the plane~\eqref{eq:valueexplicit}.
If~$V$ is smooth and~$\xi=0$, we obtain the usual gradient flow for a curve~$\gamma$ in~$\F$
\[ \dot{\gamma}(t) = -\nabla V\big( \gamma(t) \big) \:. \]

The above example generalizes immediately to higher dimension. In this way,
any gradient flow in finite dimension can be recovered as a minimizing movement flow
of a specific class of causal variational principles.

The above example changes considerably in the 
case~$d=d_{\meas(\F)}$ where we penalize with the Fr{\'e}chet metric.
In this case, for a Dirac measure, the action becomes
\[ \mathcal{S}^{h,\xi}(\delta_x)=V(x) + \left\{ \begin{array}{cl} 0 & \text{if~$x=x_0$} \\
\displaystyle \frac{1}{2h} +\xi & \text{if~$x \neq x_0$}\:. \end{array} \right. \]
Minimizing this action for sufficiently small~$h$, we get the unique minimizer~$\mu=\rho_0$.
Therefore, considering minimizing movements in the class of Dirac measures gives
the constant flow~$\varrho(t)=\rho_0$. This flow converges trivially in the limit~$t \rightarrow \infty$,
but the limit measure does not need to satisfy any EL equations or approximative EL equations.

Nevertheless, minimizing movements become non-trivial if one varies in the class~$\rho\in\meas_{1}(\F)$ of
arbitrary measures. To see this, we let~$x_1$ be a minimum of the potential~$V$.
We consider the family of measures~$(\rho_\tau)_{\tau \in [0,1]}$ with
\[ \rho_\tau = \tau\: \delta_{x_1} + (1-\tau)\: \delta_{x_0} \:. \]
Then
\[ \mathcal{S}^{h,\xi}(\rho_\tau)= V(x_0) + \tau \:\big(V(x_1) - V(x_0) \big) +
2 c\, \tau (1-\tau)\: |x_1-x_0|^2 + \frac{1}{2h}\: \tau^2 + \xi \,\tau \:. \]
Note that the linear term~$\tau (V(x_1) - V(x_0))$ is negative.
This implies that the minimizer within our family is attained for~$\tau>0$, provided that~$c$
and~$\xi$ are sufficiently small.
The flow constructed in Section~\ref{secflow} is non-local in the sense that the support of~$\varrho(t)$
typically changes discontinuously. This can be understood immediately from the fact that the 
total variation norm does not involve the metric on~$\F$ and therefore cannot ``see'' if the points on~$\F$
are near or far apart. Nevertheless, as is made precise in Section~\ref{seclimit},
this nonlocal flow tends to a critical measure.

\section{Minimizing movements for causal fermion systems in finite dimensions}\label{seccfs}
The goal of this section is to extend the previous constructions to the causal action principle
for causal fermion systems on a finite-dimensional Hilbert space.
\subsection{Causal fermion systems and the reduced causal action principle}
We now recall the basic setup and introduce the main objects to be used later on.
\begin{definition} \label{defcfs} (causal fermion systems of fixed trace) {\em{ 
Given a finite-dimensional Hilbert space~$\H$ with scalar product~$\la .|. \ra_\H$
and a parameter~$n \in \N$ (the {\em{``spin dimension''}}), we 
let~$\F \subset \Lin(\H)$ be the set of all
symmetric linear operators~$x$ on~$\H$ with trace one,
\beq \label{fixedtrace}
\tr x = 1 \:,
\eeq
which (counting multiplicities) have at most~$n$ positive and at most~$n$ negative eigenvalues.
On~$\F$ we are given
a positive measure~$\rho$ (defined on a~$\sigma$-algebra of subsets of~$\F$).
We refer to~$(\H, \F, \rho)$ as a {\em{causal fermion system}}.
}}
\end{definition} \noindent

A causal fermion system describes a spacetime together
with all structures and objects therein.
In order to single out the physically admissible
causal fermion systems, one must formulate physical equations. To this end, we impose that
the measure~$\rho$ should be a minimizer of the causal action principle,
which we now introduce. For any~$x, y \in \F$, the product~$x y$ is an operator of rank at most~$2n$. 
However, in general it is no longer a symmetric operator because~$(xy)^* = yx$,
and this is different from~$xy$ unless~$x$ and~$y$ commute.
As a consequence, the eigenvalues of the operator~$xy$ are in general complex.
We denote these eigenvalues counting algebraic multiplicities
by~$\lambda^{xy}_1, \ldots, \lambda^{xy}_{2n} \in \C$
(more specifically,
denoting the rank of~$xy$ by~$k \leq 2n$, we choose~$\lambda^{xy}_1, \ldots, \lambda^{xy}_{k}$ as all
the non-zero eigenvalues and set~$\lambda^{xy}_{k+1}, \ldots, \lambda^{xy}_{2n}=0$).
Given a parameter~$\kappa>0$ (which will be kept fixed throughout),
we introduce the~$\kappa$-Lagrangian and the causal action by
\begin{align}
\text{\em{$\kappa$-Lagrangian:}} && \L(x,y) &= 
\frac{1}{4n} \sum_{i,j=1}^{2n} \Big( \big|\lambda^{xy}_i \big|
- \big|\lambda^{xy}_j \big| \Big)^2 + \kappa\: \bigg( \sum_{j=1}^{2n} \big|\lambda^{xy}_j \big| \bigg)^2 \label{Lagrange} \\
\text{\em{causal action:}} && \Sact(\rho) &= \iint_{\F \times \F} \L(x,y)\: \dif \rho(x)\, \dif \rho(y) \:. \label{Sdef}
\end{align}
The {\em{reduced causal action principle}} is to minimize~$\Sact$ by varying the measure~$\rho$
under the
\[ 
\text{\em{volume constraint:}} \qquad \rho(\F) = 1 \:, \]
within the class of all regular Borel measures (with respect to the topology on~$\F \subset \Lin(\H)$ induced by
the operator norm).

In order to put these definitions into context, we briefly explain how the
above variational principle is obtained from the general causal action principle as
introduced in~\cite[\S1.1.1]{cfs}. First of all, we here restrict attention to the {\em{finite-dimensional
case}}~$\dim \H< \infty$. In this case, the total volume~$\rho(\F)$ is finite.
Using the rescaling freedom~$\rho \rightarrow \sigma \rho$, it is no loss of generality to restrict
attention to normalized measures.
Next, using that minimizing measures are supported on operators of constant trace
(see~\cite[Proposition~1.4.1]{cfs}), we may fix the trace of the operators. Moreover, 
by rescaling the operators according to~$x \rightarrow \lambda x$ with~$\lambda \in \R$,
one can assume without loss of generality that this trace is equal to one~\eqref{fixedtrace}.
Finally, we here consider the {\em{reduced}} variational principle where
the so-called boundedness constraint of the causal action principle is built in
by a a Lagrange multiplier term, namely the last summand in~\eqref{Lagrange}.
This Lagrange multiplier term is needed for the existence theory, which we now recall.

\subsection{Moment measures and existence theory}
Endowed with the metric induced by the operator norm,
\[ d(x,y) := \|x-y\|_{\Lin(\H)} \:, \]
the set~$\F \subset \Lin(\H)$ is a locally compact metric space.
However, it is unbounded and therefore {\em{not compact}}.
For this reason, the causal action principle does not quite fit to the compact setting as introduced
in Section~\ref{seccvp}. Nevertheless, we can adapt the methods, as we now explain.
The main tool is to work with the so-called moment measures first introduced in~\cite{continuum}.
\begin{definition} \label{defmm}
Let~$\K$ be the compact metric space
\[ \K = \{ p \in \F \text{ with } \|p\|=1 \} \cup \{0\} \:. \]
For a given measure~$\rho$ on~$\F$, we define the measurable sets~$\Omega \subset \K$ by the
requirement that the sets~$\R^+ \Omega = \{ \lambda p \:|\: \lambda \in \R^+, p \in \Omega\}$
and~$\R^- \Omega$ should be~$\rho$-measurable in~$\F$. We introduce the measures~$\m^{(0)}$, 
$\m^{(1)}_\pm$ and~$\m^{(2)}$ by
\begin{align*}
\m^{(0)}(\Omega) &= \frac{1}{2}\: \rho \big(\R^+ \Omega \setminus \{0\} \big) 
+ \frac{1}{2}\: \rho \big( \R^- \Omega \setminus \{0\} \big)
+ \rho \big( \Omega \cap \{0\} \big) \\ 
\m^{(1)}_+(\Omega) &= \frac{1}{2} \int_{\R^+ \Omega} \|p\| \,\dif \rho(p) \\ 
\m^{(1)}_-(\Omega) &= \frac{1}{2} \int_{\R^- \Omega} \|p\| \,\dif \rho(p) \\ 
\m^{(2)}(\Omega) &= \frac{1}{2} \int_{\R^+ \Omega} \|p\|^2 \,\dif \rho(p) \:+\:
\frac{1}{2} \int_{\R^- \Omega} \|p\|^2 \,\dif \rho(p)
\:. 
\end{align*}
The measures~$\m^{(l)}$ and~$\m^{(l)}_\pm$ are referred to as the~$l^\text{th}$ {\bf{moment measures}}.
\end{definition}
The main point is that the causal action as well as the constraints can be expressed purely in terms
of the moment measures. Indeed, 
as shown in~\cite[Section~2.3]{continuum} (for more details
see also~\cite[Section~12.6]{intro}),
the volume constraint~$\rho(\F)=1$ and the trace constraints
can be expressed as
\beq \label{m0}
\m^{(0)}(\K) = 1 \qquad \text{and} \qquad \tr(p)\: \dif \m^{(1)}(p) = \dif \m^{(0)}(p) \:,
\eeq
whereas the action~\eqref{Sdef} can be written as
\beq
\Sact(\rho) = \iint_{\K \times \K} \L(p, q)\: \dif \m^{(2)}(p) \,\dif \m^{(2)}(q) \label{Sm2} \:.
\eeq
Here we make essential use of the fact that the trace is homogeneous of degree one
and that the~$\kappa$-Lagrangian in both arguments is homogeneous of degree two.

Working with these moment measures, one can prove existence of minimizers,
as is summarized in the following theorem.
\begin{theorem} \label{thmexist}
Let~$(\rho_\ell)_{\ell \in \N}$ be a minimizing sequence. Then 
there exists a subsequence~$(\rho_{\ell_k})_{k \in \N}$ which
converges in the weak*-topology to a minimizer~$\rho$.
\end{theorem}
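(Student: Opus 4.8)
The plan is to carry out the direct method on the \emph{compact} space $\K$ rather than on the non-compact set $\F$, exploiting that — by \eqref{m0} and \eqref{Sm2} — both the constraints and the action $\Sact$ depend only on the moment measures of Definition~\ref{defmm}. Thus, given a minimizing sequence $(\rho_\ell)_{\ell\in\N}$, I would first pass to the associated moment measures $\m^{(0)}_\ell,\m^{(1)}_{\pm,\ell},\m^{(2)}_\ell$ on $\K$.

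The crucial preliminary step is a uniform bound on the total masses of these moment measures. The bound $\m^{(0)}_\ell(\K)=1$ is immediate from the volume constraint $\rho_\ell(\F)=1$. The bounds on $\m^{(1)}_{\pm,\ell}(\K)$ and $\m^{(2)}_\ell(\K)$ — equivalently, on $\int_\F\|x\|\,\dif\rho_\ell(x)$ and $\int_\F\|x\|^2\,\dif\rho_\ell(x)$, the first following from the second by Cauchy--Schwarz — are where the boundedness (Lagrange-multiplier) term in the $\kappa$-Lagrangian \eqref{Lagrange} enters: using the elementary estimate $\L(x,y)\ge\kappa\,\big(\sum_{j=1}^{2n}|\lambda^{xy}_j|\big)^2$ together with the trace normalization $\tr x=1$, boundedness of $\Sact(\rho_\ell)$ along the minimizing sequence forces a uniform second-moment bound, exactly as in \cite[Section~2]{continuum}. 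Granting this, the Banach--Alaoglu theorem on the compact metric space $\K$ yields a subsequence $(\ell_k)$ and limit measures $\m^{(0)},\m^{(1)}_\pm,\m^{(2)}$ with $\m^{(l)}_{\ell_k}\wstar\m^{(l)}$.

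Next I would check that the limit family is again admissible and realizes the infimum. The constraints \eqref{m0} are linear in the moment measures with continuous integrands (the constant $1$ and the map $p\mapsto\tr p$ are continuous on $\K$), so they pass to the weak$^*$ limit. The $\kappa$-Lagrangian $\L$ is continuous on the compact set $\K\times\K$, since the eigenvalues $\lambda^{pq}_j$ depend continuously on $(p,q)$ as the roots of the characteristic polynomial of $pq$; hence, by the argument used for Lemma~\ref{lem:weakconv} — which applies verbatim to uniformly bounded (rather than probability) measures — the map $\mu\mapsto\iint_{\K\times\K}\L\,\dif\mu\,\dif\mu$ is continuous along the weak$^*$-convergent bounded sequence $(\m^{(2)}_{\ell_k})$. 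Consequently $(\m^{(0)},\m^{(1)}_\pm,\m^{(2)})$ satisfies \eqref{m0} and gives action value $\lim_k\Sact(\rho_{\ell_k})=\inf_{\meas_1(\F)}\Sact$.

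Finally I would reconstruct a minimizer. Because the trace-one condition forces each ray $\R^\pm p$ to meet $\F$ in at most one point, the assignment $\rho\mapsto(\m^{(0)},\m^{(1)}_\pm,\m^{(2)})$ is a weak$^*$-homeomorphism of $\meas_1(\F)$ onto the set of admissible moment families, so the limit family corresponds to a unique $\rho\in\meas_1(\F)$ which is the weak$^*$-limit of $(\rho_{\ell_k})$; by the previous two steps $\rho$ obeys the constraints and satisfies $\Sact(\rho)=\inf_{\meas_1(\F)}\Sact$, hence is a minimizer. (Alternatively one may avoid the reconstruction: the uniform second-moment bound makes $(\rho_\ell)$ tight on $\F$, so along a subsequence $\rho_{\ell_k}\wstar\rho$ with $\rho(\F)=1$, and lower semicontinuity of $\Sact$ — obtained by truncating $\L$ and using $\L\ge0$, monotone convergence and tightness of $\rho_{\ell_k}\otimes\rho_{\ell_k}$ — gives $\Sact(\rho)\le\inf\Sact$, whence equality.) I expect the only genuine obstacle to be the uniform second-moment bound $\sup_\ell\m^{(2)}_\ell(\K)<\infty$: this is the single place where the non-compactness of $\F$ bites and where the boundedness constraint is indispensable, while the remaining steps are soft consequences of weak$^*$ compactness on a compact space, continuity of $\L$, and the bookkeeping of the moment-measure correspondence.
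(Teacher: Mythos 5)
Your main route has a genuine gap at the reconstruction step. The assertion that $\rho\mapsto(\m^{(0)},\m^{(1)}_\pm,\m^{(2)})$ is a weak*-homeomorphism of $\meas_1(\F)$ onto the set of admissible moment families --- and hence that the limit family is realized by some $\rho\in\meas_1(\F)$ with $\Sact(\rho)=\lim_k\Sact(\rho_{\ell_k})$ --- fails because the image of this map is not weak*-closed: second-moment mass can escape to infinity in $\F$ even though the trace is fixed (an operator with eigenvalues $R+1$ and $-R$ has trace one and norm of order $R$). For example, for $\rho_\ell=(1-\epsilon_\ell)\,\delta_{x_0}+\epsilon_\ell\,\delta_{y_\ell}$ with $\|y_\ell\|\to\infty$ and $\epsilon_\ell\|y_\ell\|^2\to c>0$, the limits satisfy $\m^{(0)}=\,$(essentially)$\,\delta_{\hat x_0}$ while $\m^{(2)}$ picks up a singular lump $c\,\delta_{\hat y}$ carried on a set of $\m^{(0)}$-measure zero; no measure on $\F$ has such a moment family, since the trace normalization forces $\m^{(2)}=f^2\,\m^{(0)}$ with $f$ finite $\m^{(0)}$-a.e. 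This is precisely why the paper claims only lower semicontinuity \eqref{Slower} rather than continuity along the moment-measure convergence: its proof inserts the Cauchy--Schwarz-type estimate \eqref{m1es} from \cite[Lemma~2.12]{continuum}, extracts the Radon--Nikodym density $\m^{(1)}=f\,\m^{(0)}$ with $f\in L^2(\K,\dif\m)$ as in \eqref{RN}, and then \emph{discards} the possible singular part of $\m^{(2)}$ using $\L\geq 0$, so that the reconstructed competitor $\rho=F_*\m^{(0)}$ satisfies $\Sact(\rho)\leq\liminf_k\Sact(\rho_{\ell_k})=\inf\Sact$ and is therefore a minimizer. This decomposition-and-drop step is the missing idea in your main argument; without it the ``admissibility'' of the limit family is simply unproved.

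Your parenthetical alternative, by contrast, is essentially sound and is a genuinely different route from the paper's: the uniform second-moment bound gives $\rho_\ell(\{\|x\|>R\})\leq C/R^{2}$, and since closed balls in $\F\subset\Lin(\H)$ are compact (finite dimension), the sequence is tight, so along a subsequence $\rho_{\ell_k}\wstar\rho$ with $\rho(\F)=1$; truncating $\L_R:=\min(\L,R)$, using convergence of the product measures and monotone convergence in $R$ yields $\Sact(\rho)\leq\liminf_k\Sact(\rho_{\ell_k})=\inf\Sact$, so $\rho$ is a minimizer, and this route even delivers the weak* convergence asserted in the theorem directly, which the paper's moment-measure construction leaves implicit. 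Note that both of your routes, like the paper's, hinge on the uniform bound $\sup_\ell\m^{(2)}_\ell(\K)<\infty$ coming from the $\kappa$-term in \eqref{Lagrange}, which you correctly identify as the crux and defer to \cite[Section~2]{continuum} at the same level of detail as the paper itself.
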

\begin{proof}
The proof is a direct adaptation of methods introduced in~\cite[Section~2]{continuum}
(see also~\cite[Section~12.6]{intro}). We only give a sketch and refer for more details to the
just-mentioned works. We let~$\m^{(l)_\ell}$ and~$\m^{(l)}_{\pm, \ell}$ 
be the moment measures corresponding to the measures~$\rho_\ell$.
Clearly, the measures~$\m^{(0)}_\ell$ and~$\m^{(1)}$ satisfy the constraints~\eqref{m0}.
Moreover, a direct estimate using the Lagrange multiplier term in~\eqref{Lagrange} shows that
the first and second moment measure are uniformly bounded.
Therefore, the Banach-Alaoglu theorem provides us with a non-relabeled subsequence such that
\[ \m_{\ell_k}^{(0)} \rightarrow \m^{(0)} \:,\qquad \m_{\ell_k,\pm}^{(1)} \rightarrow \m^{(1)}_\pm \qquad \text{and} \qquad \m_{\ell_k}^{(2)} \rightarrow \m^{(2)} \:. \]
with convergence in
the~$\hold^0(\K)^*$-topology, where~$\m^{(0)}\in \meas_{1}(\K)$
is a normalized Borel measure and~$\m^{(1)}_\pm, \m^{(2)} \in \meas(\K)$ are Borel measures. As shown in~\cite[Lemma~2.12]{continuum} (for more details see also~\cite[Chapter~12]{intro}),
we know that there is a parameter~$\varepsilon$ (which depends only on the spin dimension~$n$ and the
dimension of the Hilbert space~$f$) such that for any measurable set~$\Omega \subset \K$
the following inequalities hold,
\begin{align}
\m^{(1)}_\pm(\Omega)^2 &\leq \m^{(0)}(\Omega)\:\m^{(2)}(\Omega) \label{m1es} \\
\m^{(2)}(\K) &\leq\; \frac{\sqrt{\Sact(\rho)}}{\sqrt{\kappa}\: \varepsilon}\:. \label{m2es}
\end{align}
These inequalities show that the measures~$\m^{(2)}$ and~$\m^{(1)}_\pm$
are bounded. Therefore, we can introduce the signed measure~$\m^{(1)}$
by~$\m^{(1)} := \m^{(1)}_+ - \m^{(1)}_-$.
The estimate~\eqref{m1es} implies that this signed measure
is absolutely continuous with respect
to~$\m^{(0)}$. Therefore, it has the Radon-Nikodym representation
\beq \label{RN}
\m^{(1)} = f\: \m^{(0)}\qquad \text{with~$f \in L^1(\K, \dif \m^{(0)})$}\:.
\eeq
Moreover, we conclude from~\eqref{m2es} that~$f$ lies even in~$L^2(\K, \dif \m^{(0)})$ and that
\[ |f|^2\, \m^{(0)}\leq \m^{(2)} \:. \]
Since the $\kappa$-Lagrangian is non-negative, the action becomes smaller if we replace the
measure~$\m^{(2)}$ by~$|f|^2\, \m^{(0)}$. Therefore, the 
measure~$\rho$ defined by
\beq \label{rholimit}
\rho := F_* \m^{(0)}\qquad \text{with} \qquad F : \K \rightarrow \F \:,\quad x \mapsto f(x)\, x
\eeq
is the desired minimizer.
\end{proof}

We point out that the compactness result used in this proof yields convergent sequences of measures
\beq \label{mweak}
\m_\ell^{(0)} \rightarrow \m^{(0)} \qquad \text{and} \qquad \m^{(1)}_\ell \rightarrow \m^{(1)} \:.
\eeq
The action is {\em{lower semicontinuous}} with respect to this convergence, i.e.\
\beq \label{Slower}
\Sact(\rho) \leq \liminf_{\ell \rightarrow \infty} \Sact(\rho_\ell)
\eeq
with~$\rho$ as defined by~\eqref{rholimit} via the Radon-Nikodym decomposition~\eqref{RN}.

\subsection{Minimizing movements for the causal action principle}
In view of the constructions of the previous section, it seems preferable to work with
the moment measures.
For notational simplicity, we denote the zeroth moment measure by~$\m$.
Then the proof of Theorem~\ref{thmexist} shows that, for constructing
minimizers, it is no loss of generality to consider measures of the form
\beq \label{rhopush}
\rho = F_* \m
\eeq
with
\[ F : \K \rightarrow \F \:,\qquad x \mapsto f(x)\, x \qquad \text{with} \qquad f \in L^2(\K, d\m; \R^+_0) \:. \]
According to~\eqref{m0}, the volume and trace constraints are implemented by demanding that
\[ \m(\K) = 1 \qquad \text{and} \qquad f(x)\, \tr(x) = 1 \quad \text{for almost all~$x \in \K$} \:. \]
Moreover, according to~\eqref{Sm2}, the causal action becomes
\[ \Sact(\m, f) = \iint_{\K \times \K} \L(p, q)\: |f(p)|^2\: |f(q)|^2\: \dif \m(p) \,\dif \m(q) \:. \]
Note that the measure~$\rho$ is now described by the pair
\beq \label{PKdef}
(\m, f) \;\in\; {\mathcal{P}}(\K) := \big\{ (\mu, g) \:\big|\: \mu \in \meas_1(\K), \; g \in L^2(\K, \R^+_0; \dif \mu) \big\} \:.
\eeq

Guided by the procedure for causal variational principles~\eqref{eq:varprinB0}, we now want to
penalize the action. However, the choice of the distance function is not obvious.
A natural idea is to take the distance function which reproduces the topology of the convergence
of measures in~\eqref{mweak}. Since we now restrict attention to measures of the form~\eqref{rhopush},
the resulting distance function could be written as
\[ d\big( (\m, f), (\m', f') \big) := d(\m, \m') + d\big( f \,\m, f' \,\m' \big) \:, \]
where on the right we consider again the Fr{\'e}chet or the Wasserstein metric~\eqref{dspecify},
but now on~$\meas(\K)$. But this choice has the disadvantage that the action is only
lower semicontinuous~\eqref{Slower} (which would not allow for passing to the limit
in the EL equations, as done for causal variational principles in Lemma~\ref{lemmaELlim}).
Therefore, it is preferable to choose a parameter
\[ 
q > 2 \]
and to introduce a distance function on~${\mathcal{P}}(\K)$ by
\beq \label{d2}
d\big( (\m, f), (\m', f') \big) := d(\m, \m') + d\big( |f|^q \,\m, |f'|^q \,\m' \big) \:.
\eeq

In analogy to~\eqref{eq:varprinB}, given parameters~$\xi \geq 0$, $h>0$ and a
pair~$(\m_0, f_0) \in {\mathcal{P}}(\K)$, we consider the {\em{causal action with penalization}}
\begin{align}\label{eq:varprinB'}
\Sact^{h,\xi}(\m, f):=\Sact(\m, f)+\frac{1}{2h}\: d\big( (\m, f), (\m_0, f_0) \big)^2+
\xi\: d\big( (\m, f), (\m_0, f_0) \big)
\end{align}

\begin{lemma}\label{lem:exist1cfs}
For any~$q>2$, $\xi\geq 0$, $h>0$ and~$(\m_0, f_0) \in {\mathcal{P}}(\K)$, there exists a minimizer~$(\m, f) \in {\mathcal{P}}(\K)$ of the causal action with penalization~\eqref{eq:varprinB'}. Moreover, the action is continuous in the sense that every
minimizing sequence has a subsequence~$(\m_\ell, f_\ell)$ such that
\beq \label{Scont}
\Sact^{h,\xi}(\m, f) = \lim_{\ell \rightarrow \infty} \Sact^{h,\xi} \big( \m_\ell, f_\ell \big) \:.
\eeq
\end{lemma}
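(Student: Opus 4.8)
The plan is to follow the same direct-method strategy as in the proof of Lemma~\ref{lem:exist1}, but now carefully tracking \emph{both} components $(\m,f)$ of a point in $\mathcal{P}(\K)$ and exploiting that the penalization with exponent $q>2$ provides the compactness needed to pass to the limit in the action itself (not merely lower semicontinuously). First I would observe that $\Sact^{h,\xi}\geq 0$, so $m:=\inf_{\mathcal{P}(\K)}\Sact^{h,\xi}$ exists in $[0,\infty)$; pick a minimizing sequence $(\m_\ell,f_\ell)$. The key bookkeeping object is the measure $\nu_\ell := |f_\ell|^q\,\m_\ell \in \meas(\K)$: the penalization term contains $d(|f_\ell|^q\m_\ell, |f_0|^q\m_0)$, so along a minimizing sequence $\sup_\ell \nu_\ell(\K) = \sup_\ell \|f_\ell\|^q_{L^q(\m_\ell)} < \infty$. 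By Banach--Alaoglu on $\meas(\K)=\hold(\K)^*$, after passing to a (non-relabeled) subsequence we get weak* limits $\m_\ell \wstar \m \in \meas_1(\K)$ and $\nu_\ell \wstar \nu \in \meas(\K)$, with $\nu\geq 0$.

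The next step is to recover the limiting pair $(\m,f)$. Since $\nu_\ell = |f_\ell|^q \m_\ell$ is absolutely continuous with respect to $\m_\ell$ with a uniform $L^1$-type bound on the ``density to the $1/q$ power'', one argues (as in the Radon--Nikodym step of Theorem~\ref{thmexist}, or via a standard lower-semicontinuity-of-the-functional $\mu \mapsto \int |d\sigma/d\mu|^q \, d\mu$ argument) that $\nu$ is absolutely continuous with respect to $\m$ and that its density is the $q$-th power of some $f \in L^q(\K,\R_0^+;\m)$, i.e. $\nu = |f|^q \m$, with $\|f\|_{L^q(\m)}^q \leq \liminf_\ell \|f_\ell\|_{L^q(\m_\ell)}^q$. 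Because $q>2$, Hölder (for the normalized measure $\m$) gives $f \in L^2(\K,\R_0^+;\m)$, so $(\m,f)\in\mathcal{P}(\K)$. One must also check the trace constraint $f(x)\tr(x)=1$ passes to the limit: this is where one uses that $\tr(x)>0$ on $\K\setminus\{0\}$ and that $f_\ell \tr = 1$ a.e., which identifies $f$ on the support modulo the null set $\{0\}$; alternatively, since the constraint $f\tr \equiv 1$ just means $f\m = \m^{(1)}$ in the moment-measure language, one transfers the weak* convergence $f_\ell \m_\ell \wstar f\m$ from the $\m^{(1)}$-compactness already used in Theorem~\ref{thmexist}.

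With $(\m,f)$ in hand, I would show $\Sact^{h,\xi}(\m,f)\leq m$, hence $=m$. The distance terms $d((\m,f),(\m_0,f_0)) = d(\m,\m_0) + d(|f|^q\m, |f_0|^q\m_0)$ are lower semicontinuous under the weak* convergences $\m_\ell\wstar\m$, $|f_\ell|^q\m_\ell \wstar |f|^q\m$ in both Case~1 (total variation is weak* l.s.c.) and Case~2 ($W_p$ metrizes weak* convergence, so the distance is in fact continuous). The action term $\Sact(\m,f) = \iint \L(p,q)|f(p)|^2|f(q)|^2\,d\m(p)\,d\m(q) = \iint \L(p,q)\,d\m^{(2)}(p)\,d\m^{(2)}(q)$ with $\m^{(2)} = |f|^2\m$: the crucial point is that $|f_\ell|^2 \m_\ell \to |f|^2\m$ weak* \emph{strongly enough} — here the gain from $q>2$ is that $\{|f_\ell|^2\}$ is uniformly integrable with respect to $\{\m_\ell\}$ (by the uniform $L^q$ bound and de la Vallée-Poussin), which upgrades the weak* convergence of $|f_\ell|^2\m_\ell$ to $|f|^2\m$ from lower semicontinuity to genuine convergence, and then Lemma~\ref{lem:weakconv} (applied on the compact space $\K$) gives $\Sact(\m_\ell,f_\ell)\to\Sact(\m,f)$. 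Combining, $\Sact^{h,\xi}(\m,f)\leq \liminf_\ell \Sact^{h,\xi}(\m_\ell,f_\ell) = m$, so $(\m,f)$ is a minimizer; and running the same convergence for the full minimizing sequence (not just for establishing $\leq m$) yields \eqref{Scont}. The main obstacle is precisely the continuity (rather than mere lower semicontinuity) of $\Sact$ along the minimizing sequence: one must verify that $q>2$ indeed forces uniform integrability of $|f_\ell|^2$ against the varying measures $\m_\ell$, and that this is compatible with the weak* convergence of $\m_\ell$ — this is the technical heart and the reason the authors introduced the exponent $q$ in \eqref{d2} in the first place.
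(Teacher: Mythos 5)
Your proposal follows essentially the same route as the paper's proof: the direct method with the penalization giving boundedness of $\m_\ell$ and $|f_\ell|^q\,\m_\ell$, Banach--Alaoglu for weak* limits, a H\"older/Radon--Nikodym identification of the limit pair $(\m,f)$, and the gain from $q>2$ (uniform integrability of $|f_\ell|^2$ against $\m_\ell$) to pass to the limit in the action rather than merely obtaining lower semicontinuity, so it is correct to the same standard as the paper's own argument. The only minor difference is bookkeeping: you read off $f$ from the limit of the $q$-th moment measures, whereas the paper extracts it from the first and second moment measures $\m^{(1)}=f\,\m$, $\m^{(2)}=f^2\,\m$; you are in fact somewhat more explicit than the paper about the semicontinuity of the distance terms and about exactly where $q>2$ enters.
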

\begin{proof} Since the $\kappa$-Lagrangian is non-negative, the penalized action is bounded below and
thus~$m:=\inf \mathcal{S}^{h,\xi}$ exists in~$[0,\infty)$. We choose a minimizing sequence~$(\m_{\ell}, f_\ell)$
for~$\mathcal{S}^{h,\xi}$, so that~$m=\lim_{\ell\to\infty}\mathcal{S}^{h,\xi}(\m_{\ell}, f_\ell)$.
Due to the penalization, the sequences of measures~$\m_\ell$ and~$|f_\ell^q|\, \m_\ell$ are bounded. 
Therefore, the Banach-Alaoglu theorem provides us with a non-relabeled subsequence such that
\[ \m_\ell \rightarrow \m\:, \qquad |f_\ell|^q\, \m_\ell \rightarrow \m^{(q)} \]
with a normalized Borel measure~$\m \in \meas_{1}(\K)$ and a Borel measure~$\m^{(q)} \in \meas(\K)$.
Now for any Borel subset~$\Omega \subset \K$, we can apply the H\"older inequality to obtain
\[ \m^{(2)}_\ell(\Omega) = \int_\Omega f_\ell^2 \: \dif \m_\ell
\leq \m_\ell(\Omega)^{\frac{q-2}{q}} \bigg( \int_\Omega f_\ell^q\: \dif \m_\ell \bigg)^\frac{2}{q} \:. \]
Passing to the limit, we obtain
\[ \m^{(2)}(\Omega) 
\leq \m(\Omega)^{\frac{q-2}{q}}\: \m^{(q)}(\Omega)^{\frac{2}{q}} \: \]
This shows that~$\m^{(2)}$ is absolutely continuous with respect to~$\m$.
Therefore, we can represent it as~$\m^{(2)} = h\, \m$ with~$h \in L^1(\K, \dif \m)$.
Repeating this procedure for~$\m^{(1)}$, we conclude that there is a function~$f \in L^2(\K, \dif \m)$ such that
\[ \m^{(1)} = f\, \mu \qquad \text{and} \qquad \m^{(2)} = f^2\, \m \:. \]
Therefore, defining the limit measure~$\rho$ again by~\eqref{rholimit},
all the moment measures~$\m_\ell$, $\m^{(1)}_\ell$ and~$\m^{(2)}_\ell$ converge.
Using that the Lagrangian is continuous on~$\K \times \K$, in~\eqref{Sm2} we can pass to the limit.
This proves that the action is indeed continuous in the sense~\eqref{Scont}.
\end{proof}

Now Propositions~\ref{thm:main1} and~\ref{prop:LipRepara} extend in a straightforward way.
The only additional ingredient to keep in mind is that the 
causal Lagrangian is indeed H\"older continuous with H\"older exponent~$\alpha=1/(2n+1)$
(see~\cite[Theorems~5.1 and~5.3]{banach}), so that we can use the estimate~\eqref{hoeles}.
\begin{theorem}\label{thm:cfs1}
For any~$\xi \geq 0$, there is a  H\"older continuous flow
\[ (\m^\xi, f^\xi) \in\hold^{0,\frac{1}{2}}([0,\infty);{\mathcal{P}}(\K)) \]
with~$(\m^\xi, f^\xi)(0)=(\m_0, f_0)$. Setting
\[ 
t_{\max} := \inf \big\{ t \in \R^+ \:\big|\: \Sact\big( \rho^\xi(t) \big) = \inf_{\tau \in \R^+}
\Sact \big( \rho^\xi(\tau) \big) \big\} \:, \]
the action is strictly monotone decreasing up to~$t_{\max}$, i.e.\
\[ 
\Sact\big( \m^\xi(t_1), f^\xi(t_1) \big) > \Sact\big( \m^\xi(t_2), f^\xi(t_2) \big) \qquad
\qquad \text{for all~$0 \leq t_1 < t_2 \leq t_{\max}$}\:. \]
Moreover, the flow curve satisfies for all~$0 \leq t_1 < t_2 \leq t_{\max}$ the H\"older bound
\[ 
d \Big( \big( m^{\xi}(t_{1}), f^\xi(t_1) \big), \big( \m^{\xi}(t_{2}), f^\xi(t_2) \big) \Big)\leq \sqrt{2}\: \sqrt{t_{2}-t_{1}}\;
\sqrt{\Sact\big( \m^\xi(0), f^\xi(0) \big) } \:. \]

Finally, in the case~$\xi>0$, this curve satisfies the Lipschitz bound
\begin{align*}
d &\Big( \big( \m^{\xi}(t_1), f^\xi(t_1) \big), \big( \m^{\xi}(t_2), f^\xi(t_2) \big) \Big) \\
& \leq \frac{1}{\xi} \: \big( S(\rho^\xi\big( \m^\xi(t_1), f^\xi(t_1) \big) - S(\rho^\xi\big( \m^\xi(t_2), f^\xi(t_2) \big) \big) \:.
\end{align*}
\end{theorem}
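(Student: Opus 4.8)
The plan is to run the minimizing-movements scheme of Section~\ref{seccvp} verbatim, now on the metric space $\big({\mathcal{P}}(\K),d\big)$ with $d$ as in~\eqref{d2}, and to read off the three assertions exactly as in the proofs of Theorem~\ref{thm:main1} and Proposition~\ref{prop:LipRepara}. Concretely, given $(\m_0,f_0)\in{\mathcal{P}}(\K)$ and parameters $h>0$, $\xi\geq0$, one sets $(\m_0^{h,\xi},f_0^{h,\xi})=(\m_0,f_0)$ and defines $(\m_j^{h,\xi},f_j^{h,\xi})$ iteratively as a minimizer of the penalized action~\eqref{eq:varprinB'} with the base point $(\m_0,f_0)$ replaced by the previous iterate $(\m_{j-1}^{h,\xi},f_{j-1}^{h,\xi})$; existence is Lemma~\ref{lem:exist1cfs}. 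Minimality then yields, word for word, the discrete estimates corresponding to~\eqref{i0}--\eqref{i2}: the action is non-increasing along the iterates (strictly unless the pair is unchanged), $d\big((\m_j^{h,\xi},f_j^{h,\xi}),(\m_{j-1}^{h,\xi},f_{j-1}^{h,\xi})\big)\le\tfrac1\xi\big(\Sact(\m_{j-1}^{h,\xi},f_{j-1}^{h,\xi})-\Sact(\m_j^{h,\xi},f_j^{h,\xi})\big)$, and $d\big((\m_j^{h,\xi},f_j^{h,\xi}),(\m_{j-1}^{h,\xi},f_{j-1}^{h,\xi})\big)\le\sqrt{2h\big(\Sact(\m_{j-1}^{h,\xi},f_{j-1}^{h,\xi})-\Sact(\m_j^{h,\xi},f_j^{h,\xi})\big)}$.

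For the interpolation I would view ${\mathcal{P}}(\K)$ as the subset of $\meas_1(\K)\times\meas(\K)$ of pairs $(\m,|f|^q\m)$. Convex combinations preserve absolute continuity of the second component with respect to the first, so the piecewise-affine interpolant $\rho^{h,\xi}(t)$ of~\eqref{eq:curvedefineA} makes sense in ${\mathcal{P}}(\K)$, and the contraction inequality~\eqref{eq:wasserbound} of Lemma~\ref{lem:wasserbound} applies in each slot (trivially, as an identity, for the total-variation slot). Running the chain of inequalities in the proof of Lemma~\ref{lem:Holdreg} then gives $d\big(\rho^{h,\xi}(t_1),\rho^{h,\xi}(t_2)\big)\le\sqrt2\,\sqrt{t_2-t_1+h}\,\sqrt{\Sact(\m_0,f_0)}$. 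Here one uses that the causal Lagrangian~\eqref{Lagrange} is H\"older continuous with exponent $\alpha=1/(2n+1)$, so that the bound~\eqref{hoeles} (with Lemma~\ref{lem:weakconv} adapted to $\K\times\K$, as in the proof of Lemma~\ref{lem:exist1cfs}) is available.

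The limit curve is then extracted via the generalized Arzel\`a--Ascoli Lemma~\ref{lem:ArzAsc}, applied on each interval $[0,T]$ with $\tau$ the weak*-topology on $\meas_1(\K)\times\meas(\K)$ and $K=K_T:=\{(\mu,\nu)\colon\mu\in\meas_1(\K),\;\|\nu\|_{\meas(\K)}\le M_T\}$, where $M_T$ is a uniform bound on $\||f_j^{h,\xi}|^q\m_j^{h,\xi}\|_{\meas(\K)}$ for $jh\le T$; such a bound follows for $\xi>0$ by iterating the analog of~\eqref{i1} as in~\eqref{totvarest}, and for $\xi=0$ by summing the analog of~\eqref{i2} with Cauchy--Schwarz, just as in Lemma~\ref{lem:Holdreg}. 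The set $K_T$ is sequentially weak*-compact by Banach--Alaoglu, so Lemma~\ref{lem:ArzAsc} yields a $d$-continuous limit $t\mapsto(\m^\xi(t),\m^{(q),\xi}(t))$, and a diagonal argument over $T=1,2,\dots$ globalizes it to $[0,\infty)$. \emph{The step with no counterpart in the compact setting} --- and what I expect to be the main obstacle --- is that the image of ${\mathcal{P}}(\K)$ is a proper subset of $\meas_1(\K)\times\meas(\K)$, so one must still show that the limit curve lands in ${\mathcal{P}}(\K)$. For each fixed $t$ the discrete moment measures stay uniformly bounded (by~\eqref{m2es}, since $\Sact$ is non-increasing along the flow), and the H\"older inequality $\m_j^{(2)}(\Omega)\le\m_j(\Omega)^{(q-2)/q}\big(|f_j|^q\m_j\big)(\Omega)^{2/q}$ together with the Radon--Nikodym argument of Lemma~\ref{lem:exist1cfs}/Theorem~\ref{thmexist} recovers a density $f^\xi(t)\in L^2(\K,\m^\xi(t))$; the continuity statement~\eqref{Scont} of Lemma~\ref{lem:exist1cfs} is what ensures that this reconstruction is compatible with the limit, i.e.\ that $\m^{(q),\xi}(t)=|f^\xi(t)|^q\,\m^\xi(t)$ rather than merely $\ge$. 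Continuity of $t\mapsto(\m^\xi(t),f^\xi(t))$ for $d$ is then immediate, since $d$ only involves $\m^\xi(t)$ and $|f^\xi(t)|^q\m^\xi(t)=\m^{(q),\xi}(t)$, both already $d$-continuous.

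Finally, the strict monotonicity of $\Sact$ up to $t_{\max}$, the global H\"older bound, and (for $\xi>0$) the Lipschitz bound reparametrized by the action all descend from the discrete estimates by weak*-lower semicontinuity of $\Sact$ --- here Lemma~\ref{lem:exist1cfs} even furnishes continuity --- and of $d(\cdot,\cdot)$ in each slot, exactly as in the proofs of Theorem~\ref{thm:main1} and Proposition~\ref{prop:LipRepara}; no new idea enters there. One minor bookkeeping point: in the Wasserstein reading of the second slot of~\eqref{d2} the measure $|f|^q\m$ need not be a probability measure, so $W_p$ there should be understood in the equal-mass sense (or that slot restricted to the Fr\'echet metric); in either case Lemma~\ref{lem:wasserbound} and~\eqref{hoeles} apply after the evident normalization.
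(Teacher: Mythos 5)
Your proposal follows essentially the same route as the paper, which itself only remarks that Theorem~\ref{thm:main1} and Proposition~\ref{prop:LipRepara} ``extend in a straightforward way'' once one uses the H\"older continuity of the causal Lagrangian with exponent $\alpha=1/(2n+1)$ and the moment-measure machinery of Lemma~\ref{lem:exist1cfs}; your write-up simply makes those steps explicit (discrete estimates, interpolation in the pair $(\m,|f|^{q}\m)$, generalized Arzel\`a--Ascoli, Radon--Nikodym reconstruction of the density) along the same lines. The points you flag --- that the limit must be shown to land in ${\mathcal{P}}(\K)$ and the normalization in the Wasserstein slot --- are exactly the details the paper leaves implicit, and your handling of them via Lemma~\ref{lem:exist1cfs} is consistent with the paper's intent.
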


Following the procedure in Section~\ref{seclip}, in the case~$\xi>0$, we may reparametrize
using the action itself as the parameter~$s$. We denote the reparametrized curve again with
an additional tilde, i.e.\
\[ (\tilde{\m}^\xi, \tilde{f}^\xi) : \big( \Sact^\xi_{\min}, \Sact(\rho_0) \big] \rightarrow {\mathcal{P}}(\K) \:. \]
In analogy to Proposition~\ref{prop:LipRepara}, we have the following result.
\begin{proposition}\label{prop:LipReparacfs} The curve~$(\tilde{\m}^\xi, \tilde{f}^\xi)$ is
Lipschitz continuous in the sense that
\begin{align*}
d&\big( (\tilde{\m}^\xi(s_1), \tilde{f}^\xi)(s_1), (\tilde{\m}^\xi(s_2), \tilde{f}^\xi)(s_2) \big) \notag \\
&\leq \frac{1}{\xi} \: \big( s_2 - s_1 \big) \qquad
\text{for all~$\Sact^\xi_{\min} \leq s_1 < s_2 \leq \Sact(\varrho_0)$} \:. 
\end{align*}
Moreover, the limit~$(\tilde{\m}^\xi, \tilde{f})(\mathcal{S}_{\min}^{\xi}):=\mathrm{w}^{*}\text{-}\lim_{s\searrow \Sact_{\min}^{\xi}}\big(\tilde{\m}^\xi(s), \tilde{f}^\xi)(s)\big)$
exists in the sense of weak*-conver\-gence of measures.
\end{proposition}

\subsection{Limiting measures and Euler-Lagrange equations}
Theorems~\ref{thmlimit1} and~\ref{thmlimit2} extend in a straightforward way to causal fermion systems.
Since the assumptions in Theorem~\ref{thmlimit1} are strong and seem difficult to verify in the
applications, we only state the analog of Theorem~\ref{thmlimit2}.
%

\begin{theorem} \label{thmlimit2cfs} In the case~$\xi>0$, for any~$q>0$
the curve~$(\tilde{\m}^\xi(s), \tilde{f}^\xi(s))$ converges with respect to the distance function~\eqref{d2} as~$s \searrow \Sact^\xi_{\min}$.
In the case of penalization by the Wasserstein distance~$W_p$
(i.e.\ in Case~2.\ in~\eqref{dspecify}), the limiting measure
\[ (\m^\xi_\infty, f^\xi_\infty) := \lim_{s \searrow \Sact^\xi_{\min}} (\tilde{\m}^\xi(s), \tilde{f}^\xi(s)) \]
satisfies the EL equations approximately, in the sense that the function~$\ell_\xi$ defined by
\[ \ell_\xi(x) := \int_\F \L(x,y)\: \dif \rho(y) + \frac{\xi}{2} \: d \big( (\delta_{z}, \lambda),
(\m^\xi_\infty, f^\xi_\infty) \big) \]
is minimal on the support of~$\rho$, i.e.\
\[ 
\ell_\xi|_N \equiv \inf_\F \ell_\xi \]
with~$N:= \supp \rho$ and~$\rho$ defined similar to~\eqref{rholimit}
by $\rho:= \tilde{F}_* \tilde{m}^\xi$ and~$\tilde{F}(x) := \tilde{f}(x)\, x$.
\end{theorem}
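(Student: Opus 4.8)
The plan is to transport the arguments of Section~\ref{seclimit} to the moment-measure formalism, replacing measures on~$\F$ throughout by pairs $(\m,f)\in\mathcal{P}(\K)$ and the metric on $\meas_1(\F)$ by the composite distance~$d$ of~\eqref{d2}. For the convergence statement I would argue exactly as in the first line of the proof of Theorem~\ref{thmlimit2}: by Proposition~\ref{prop:LipReparacfs} the reparametrized curve $s\mapsto(\tilde{\m}^\xi(s),\tilde{f}^\xi(s))$ is Lipschitz with constant~$1/\xi$ on the \emph{bounded} interval $(\Sact^\xi_{\min},\Sact(\rho_0)]$, hence has finite $d$-length and is $d$-Cauchy as $s\searrow\Sact^\xi_{\min}$. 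The only point to be added is that the weak*-limit of a $d$-Cauchy sequence of admissible pairs again lies in $\mathcal{P}(\K)$ (and, for $d=W_p$, realises~$d$ in the limit); this is precisely the absolute-continuity/Radon--Nikodym step already carried out in the proof of Lemma~\ref{lem:exist1cfs}, where the hypothesis $q>2$ is used to extract the $L^2$-representation $\m^{(2)}=f^2\m$ from the uniform $L^q$-bound by H\"older's inequality. This yields the limit $(\m^\xi_\infty,f^\xi_\infty)$.

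For the approximate Euler--Lagrange equations I would first prove the analogue of Lemma~\ref{lemmaELlim} at the discrete level. Fix $h,\xi>0$, let $(\m^{h,\xi}_j,f^{h,\xi}_j)$ be the minimising iterates of~\eqref{eq:varprinB'} and $\rho^{h,\xi}_j$ the corresponding measures on~$\F$ obtained through~\eqref{rhopush}. Given a point~$z$ in the trace-one slice of~$\F$, the convex variation $\rho\mapsto(1-\tau)\rho^{h,\xi}_j+\tau\,\delta_z$ lifts, via the push-forward correspondence $\rho=F_*\m$, to an admissible variation of $(\m^{h,\xi}_j,f^{h,\xi}_j)$ along which both the zeroth and the $q$-th moment measures are the corresponding convex combinations, so that each of the two summands of~$d$ in~\eqref{d2} is estimated by~\eqref{eq:wasserbound} of Lemma~\ref{lem:wasserbound} (respectively by the triangle inequality for the Fr\'echet metric). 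Feeding this into the minimality inequality for $(\m^{h,\xi}_j,f^{h,\xi}_j)$, dividing by~$\tau$, using $d(\rho^{h,\xi}_j,\rho^{h,\xi}_{j-1})\to0$ as $j\to\infty$ (iterating~\eqref{i1}) together with the weak*-continuity of~$\Sact$ from Lemma~\ref{lem:exist1cfs}, and then letting first $j\to\infty$ and afterwards $\tau\searrow0$ as in the computation around~\eqref{eq:Vbound3}, I obtain for every weak*-accumulation point $(\m^{h,\xi}_\infty,f^{h,\xi}_\infty)$ of the iterates and every admissible~$z$
\[
\iint_{\F\times\F}\L\,\dif\rho^{h,\xi}_\infty\,\dif\rho^{h,\xi}_\infty\;\leq\;\int_{\F}\L(x,z)\,\dif\rho^{h,\xi}_\infty(x)\;+\;\frac{\xi}{2}\,d\big((\delta_z,\lambda_z),(\m^{h,\xi}_\infty,f^{h,\xi}_\infty)\big),
\]
where $\lambda_z$ is the scaling making $(\delta_z,\lambda_z)\in\mathcal{P}(\K)$ and $\rho^{h,\xi}_\infty=F_*\m^{h,\xi}_\infty$.

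Next I would pass to $h\searrow0$ along the diagonal sequence produced in Proposition~\ref{prop:LipReparacfs}, whose end-points converge weak* to $(\m^\xi_\infty,f^\xi_\infty)$ (the analogue of~\eqref{basicconv}); this is the step that forces the restriction to Case~2 in~\eqref{dspecify}, since for $d=W_p$ the distance~$d$ --- and hence the penalization term above --- is continuous, not merely lower semicontinuous, under this weak*-convergence. Specialising~$z$ to a point realising $\inf_{y\in\F}\int_{\F}\L(x,y)\,\dif\rho(x)$ (whose existence, or approximate existence, follows as in Lemma~\ref{lem:verticaldown}, now using the coercivity of $\ell$ built into the $\kappa$-Lagrangian since~$\F$ is here non-compact) and rewriting $\iint\L\,\dif\rho\,\dif\rho=\int_{\F}\ell\,\dif\rho$, the displayed inequality passes to the limit and, exactly as in the proof of Theorem~\ref{thmlimit2}, yields the asserted approximate Euler--Lagrange equation $\ell_\xi|_N\equiv\inf_{\F}\ell_\xi$ on $N=\supp\rho$, with $\rho=F_*\m^\xi_\infty$ and $\ell_\xi$ the function of the statement built from the limiting pair $(\m^\xi_\infty,f^\xi_\infty)$.

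The main obstacle will be the bookkeeping of the composite distance~\eqref{d2} under the Dirac-type variations: one has to verify at once that such a variation stays inside the trace-constrained subclass of $\mathcal{P}(\K)$ and that it turns \emph{both} the $\m$-term and the $|f|^q\m$-term into convex combinations amenable to Lemma~\ref{lem:wasserbound} --- and, for $d=W_p$, this additionally entails keeping track of the total masses of the second-moment measures, which are not equal to one. The complementary subtlety is to guarantee that in the limit $\Sact$ and $d(\cdot,\cdot)$ are genuinely continuous rather than only lower semicontinuous along the relevant weak*-convergent sequences, which is exactly what the choice $q>2$ and the use of the Wasserstein metric are designed to secure; everything else is a routine transcription of Lemmas~\ref{lemmaELlim} and~\ref{lem:verticaldown} and of the proofs of Theorems~\ref{thmlimit1} and~\ref{thmlimit2}.
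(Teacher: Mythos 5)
Your proposal is correct and follows essentially the same route as the paper, whose own proof simply says to repeat the arguments of Section~\ref{seclimit} with measures replaced by pairs in~${\mathcal{P}}(\K)$, obtaining the limit as in Proposition~\ref{prop:LipRepara} (via the Lipschitz bound and the $q>2$ H\"older/Radon--Nikodym step of Lemma~\ref{lem:exist1cfs}) and the approximate EL equations exactly as in Lemma~\ref{lemmaELlim}. Your additional remarks on the composite distance~\eqref{d2}, the trace-constrained Dirac variations, and the role of the Wasserstein case for continuity are precisely the details the paper leaves implicit, so no divergence in method.
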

\begin{proof}
We again proceed as in Section~\ref{seclimit}, always with the measures in~$\meas_1(\F)$
replaced by pairs in~${\mathcal{P}}(\K)$ (see~\eqref{PKdef}).
The existence of the limit measure follows as in Proposition~\ref{prop:LipRepara}.
The EL equation are obtained exactly as in Lemma~\ref{lem:verticaldown}.
\end{proof}
We finally point out that the last proof of convergence no longer applies if~$\xi=0$.
This is the reason why in Theorem~\ref{thmlimit1} we had to {\em{assume}} that
the curve~$(\m^0(t), f^0(t))$ converges. Similar as explained by the example in Section~\ref{secex},
in the case~$\xi=0$ we cannot expect convergence of the curve.

\section{Application and outlook: A flow in the infinite-dimensional case} \label{secoutlook}
In order to exemplify possible applications of the constructed flows, we will now show
how the Lipschitz continuous flow constructed in Proposition~\ref{prop:LipReparacfs}
can be used in order to construct a corresponding flow in the {\em{infinite-dimensional
setting}}. The general idea is to append the flows in finite-dimensional subspaces of
the Hilbert space for increasing dimension.

For the detailed construction, we 
assume that the Hilbert space~$\H$ in Definition~\ref{defcfs} is separable
but~$\dim \H=\infty$. We consider a filtration by finite-dimensional subspaces, i.e.\
\beq \label{filtration}
\H_1 \subset \H_2 \subset \cdots \subset \H \quad \text{with} \quad \dim \H_p = p \qquad \text{and} \qquad \H = 
\overline{\bigcup_{p=1}^\infty \H_p \!\!\!}^{\la .|. \ra_\H} \:.
\eeq
Extending the operators by zero, we obtain corresponding inclusions~$\F_{1}\subset\F_{2}\subset ...\subset\F$ with
\begin{align*}
\meas_1(\F_{1}) \stackrel{\iota_{1}}{\hookrightarrow} \meas_1(\F_{2}) \stackrel{\iota_{2}}{\hookrightarrow}...
\end{align*}
for suitable embedding maps~$\iota_{j}$, $j\in\mathbb{N}$. 

Given a parameter~$\xi>0$ and a starting point~$(\mathfrak{m}_0, f_0) \in {\mathcal{P}}(\K)$, we consider the reparametrized flow from Proposition~\ref{prop:LipReparacfs}
in~$\F_1$. It has a limit point, i.e.\
\[ \lim_{s \searrow \Sact^\xi_{{\min}, 1}} (\tilde{\mathfrak{m}}^\xi, \tilde{f}^\xi)(s) =
(\tilde{\mathfrak{m}}_0, \tilde{f}_0) \in {\mathcal{P}}(\K) \:. \]
Using the above embeddings, we can consider this limiting measure as being in~${\mathcal{P}}(\K)$. Taking this measure as the new starting point, we consider the
repara\-me\-trized flow from Proposition~\ref{prop:LipReparacfs} in~$\F_2$. Proceeding in this way inductively,
we obtain a Lipschitz continuous curve in~$\F$. The action is strictly decreasing along the flow curve.

We note that the above method can be refined in various ways. One extension which
seems useful is not to choose~$\xi>0$ as a constant, but to consider instead a monotone decreasing sequence~$(\xi_p)_{p \in \N}$ which converges to zero as the dimension~$p$ of the Hilbert space tends to infinity. Similarly one can also adjust the parameter~$\kappa$
in~\eqref{Lagrange} when increasing the dimension.
The detailed construction remains to be worked out.

We finally remark that this procedure is inspired by and bears some resemblance with
 {\em{renormalization flow techniques}} used in quantum
field theory. In order to explain the connection, we note that ultraviolet regularizations are 
often realized by a cutoff in momentum space which (at least for systems in finite spatial
volume) corresponds to restricting attention to finite-dimensional subspaces of
the underlying Hilbert space. Removing the cutoff corresponds to the limit when the
dimensions of the subspaces tend to infinity. In the renormalization program, one
studies this limit while carefully adjusting the masses and coupling constants
in the physical action. Our analysis is similar because we study minimizers of the
causal action for a filtration~\eqref{filtration} while adjusting the parameters~$\xi$
and~$\kappa$.

\Thanks{{{\em{Acknowledgments:}}
We would like to thank the referees for the careful reading and many
useful suggestions. F.G.\ would like to thank the Hector Foundation for support.


\bibliographystyle{amsplain}
\providecommand{\bysame}{\leavevmode\hbox to3em{\hrulefill}\thinspace}
\providecommand{\MR}{\relax\ifhmode\unskip\space\fi MR }
\providecommand{\MRhref}[2]{%
  \href{http://www.ams.org/mathscinet-getitem?mr=#1}{#2}
}
\providecommand{\href}[2]{#2}

\end{document}